\documentclass[a4paper,11pt,fleqn]{article} 

\usepackage{amsmath,amssymb,amsthm}

\flushbottom
\allowdisplaybreaks

\setlength{\textwidth}{160.0mm}
\setlength{\textheight}{245.0mm}
\setlength{\oddsidemargin}{0mm}
\setlength{\evensidemargin}{0mm}
\setlength{\topmargin}{-15mm} 
\setlength{\parindent}{5.0mm}

\bibliographystyle{rop}

\makeatletter
\long\def\@makecaption#1#2{%
  \vskip\abovecaptionskip\footnotesize
  \sbox\@tempboxa{#1. #2}%
  \ifdim \wd\@tempboxa >\hsize
    #1. #2\par
  \else
    \global \@minipagefalse
    \hb@xt@\hsize{\hfil\box\@tempboxa\hfil}%
  \fi
  \vskip\belowcaptionskip}
\makeatother

\marginparwidth=17mm \marginparsep=1mm \marginparpush=4mm
\newcommand{\todo}[1][\null]{\ensuremath{\clubsuit}}

\newcommand{\noprint}[1]{}
\newcommand{\checked}[1][\null]{\ensuremath{\boldsymbol{\surd}}}

\newcommand{\p}{\partial}
\newcommand{\const}{\mathop{\rm const}\nolimits}
\newcommand{\sgn}{\mathop{\rm sgn}\nolimits}
\newcommand{\spanindex}{{\mbox{\tiny$\langle\,\rangle$}}}
\newcommand{\EqOrd}{r}

\newtheorem{theorem}{Theorem}
\newtheorem{lemma}[theorem]{Lemma}
\newtheorem{corollary}[theorem]{Corollary}
\newtheorem{proposition}[theorem]{Proposition}
\newtheorem*{problem*}{Problem}
{\theoremstyle{definition}
\newtheorem{definition}[theorem]{Definition}

\newtheorem{remark}[theorem]{Remark}
\newtheorem*{remark*}{Remark}
}

\begin{document}

\par\noindent {\LARGE\bf
Group analysis of general Burgers--Korteweg--de Vries equations
\par}

{\vspace{4mm}\par\noindent {\large Stanislav Opanasenko$^{\dag\ddag}$, Alexander Bihlo$^\dag$ and Roman O.\ Popovych$^{\ddag\S}$,
} \par\vspace{2mm}\par}

{\vspace{2mm}\par\noindent {\it
$^{\dag}$~Department of Mathematics and Statistics, Memorial University of Newfoundland,\\
$\phantom{^{\dag}}$~St.\ John's (NL) A1C 5S7, Canada\par
}}
{\vspace{2mm}\par\noindent {\it
$^\ddag$~Institute of Mathematics of NAS of Ukraine, 3 Tereshchenkivska Str., 01004 Kyiv, Ukraine\par
}}
{\vspace{2mm}\par\noindent {\it
$^{\S}$~Wolfgang Pauli Institut, Oskar-Morgenstern-Platz 1, 1090 Wien, Austria\\
$\phantom{^{\S}}$~Mathematical Institute, Silesian University in Opava, Na Rybn\'\i{}\v{c}ku 1, 746 01 Opava, Czechia
}}

{\vspace{2mm}\par\noindent {\it
\textup{E-mail:} sopanasenko@mun.ca, abihlo@mun.ca, rop@imath.kiev.ua
}\par}

\vspace{8mm}\par\noindent\hspace*{10mm}\parbox{140mm}{\small
The complete group classification problem for the class 
of $(1{+}1)$-dimensional $\EqOrd$th order general variable-coefficient Burgers--Korteweg--de Vries equations
is solved for arbitrary values of $\EqOrd$ greater than or equal to two.
We find the equivalence groupoids of this class and 
its various subclasses obtained by gauging equation coefficients with equivalence transformations. 
Showing that this class and certain gauged subclasses are normalized in the usual sense, 
we reduce the complete group classification problem for the entire class 
to that for the selected maximally gauged subclass, 
and it is the latter problem that is solved efficiently using the algebraic method of group classification. 
Similar studies are carried out for the two subclasses of equations with coefficients 
depending at most on the time or space variable, respectively. 
Applying an original technique, we classify Lie reductions of equations from the class under consideration 
with respect to its equivalence group. 
Studying alternative gauges for equation coefficients with equivalence transformations 
allows us not only to justify the choice of the most appropriate gauge for the group classification 
but also to construct for the first time classes of differential equations 
with nontrivial generalized equivalence group 
such that equivalence-transformation components corresponding to equation variables 
locally depend on nonconstant arbitrary elements of the class. 
For the subclass of equations with coefficients depending at most on the time variable, 
which is normalized in the extended generalized sense, 
we explicitly construct its extended generalized equivalence group in a rigorous way. 
The new notion of effective generalized equivalence group is introduced.
}\par\vspace{4mm}

\noprint{
Keywords: Group classification of differential equations, Lie symmetry, Point transformation, Equivalence group, 
          Equivalence groupoid, General Burgers-Korteweg-de Vries equations, Lie reduction, Equivalence algebra, 
}


\section{Introduction}

The development of new powerful tools and techniques of group analysis of differential equations
during the last decade has essentially extended the range of effectively solvable problems 
of this branch of mathematics. 
In particular, it became possible to study admissible transformations and Lie symmetries for 
systems of differential equations from complex classes parameterized by a few functions of several arguments.    
 
A number of evolution equations that are important in mathematical physics are of the general form
\begin{equation}\label{eq:GenBurgersKdVEqs}
 u_t+C(t,x)uu_x=\sum_{k=0}^\EqOrd A^k(t,x)u_k+B(t,x).
\end{equation}
Here and in the following the integer parameter~$\EqOrd$ is fixed, and $\EqOrd\geqslant2$.
We require the condition $CA^\EqOrd\ne0$ guaranteeing 
that equations from the class~\eqref{eq:GenBurgersKdVEqs} are nonlinear and of genuine order~$\EqOrd$. 
Throughout the paper we use the standard index derivative notation $u_t=\partial u/\partial t$, $u_k=\partial^k u/\partial x^k$, 
and also $u_0=u$, $u_x=u_1$, $u_{xx}=u_2$ and $u_{xxx}=u_3$. 

The class~\eqref{eq:GenBurgersKdVEqs} contains a number of the prominent classical models of fluid mechanics, 
including
\[
\begin{array}{ll}
  u_t+uu_x=a_2u_{xx}                      &  \mbox{Burgers equation},                  \\[.5ex]
  u_t+uu_x=a_3u_{xxx}                     &  \mbox{Korteweg--de Vries (KdV) equation}, \\[.5ex]
  u_t+uu_x=a_4u_4+a_3u_3+a_2u_2\qquad     &  \mbox{Kuramoto--Sivashinsky equation},    \\[.5ex]
  u_t+uu_x=a_5u_5+a_3u_3                  &  \mbox{Kawahara equation},                 \\[.5ex]
  u_t+uu_x=a_\EqOrd u_\EqOrd              &  \mbox{generalized Burgers--KdV equation}, 
\end{array}
\]
where $a$'s are constants, and the coefficient of the highest-order derivative is nonzero.

Due to the importance of equations from the class~\eqref{eq:GenBurgersKdVEqs}, 
there exist already a lot of papers in which particular equations of the form~\eqref{eq:GenBurgersKdVEqs} 
have been considered in light of their symmetries, integrability, exact solutions, etc. 
Here we review only papers on admissible transformations and group classification 
of classes related to the class~\eqref{eq:GenBurgersKdVEqs}.

Particular subclasses of the class~\eqref{eq:GenBurgersKdVEqs} with small values of~$\EqOrd$
were the subject of a number of papers published over the past twenty-five years. 
Thus, the equivalence groupoids of the class of variable-coefficient Burgers equations of the form $u_t+uu_x+f(t,x)u_{xx}=0$ with $f\ne0$ 
and of its subclass of equations with $f=f(t)$ were constructed by Kingston and Sophocleous in~\cite{king91Ay} 
as sets of point transformations in pairs of equations.
In fact, it was implicitly proved there that these classes are normalized in the usual sense. 
Later such transformations were called allowed~\cite{gaze92a,wint92Ay} or form-preserving~\cite{king98Ay},  
and they had preceded the notion of admissible transformations, 
which is of central importance in group analysis of classes of differential equations. 
Solving of the group classification problem for the above subclass with $f=f(t)$ was initiated in~\cite{doyl90a,wafo04a} 
and completed in~\cite{poch14a,vane15a}. 
Most recently, an extended symmetry analysis of the class with $f=f(t,x)$ considered in~\cite{king91Ay} 
was comprehensively carried out in~\cite{poch16a}. 
The partial preliminary group classification problem for the related class of Burgers equations with sources, 
which are of the form $u_t+uu_x=u_{xx}+f(t,x,u)$, with respect to the maximal Lie symmetry group of the Burgers equation 
was considered in~\cite{long17a}.

In~\cite{gaze92a,wint92Ay} allowed transformations were computed for 
the class of variable-coefficient KdV equations of the form $u_t+f(t,x)uu_x+g(t,x)u_{xxx}=0$ with $fg\ne0$
and were then used in~\cite{gaze92a} to carry out the group classification of this class;  
see~\cite{vane16a} for a modern interpretation of these results. 
An attempt to reproduce these results for the class of variable-coefficient Burgers equations of the form $u_t+f(t,x)uu_x+g(t,x)u_{xx}=0$ with $fg\ne0$
was made in~\cite{qu95a} supposing 
that admissible transformations of this class are similar to admissible transformations of its third-order counterpart
but in fact the structure of the corresponding equivalence groupoid is totally different from that in~\cite{gaze92a}. 
Transformational properties of the class of variable-coefficient KdV equations of the form
\[u_t+f(t)uu_x+g(t)u_{xxx}+(q(t)x+p(t))u_x+h(t)u+k(t)x+l(t)=0, \quad fg\ne0,\]
were studied in~\cite{popo10By}; see also~\cite{vane13a}. 
This class coincides with the class~$\mathcal K_1$  (with the particular value $\EqOrd=3$)
arising in Section~\ref{sec:GenBurgersKdVEqsWithTime-DependentCoeffs} of the present paper. 
The class~$\mathcal K_1^{\EqOrd=3}$ was proved to be normalized in the usual sense 
and mapped by a family of its equivalence transformations, 
e.g., to its subclass of equations of the form $u_t+uu_x+g(t)u_{xxx}=0$  with $g\ne0$, 
which is also normalized in the usual sense, 
and solving the group classification problem in the class~$\mathcal K_1^{\EqOrd=3}$ 
is equivalent to that in the subclass; 
cf.\ Proposition~\ref{pro:GroupClassificationsOfGenBurgersKdVEqsAndGaugedGenBurgersKdVEqs} 
and Section~\ref{sec:GenBurgersKdVEqsWithTime-DependentCoeffs}. 
Variable-coefficient generalizations of the Kawahara equation 
were studied in a similar way in~\cite{kuri14b,kuri14a}. 
The group classification of Galilei-invariant equations of the form $u_t+uu_x=F(u_\EqOrd)$ 
was carried out in~\cite{boyk01a,fush1996b}.

Unfortunately, the results obtained in many papers were incomplete or even faulty; 
for a partial listing of papers with such results on variable-coefficient Korteweg--de Vries equations, we refer to~\cite{popo10By}. 
It is thus appropriate to solve the group classification problem for the general class~\eqref{eq:GenBurgersKdVEqs} systematically for the first time.

More general classes of evolutions equations, which include the class~\eqref{eq:GenBurgersKdVEqs}, 
were also considered in the literature. 
Contact symmetries of $(1{+}1)$-dimensional evolution equations were studied by Magadeev~\cite{maga93Ay}.
More specifically, Magadeev proved that a $(1{+}1)$-dimensional evolution equation admits 
an infinite-dimensional Lie algebra of contact symmetries if and only if 
it is linearizable by a contact transformation. 
He also classified, up to contact equivalence transformations of evolution equations, 
realizations of finite-dimensional Lie algebras by contact vector fields 
with the independent variables $(t,x)$ and the dependent variable~$u$
each of which is the maximal contact invariance algebras of a non-linearizable evolution equation.
At the same time, evolution equations admitting these realizations 
as maximal contact invariance algebra were not constructed. 
Moreover, singling out the classification of contact symmetries of equations 
from a particular subclass of the entire class of $(1{+}1)$-dimensional evolution equations 
is in general a more challenging problem than the direct classification 
of contact symmetries for equations from the subclass.   
A claim similar to the above one on results of~\cite{maga93Ay} 
is also true for classifications of low-order evolution equations (with $\EqOrd=2,3,4$)
considered, e.g., in~\cite{basa01Ay,gung04a,huan09a,zhda99Ay}.

In the present paper we solve the complete group classification problem of the class~\eqref{eq:GenBurgersKdVEqs} 
for any fixed value of~$\EqOrd\geqslant2$. 
An even more important result of the paper is 
the construction, in the course of the study of admissible transformations 
within subclasses of the class~\eqref{eq:GenBurgersKdVEqs}, 
of several examples of classes with specific properties, 
the existence of which was in doubt for a long time. 
These examples give us unexpected insights into the general theory of equivalence transformations 
within classes of differential equations. 
In particular, they justify introducing the notion of effective generalized equivalence group.  

We also point out that the class of linear (in general, variable-coefficient) evolution equations 
is obtained from~\eqref{eq:GenBurgersKdVEqs} by setting $C=0$. 
This class has transformational properties different from those of the class~\eqref{eq:GenBurgersKdVEqs}, 
and equations with $C=0$ and $C\ne0$ are not related to each other by point or contact transformations. 
Moreover, the class of linear equations was completely classified 
in~\cite{lie81Ay} and~\cite[pp.~114--118]{ovsi82Ay} for $\EqOrd=2$ and in~\cite{bihl16a} for $\EqOrd>2$. 
(The last paper enhanced and extended results of~\cite[Section~III]{gung04a} on $\EqOrd=3$ and of~\cite{huan12a} on $\EqOrd=4$.)
This is why it is appropriate to exclude linear equations from the present consideration.

The further structure of this paper is the following. 
In Section~\ref{sec:AlgebraicGroupClassificationMethodGenBurgersKdVEqs} 
we present sufficient background information 
on point transformations in classes of differential equations and 
on the algebraic method of group classification. 
Since the class~\eqref{eq:GenBurgersKdVEqs} is normalized in the usual sense, 
particular emphasis is given to describing the concept of normalization of a class of differential equations. 
The notion of effective generalized equivalence group is introduced there for the first time. 
Using known results on admissible transformations of the entire class of $r$th order evolution equations and its wide subclasses, 
in Section~\ref{sec:EquivalenceGroupoidGenBurgersKdVEqs} we compute the equivalence groupoids for the class~\eqref{eq:GenBurgersKdVEqs} 
and its two subclasses singled out by the arbitrary-element gauges $C=1$ and $(C,A^1)=(1,0)$, 
which are realized by families of equivalence transformations. 
Due to each of the above classes being normalized in the usual sense, its equivalence groupoid 
coincides with the subgroupoid induced by the usual equivalence group of this class. 
The solution of the group classification problem for the class~\eqref{eq:GenBurgersKdVEqs} is shown 
to reduce to that for its gauged subclass associated with the constraint $(C,A^1)=(1,0)$ 
and referred to in the paper as the class~\eqref{eq:GenBurgersKdVEqsGaugedSubclass}.
Moreover, this subclass turns out to be the most convenient for carrying out the group classification 
since it is normalized in the usual sense and maximally gauged. 
The consideration of alternative arbitrary-element gauges in Section~\ref{sec:AlternativeGauges} 
additionally justifies the selection of the subclass with $(C,A^1)=(1,0)$ for group classification. 
This also allows us to construct for the first time examples of classes with nontrivial generalized equivalence groups, 
where transformation components for variables locally depend on nonconstant arbitrary elements. 
The determining equations for Lie symmetries of equations from the class~\eqref{eq:GenBurgersKdVEqs} 
are derived and preliminarily studied in Section~\ref{sec:DetEqsLieSymmetriesGenBurgersKdVEqs}. 
These results are used in Section~\ref{sec:PropertiesOfAppropriateSubalgebrasGenBurgersKdVEqs} 
for analyzing properties of appropriate (for group classification) subalgebras 
of the projection of the equivalence algebra of the class~\eqref{eq:GenBurgersKdVEqsGaugedSubclass}
to the space of equation variables.  
The group classification of the gauged subclass~\eqref{eq:GenBurgersKdVEqsGaugedSubclass} 
and thus the entire class~\eqref{eq:GenBurgersKdVEqs}
is completed in Section~\ref{sec:GroupClassificationGenBurgersKdVEqs} 
via classifying the appropriate subalgebras and finding the corresponding values of the arbitrary-element tuple. 
In Section~\ref{sec:AlternativeClassificationCases}, we discuss the optimality of chosen inequivalent representatives for cases of Lie symmetry extensions.
The subclass~$\mathcal K_3$ of equations from  the class~\eqref{eq:GenBurgersKdVEqs} with coefficients depending at most on~$t$ is 
the object of the study in Sections~\ref{sec:GenBurgersKdVEqsWithTime-DependentCoeffs}. 
We exhaustively describe the equivalence groupoid of this subclass, gauge its arbitrary elements by equivalence transformations 
and single out a complete list of inequivalent Lie symmetry extensions within this subclass from that for the class~\eqref{eq:GenBurgersKdVEqs}. 
In fact, we begin with the wider subclass~$\mathcal K_1$, where $A^1$ and~$B$ may affinely depend on~$x$, 
since this subclass is normalized in the usual sense with respect to its nice usual equivalence group 
whereas the subclass~$\mathcal K_2$, where only~$B$ may affinely depend on~$x$, and the subclass~$\mathcal K_3$ 
give new nontrivial examples of classes normalized only in the generalized sense and only in the extended generalized sense, respectively. 
Similar results for the subclass~$\mathcal F_1$ of equations 
from  the class~\eqref{eq:GenBurgersKdVEqs} with coefficients depending at most on~$x$ 
are obtained in Section~\ref{sec:GenBurgersKdVEqsWithSpace-DependentCoeffs}. 
Since this subclass is not normalized in any appropriate sense, 
in order to describe its equivalence groupoid~$\mathcal G^\sim_{\mathcal F_1}$, 
it is necessary to solve a complicated classification problem
for values of the arbitrary-element tuple that admit nontrivial admissible transformations, 
which are not generated by transformations from the corresponding usual equivalence group. 
It turns out that the equivalence groupoid~$\mathcal G^\sim_{\mathcal F_1}$ 
has an interesting structure related to Lie symmetry extensions in the subclass~$\mathcal F_1$ 
although the subclass~$\mathcal F_1$ is far from even being semi-normalized.
Lie reductions of equations from the class~\eqref{eq:GenBurgersKdVEqs} are classified in Section~\ref{sec:LieReductionsGenBurgersKdVEqs}
with respect to the usual equivalence group of this class using the fact that it is normalized in the usual sense. 
Other possibilities for finding exact solutions of these equations are also discussed. 
The conclusions of the paper are presented in the final Section~\ref{sec:ConclusionGenBurgersKdVEqs}.
\looseness=-1

\section{Algebraic method of group classification}\label{sec:AlgebraicGroupClassificationMethodGenBurgersKdVEqs}

Basic notions and results underlying
the algebraic method of group classification in its modern advanced form as will be used below 
were extensively discussed in~\cite{bihl11Dy,bihl16a,kuru16a,popo06Ay,popo10Cy,popo10Ay}, 
to which we refer for further details. 
Examples of applying various versions of the algebraic method to particular group classification problems 
can be found in~\cite{basa01Ay,gagn93a,gaze92a,gung04a,huan12a,maga93Ay,mkhi15a,zhda99Ay}. 
In this section we not only review 
the needed part of the known theory on symmetry analysis in classes of differential equations 
but also present some new notions and results of this field. 

Let $\mathcal L_\theta$ denote a system of differential equations of the form $L(x,u^{(\EqOrd)},\theta(x,u^{(\EqOrd)}))=0$. 
Here, $x=(x_1,\dots,x_n)$ are the~$n$ independent variables, $u=(u^1,\dots,u^m)$ are the~$m$ dependent variables, and~$L$ is a tuple of differential functions in~$u$. 
We use the standard short-hand notation~$u^{(\EqOrd)}$ to denote the tuple of derivatives of~$u$ with respect to $x$ up to order~$\EqOrd$, 
which also includes the~$u$'s as the derivatives of order zero. 
The system~$\mathcal L_\theta$ is parameterized by the tuple of functions~$\theta=(\theta^1(x,u^{(\EqOrd)}),\dots,\theta^k(x,u^{(\EqOrd)}))$, called the arbitrary elements, 
which runs through the solution set~$\mathcal S$ of an auxiliary system 
of differential equations and inequalities in~$\theta$, 
$S(x,u^{(\EqOrd)},\theta^{(q)}(x,u^{(\EqOrd)}))=0$ and, e.g., $\Sigma(x,u^{(\EqOrd)},\theta^{(q)}(x,u^{(\EqOrd)}))\ne0$. 
Here, the notation $\theta^{(q)}$ encompasses the partial derivatives of the arbitrary elements~$\theta$ up to order $q$ with respect to both~$x$ and~$u^{(\EqOrd)}$. 
Thus, the \textit{class of (systems of) differential equations} $\mathcal L|_{\mathcal S}$ is the parameterized family of systems~$\mathcal L_\theta$'s,
such that~$\theta$ lies in~$\mathcal S$.

For the specific class of general Burgers--KdV equations~\eqref{eq:GenBurgersKdVEqs} considered below, 
we have $n=2$, $m=1$, and, in the more traditional notation, $x_1=t$ and $x_2=x$.
The tuple of arbitrary elements is $\theta=(A^0,\dots,A^\EqOrd,B,C)$, 
which runs through the solution set of the auxiliary system 
\[
 A^k_{u_\alpha}=0,\quad k=0,\dots,\EqOrd,\quad B_{u_\alpha}=0,\quad C_{u_\alpha}=0,\quad |\alpha|\leqslant \EqOrd,\quad CA^\EqOrd\ne0,
\]
where $\alpha=(\alpha_1,\alpha_2)$ is a multi-index, $\alpha_1,\alpha_2\in\mathbb N\cup\{0\}$, $|\alpha|=\alpha_1+\alpha_2$, and
$u_{\alpha}=\partial^{|\alpha|}u/\partial t^{\alpha_1}\partial x^{\alpha_2}$. 
Satisfying the auxiliary differential equations is equivalent to the fact 
that the arbitrary elements do not depend on derivatives of~$u$. 
The inequality $A^\EqOrd C\ne0$ ensures 
that equations from the class~\eqref{eq:GenBurgersKdVEqs} are both nonlinear and of order~$\EqOrd$.

Group classification of differential equations is based on studying how systems from a given class are mapped to each other. 
This study is formalized in the notion of \textit{admissible transformations}, 
which constitute the \textit{equivalence groupoid} of the class~$\mathcal L|_{\mathcal S}$. 
An admissible transformation is a triple $(\theta,\tilde\theta,\varphi)$, 
where $\theta,\tilde\theta\in\mathcal S$ are arbitrary-element tuples 
associated with systems $\mathcal L_\theta$ and~$\mathcal L_{\tilde\theta}$ from the class~$\mathcal L_{\mathcal S}$ 
that are similar to each other, 
and $\varphi$ is a point transformation in the space of~$(x,u)$ that maps~$\mathcal L_\theta$ to~$\mathcal L_{\tilde \theta}$.

A related notion of relevance in the group classification of differential equations is that of \textit{equivalence transformations}. 
Usual equivalence transformations are point transformations 
in the joint space of independent variables, derivatives of~$u$ up to order~$\EqOrd$ and arbitrary elements 
that are projectable to the space of~$(x,u^{(\EqOrd')})$ for each $\EqOrd'=0,\dots,\EqOrd$, 
respect the contact structure of the $\EqOrd$th order jet space coordinatized by the $\EqOrd$-jets $(x,u^{(\EqOrd)})$ 
and map every system from the class~$\mathcal L|_{\mathcal S}$ to a system from the same class. 
The Lie (pseudo)group constituted by the equivalence transformations of~$\mathcal L|_{\mathcal S}$ 
is called the \emph{usual equivalence group} of this class and denoted by~$G^\sim$. 
If the arbitrary elements depend at most on derivatives of~$u$ up to order $\hat\EqOrd<\EqOrd$, 
then one can assume that equivalence transformations act in the space of $(x,u^{(\hat\EqOrd)},\theta)$
instead of the space of~$(x,u^{(\EqOrd)},\theta)$.

The usual equivalence group~$G^\sim$ gives rise to a subgroupoid of the equivalence groupoid~$\mathcal G^\sim$
since each equivalence transformation~$\mathcal T\in G^\sim$ generates 
a family of admissible transformations parameterized by~$\theta$,
\[
G^\sim\ni\mathcal T\rightarrow\big\{(\theta,\mathcal T\theta,\pi_*\mathcal T)\mid \theta\in\mathcal S\big\}\subset\mathcal G^\sim.
\]
Here $\pi$ denotes the projection of the space of $(x,u^{(\EqOrd)},\theta)$ to the space of equation variables only, $\pi(x,u^{(\EqOrd)},\theta)=(x,u)$. 
The pushforward $\pi_*\mathcal T$ of~$\mathcal T$ by $\pi$ is then just the restriction of~$\mathcal T$ to the space of~$(x,u)$.

The \textit{usual equivalence algebra}~$\mathfrak g^\sim$ of the class~$\mathcal L|_{\mathcal S}$ 
is an algebra associated with the usual equivalence group~$G^\sim$ 
and constituted by the infinitesimal generators of one-parameter groups of equivalence transformations. 
These infinitesimal generators are vector fields in the joint space of $(x,u^{(\EqOrd)},\theta)$ 
that are projectable to~$(x,u^{(\EqOrd')})$ for each $\EqOrd'=0,\dots,\EqOrd$. 
Since equivalence transformations respect the contact structure on the $\EqOrd$th order jet space, 
the vector fields from~$\mathfrak g^\sim$ inherit this compatibility, 
meaning their projections to the space of~$(x,u^{(\EqOrd)})$ 
coincide with the $\EqOrd$th order prolongations of the associated projections to the space of~$(x,u)$.

In the case when the arbitrary elements~$\theta$'s are functions of~$(x,u)$ only, 
we can assume that equivalence transformations of the class~$\mathcal L|_{\mathcal S}$ 
are point transformations of $(x,u,\theta)$ 
mapping every system from the class~$\mathcal L|_{\mathcal S}$ to a system from the same class.
The projectability property for equivalence transformations is neglected here. 
Then these equivalence transformations constitute a Lie (pseudo)group~$\bar G^{\sim}$
called the \emph{generalized equivalence group} of the class~$\mathcal L|_{\mathcal S}$. 
See the first discussion of this notion in~\cite{mele94Ay,mele96Ay} with no relevant examples 
and the further development in~\cite{popo06Ay,popo10Ay}. 
Often the generalized equivalence group coincides with the usual one; 
this situation is considered as trivial. 
Similar to usual equivalence transformations, 
each element of~$\bar G^{\sim}$ generates a family of admissible transformations parameterized by~$\theta$,
\[
\bar G^\sim\ni\mathcal T\rightarrow\big\{(\theta',\mathcal T\theta',\pi_*(\mathcal T|_{\theta=\theta'(x,u)}))\mid \theta'\in\mathcal S\big\}\subset\mathcal G^\sim,
\] 
and thus the generalized equivalence group~$\bar G^{\sim}$ also generates 
a subgroupoid~$\bar{\mathcal H}$ of the equivalence groupoid~$\mathcal G^\sim$. 

\begin{definition}
We call any minimal subgroup of~$\bar G^{\sim}$ 
that generates the same subgroupoid of~$\mathcal G^\sim$ 
as the entire group~$\bar G^{\sim}$ does 
an \emph{effective generalized equivalence group} of the class~$\mathcal L|_{\mathcal S}$.
\end{definition}

The uniqueness of an effective generalized equivalence group is obvious 
if the entire group~$\bar G^{\sim}$ is effective itself; 
cf.\ Remark~\ref{rem:GenEquivGroupOfGenBurgersKdVEqsGaugeAr1A10} below. 
At the same time, there exist classes of differential equations, 
where effective generalized equivalence groups are proper subgroups 
of the corresponding generalized equivalence groups that are even not normal. 
Hence each of these effective generalized equivalence groups is not unique 
since it differs from some of subgroups non-identically similar to it, 
and all of these subgroups are also effective generalized equivalence groups of the same class. 
See the discussion of particular examples in Remark~\ref{rem:UniquenessOfEffectiveGenEquivGroups} below. 

Suppose that the class~$\mathcal L|_{\mathcal S}$ possesses 
parameterized non-identity usual equivalence transformations 
and some of its arbitrary elements are constants. 
Then this class necessarily admits purely generalized equivalence transformations. 
Indeed, we can set all parameters of elements from the usual equivalence group~$G^\sim$ 
depending on constant arbitrary elements, which gives generalized equivalence transformations. 
The set~$\bar G^{\sim}_0$ of such transformations is a subgroup of the generalized equivalence group~$\bar G^{\sim}$. 
If $\bar G^{\sim}_0=\bar G^{\sim}$, the usual equivalence group~$G^\sim$ 
is an effective generalized equivalence group of the class~$\mathcal L|_{\mathcal S}$.

The \emph{generalized equivalence algebra}~$\bar{\mathfrak g}^\sim$ 
and an \emph{effective generalized equivalence algebra} of the class~$\mathcal L|_{\mathcal S}$ 
are the algebras associated with the generalized equivalence group~$\bar G^\sim$ 
and with an effective generalized equivalence group of this class 
and are constituted by the infinitesimal generators of one-parameter subgroups of these groups, respectively. 
These infinitesimal generators are vector fields in the joint space of $(x,u,\theta)$. 

The property for equivalence transformations to be point transformations with respect to arbitrary elements 
can also be weakened. 
We formally extend the arbitrary-element tuple~$\theta$ of the class~$\mathcal L|_{\mathcal S}$ with virtual arbitrary elements 
that are related to initial arbitrary elements by differential equations 
and thus expressed via initial arbitrary elements in a nonlocal way. 
Denote the reparameterized class by~$\hat{\mathcal L}|_{\mathcal S}$. 
Suppose that the usual (resp.\ generalized or effective generalized) 
equivalence group~$\hat G^\sim$ of~$\hat{\mathcal L}|_{\mathcal S}$ 
induces the maximal subgroupoid of the equivalence groupoid~$\mathcal G^\sim$ 
among the classes obtained from~$\mathcal L|_{\mathcal S}$ by similar reparameterizations, 
and the extension of the arbitrary-element tuple~$\theta$ for~$\hat{\mathcal L}|_{\mathcal S}$ 
is minimal among the reparameterized classes giving the same subgroupoid of~$\mathcal G^\sim$ as~$\hat{\mathcal L}|_{\mathcal S}$. 
Then we call the group~$\hat G^\sim$ an \emph{extended equivalence group} (resp.\ an \emph{extended generalized equivalence group})
of the class~$\mathcal L|_{\mathcal S}$.

A point symmetry transformation of a system~$\mathcal L_\theta$ 
is a point transformation in the space of $(x,u)$ that preserves the solution set of~$\mathcal L_\theta$.
Each point symmetry transformation~$\varphi$ of~$\mathcal L_\theta$ 
gives rise to the single admissible transformation $(\theta,\theta,\varphi)$ of the class~$\mathcal L|_{\mathcal S}$. 
The point symmetry transformations of the system~$\mathcal L_\theta$ constitute the maximal point symmetry group~$G_\theta$ of this system.
The common part~$G^\cap$ of all~$G_\theta$ is called
the \emph{kernel} of maximal point symmetry groups of systems from the class~$\mathcal L|_{\mathcal S}$,
$G^\cap:=\bigcap_{\theta\in\mathcal S} G_\theta$. 
The infinitesimal counterparts of the maximal point symmetry group~$G_\theta$ and the kernel~$G^\cap$, 
which are called  
the \emph{maximal Lie invariance algebra}~$\mathfrak g_\theta$ of~$\mathcal L_\theta$ and 
the \emph{kernel Lie invariance algebra}~$\mathfrak g^\cap$ of systems from~$\mathcal L|_{\mathcal S}$, 
consist of the vector fields in the space of~$(x,u)$ generating one-parameter subgroups of~$G_\theta$ and~$G^\cap$, 
respectively.

Group analysis of differential equations becomes much simpler 
when working with infinitesimal counterparts of objects consisting of point transformations. 
Thus, the problem on Lie (i.e., continuous point) symmetries of a system~$\mathcal L_\theta$ reduces to
constructing the maximal Lie invariance algebra~$\mathfrak g_\theta$ and, therefore, is linear
in contrast to the similar problem on general point symmetries. 
Under certain quite natural conditions on the system~$\mathcal L_\theta$,
the infinitesimal invariance criterion states 
that a vector field~$Q$ in the space of $(x,u)$ 
belongs to the maximal Lie invariance algebra~$\mathfrak g_\theta$ if and only if 
the condition 
$
 Q^{(\EqOrd)}L(x,u^{(\EqOrd)},\theta^{(q)}(x,u^{(\EqOrd)}))=0
$
is identically satisfied on the manifold~$\mathcal L^\EqOrd_\theta$ 
defined by the system~$\mathcal L_\theta$ and its differential consequences in the jet space~$J^{(\EqOrd)}$.
Here $Q^{(\EqOrd)}$ is the standard $\EqOrd$th order prolongation of the vector field~$Q$,
see~\cite{olve86Ay,ovsi82Ay} and Section~\ref{sec:DetEqsLieSymmetriesGenBurgersKdVEqs}.

The group classification problem for the class~$\mathcal L|_{\mathcal S}$ is 
to list all $G^\sim$-inequivalent values for~$\theta\in\mathcal S$ 
such that the associated systems, $\mathcal L_\theta$, 
admit maximal Lie invariance algebras, $\mathfrak g_\theta$, 
that are wider than the kernel Lie invariance algebra~$\mathfrak g^\cap$. 
Further taking into account additional point equivalences between obtained cases, 
provided such additional equivalences exist, 
one solves the group classification problem up to~$\mathcal G^\sim$-equivalence.
Restricting the consideration to Lie symmetries is important 
for the group classification problem to be well-posed within the framework of classes of differential equations. 

When applied to systems from a class~$\mathcal L|_{\mathcal S}$, 
the infinitesimal invariance criterion yields, after splitting with respect to the parametric derivatives of~$u$,
a system of the determining equations for the components of Lie symmetry generators of these systems, 
which is in general parameterized by the arbitrary-element tuple~$\theta$.
It is quite common that there is a subsystem of the determining equations 
that does not involve the tuple of arbitrary elements~$\theta$ and hence may be integrated in a regular way. 
The remaining part of the determining equations that explicitly involve the arbitrary elements is referred to as the system of \emph{classifying equations}. 
In this setting, the group classification problem reduces to the exhaustive investigation of the classifying equations. 
The direct integration of the classifying equations up to $G^\sim$-equivalence 
is usually only possible for classes of the simplest structure, 
e.g., classes involving only constants or functions of a single argument as arbitrary elements, see, e.g., examples in~\cite{ovsi82Ay}. 
Since most classes of interest in applications are of more complicated structure, 
various methods have to be used, which at least enhance the direct method~\cite{niki01Ay,vane09Ay,vane12Ay}.

The most advanced classification techniques rest on the study of algebras of vector fields 
associated with systems from the class~$\mathcal L|_{\mathcal S}$ under consideration 
and constitute, in total, the \emph{algebraic method} of group classification.
For this method to be really effective, 
the class~$\mathcal L|_{\mathcal S}$ has to possess certain properties, 
which are conveniently formulated in terms of various notions of normalization.
The class of differential equations~$\mathcal L|_{\mathcal S}$ is \emph{normalized} 
in the usual (resp.\ generalized, extended, extended generalized) sense
if the subgroupoid induced by its usual (resp.\ generalized, extended, extended generalized) equivalence group
coincides with the entire equivalence groupoid~$\mathcal G^\sim$ of~$\mathcal L|_{\mathcal S}$. 

The normalization of~$\mathcal L|_{\mathcal S}$ in the usual sense 
is equivalent to the following conditions.
The transformational part~$\varphi$ of each admissible transformation $(\theta',\theta'',\varphi)\in\mathcal G^\sim$
does not depend on the fixed initial value~$\theta'$ of the arbitrary-element tuple~$\theta$
and, therefore, is appropriate for any initial value of~$\theta$.
Moreover, the prolongation of~$\varphi$ to the space of~$(x,u^{(\EqOrd)})$
and the further extension to the arbitrary elements according to the relation between~$\theta'$ and~$\theta''$
gives a point transformation in the joint space of~$(x,u^{(\EqOrd)},\theta)$.
Then $G_\theta\leqslant\pi_*G^\sim$ and $\mathfrak g_\theta\subseteq\pi_*\mathfrak g^\sim$
for any~$\theta\in\mathcal S$, 
and hence the group classification of the class~$\mathcal L|_{\mathcal S}$ reduces 
to the classification of certain $G^\sim$-inequivalent subalgebras of the equivalence algebra~$\mathfrak g^\sim$
or, equivalently, 
to the classification of certain $\pi_*G^\sim$-inequivalent subalgebras of the projection~$\pi_*\mathfrak g^\sim$. 

If the class~$\mathcal L|_{\mathcal S}$ is normalized in the generalized sense, 
the expression for transformational parts of admissible transformations may involve 
arbitrary elements but only in a quite specific way.  
The equivalence groupoid is partitioned into families of admissible transformations 
parameterized by the source arbitrary-element tuple, 
and the transformational parts of admissible transformations from each of these families 
jointly give, after the extension to the arbitrary elements according to the relation between 
the corresponding source and target arbitrary elements, 
a point transformation in the joint space of~$(x,u,\theta)$. 
Then $G_{\theta'}\leqslant\pi_*(G^\sim|_{\theta=\theta'(x,u)})$ 
and $\mathfrak g_{\theta'}\subseteq\pi_*(\mathfrak g^\sim|_{\theta=\theta'(x,u)})$
for any~$\theta'\in\mathcal S$.

The class~$\mathcal L|_{\mathcal S}$ is called \emph{semi-normalized} in the usual sense 
if for any $(\theta',\theta'',\varphi)\in\mathcal G^\sim$ there exist 
$\mathcal T\in G^\sim$, $\varphi'\in G_{\theta'}$ and $\varphi''\in G_{\theta''}$ 
such that $\theta''=\mathcal T\theta'$ and $\varphi=\varphi''(\pi_*\mathcal T)\varphi'$. 
One of the transformations~$\varphi'$ and~$\varphi''$ can always be assumed 
to coincide with the identity transformation. 
Semi-normalization in the generalized sense is defined in a similar way. 
Roughly speaking, a class is semi-normalized in a certain sense 
if its equivalence groupoid is generated by its relevant equivalence group 
jointly with point symmetry groups of systems from this class. 
Each normalized class is semi-normalized in the same sense, 
and the converse is not in general true. 
There are also more sophisticated notions, uniform semi-normalization and weak uniform semi-normalization, 
which mediate the notion of normalization and semi-normalization~\cite{kuru16a,kuru17a}.
\looseness=-1

To establish the normalization properties of the class~$\mathcal L|_{\mathcal S}$ 
one should compute its equivalence groupoid~$\mathcal G^\sim$, 
which is realized using the direct method. 
Here one fixes two arbitrary systems from the class,
$\mathcal L_\theta\colon L(x,u^{(\EqOrd)},\theta(x,u^{(\EqOrd)}))=0$
and $\mathcal L_{\tilde\theta}\colon L(\tilde x,\tilde u^{(\EqOrd)},\tilde \theta(\tilde x,\tilde u^{(\EqOrd)}))=0$,
and aims to find the (nondegenerate) point transformations, 
$\varphi$: $\tilde x_i=X^i(x,u)$, $\tilde u^a=U^a(x,u)$, $i=1,\dots,n$, $a=1,\dots,m$, connecting them.
For this, one changes the variables in the system~$\mathcal L_{\tilde\theta}$ 
by expressing the derivatives $\tilde u^{(\EqOrd)}$ in terms of $u^{(\EqOrd)}$ and derivatives of the functions $X^i$ and $U^a$ 
as well as by substituting $X^i$ and $U^a$ for $\tilde x_i$ and $\tilde u^a$, respectively.
The requirement that the resulting transformed system 
has to be satisfied identically for solutions of~$\mathcal L_\theta$
leads to the system of determining equations for the components of the transformation~$\varphi$.

In the case of a single dependent variable ($m=1$), 
all the above notions involving point transformations can be directly extended to contact transformations.

\section{Equivalence groupoid}\label{sec:EquivalenceGroupoidGenBurgersKdVEqs}

We now compute the equivalence groupoid and equivalence group of the class~\eqref{eq:GenBurgersKdVEqs} using the direct method. 
Equivalence transformations will be used to find an appropriate gauged subclass of~\eqref{eq:GenBurgersKdVEqs} 
that is suitable for carrying out the complete group classification. The presentation closely follows~\cite{bihl16a}. 
In particular, it is convenient to start with the wide superclass 
of general $(1{+}1)$-dimensional $\EqOrd$th order ($\EqOrd\geqslant2$) evolution equations of the form
\begin{equation}\label{eq:GenEvolEqs}
 u_t=H(t,x,u_0,\dots,u_\EqOrd), \quad H_{u_\EqOrd}\ne0,
\end{equation}
and sequentially narrowing it until the class~\eqref{eq:GenBurgersKdVEqs} and its gauged subclasses are reached.
The advantage of this method is that one can infer the normalization properties of the class~\eqref{eq:GenBurgersKdVEqs} 
by keeping track of the normalization properties of the class~\eqref{eq:GenEvolEqs} and its relevant subclasses. 
This not only gives restrictions on the transformational part of admissible transformations in the class~\eqref{eq:GenEvolEqs} and its subclasses, 
but also leads to a more and more constrained relation between the initial and target arbitrary elements until this relation is sufficiently specified.

It was established in~\cite{maga93Ay} that a contact transformation of the independent variables $(t,x)$ and the dependent variable~$u$
connects two fixed equations from the class~\eqref{eq:GenEvolEqs} 
if and only if its components are of the form
$\tilde t=T(t)$, $\tilde x=X(t,x,u,u_x)$ and $\tilde u=U(t,x,u,u_x)$ 
provided that the usual nondegeneracy assumption and contact condition hold, 
\[
T_t\ne0, \quad 
\mathop{\rm rank}\left(\begin{array}{ccc}X_x&X_u&X_{u_x}\\U_x&U_u&U_{u_x}\end{array}\right)=2
\qquad\mbox{and}\qquad
(U_x+U_uu_x)X_{u_x}=(X_x+X_uu_x)U_{u_x}.
\]
The prolongation of the transformation to the first derivatives is given by 
\[
\tilde u_{\tilde x}=V, \quad
\tilde u_{\tilde t}=\frac{U_u-X_uV}{T_t}u_t+\frac{U_t-X_tV}{T_t},
\qquad\mbox{where}\qquad
V=\frac{U_x+U_uu_x}{X_x+X_uu_x}\quad\mbox{or}\quad V=\frac{U_{u_x}}{X_{u_x}}
\]
if $X_x+X_uu_x\ne0$ or $X_{u_x}\ne0$, respectively. 
Such transformations prolonged to the arbitrary element~$H$ according to
\[
\tilde H=\frac{U_u-X_uV}{T_t}H+\frac{U_t-X_tV}{T_t}
\]
constitute the contact usual equivalence group of the class~\eqref{eq:GenEvolEqs}
and thus this class is normalized in the usual sense with respect to contact transformations. 
It is also normalized in the usual sense with respect to point transformations. 
The point equivalence groupoid and the point usual equivalence group are singled out from their contact counterparts 
by the condition \mbox{$X_{u_x}=U_{u_x}=0$}.

Consider the subclass~$\mathcal E$ of the class~\eqref{eq:GenEvolEqs} singled out by the constraints
\[
H_{u_\EqOrd u_k}=0,\quad k=1,\dots,\EqOrd, \qquad
H_{u_{\EqOrd-1}u_l}=0,\quad l=1,\dots,\EqOrd-1,
\]
cf.~\cite{vane14a}.
Due to the constraints $H_{u_\EqOrd u_k}=0$, $k=2,\dots,\EqOrd$,
it follows that contact admissible transformations in the subclass~$\mathcal E$
are induced by point admissible transformations.
In other words, the contact equivalence groupoid of~$\mathcal E$
coincides with the first prolongation of the point equivalence groupoid of~$\mathcal E$.
Then the constraints $H_{u_\EqOrd u_1}=0$ and $H_{u_{\EqOrd-1}u_l}=0$, \mbox{$l=1,\dots,\EqOrd-1$},
successively imply $X_u=0$ and $U_{uu}=0$ for any admissible transformation in~$\mathcal E$,
i.e., its transformational part is of the form
\begin{equation}\label{eq:IntermediatePointTransformationBurgersKdV}
\tilde t=T(t),\quad \tilde x=X(t,x),\quad \tilde u=U^1(t,x)u+U^0(t,x) 
\qquad\mbox{with}\qquad T_tX_xU^1\ne0.
\end{equation}
The prolongations of these transformations to the arbitrary element~$H$
constitute the usual equivalence group of class~$\mathcal E$.
Therefore, the class~$\mathcal E$ is normalized in the usual sense.

To single out the class~\eqref{eq:GenBurgersKdVEqs}, we should set more constraints for~$H$. 
The complete system of these constraints is given by 
\[
H_{u_ku_l}=0,\quad 1\leqslant k\leqslant l\leqslant\EqOrd, \quad (k,l)\ne(0,1),\quad H_{u_\EqOrd}\ne0,\quad H_{u_0u_1}\ne0.
\]
Then we should also reparameterize the obtained subclass, 
assuming $\theta=(A^0,\dots,A^\EqOrd,B,C)$ as the tuple of arbitrary elements instead of~$H$.
Using the direct method for computing the equivalence groupoid of the class~\eqref{eq:GenBurgersKdVEqs}, 
we first fix two arbitrary equations~$\mathcal L_\theta$ and~$\mathcal L_{\tilde \theta}$ 
from the class~\eqref{eq:GenBurgersKdVEqs} 
and require that they are connected through a point transformation~$\varphi$ 
of the form~\eqref{eq:IntermediatePointTransformationBurgersKdV}. 
This particular form can be posed for admissible transformations
since the class~\eqref{eq:GenBurgersKdVEqs} is a subclass of the normalized class~$\mathcal E$.
It is thus necessary to re-express the jet variables~$(\tilde t,\tilde x,\tilde u^{(\EqOrd)})$ in terms of $(t,x,u^{(\EqOrd)})$. 
In view of~\eqref{eq:IntermediatePointTransformationBurgersKdV},
the expressions for the transformed total derivative operators~are
\[
{\rm D}_{\tilde t}=\frac{1}{T_t}\left({\rm D}_t-\frac{X_t}{X_x}{\rm D}_x\right),\quad {\rm D}_{\tilde x}=\frac1{X_x}{\rm D}_x.
\]
Substituting the expressions for the transformed values into~$\mathcal L_{\tilde \theta}$ 
yields an intermediate equation~\smash{$\tilde{\mathcal L}$}. 
Because the equations~$\mathcal L_\theta$ and~$\mathcal L_{\tilde \theta}$ 
are by assumption connected by~$\varphi$, 
the equation~$\tilde{\mathcal L}$ has to be satisfied by all solutions of~$\mathcal L_\theta$.
We assume $u_t$ as the leading derivative in~$\mathcal L_\theta$ 
and substitute the expression for~$u_t$ obtained from~$\mathcal L_\theta$ into~$\tilde{\mathcal L}$. 
This leads to an identity, which can be split with respect to the parametric derivatives~$u_0$, \dots, $u_\EqOrd$.
The condition that the coefficient of $u^2$ in~$\tilde{\mathcal L}$ has to be zero requires $U^1_x=0$.
Collecting the other coefficients of powers of parametric derivatives, 
we derive the formulas pointwise relating $\theta$ and~$\tilde\theta$ 
with no constraints for~$T$, $X$, $U^1$ and~$U^0$. 
These formulas are quite cumbersome (although obtainable using the Fa\`{a} di Bruno's formula). 
In addition, they are not needed at the present stage 
since we can first fix a suitable gauged subclass of the class~\eqref{eq:GenBurgersKdVEqs}. 
To do this, we only need the transformation component 
for the arbitrary element $C=C(t,x)$, which is readily obtained without using Fa\`{a} di Bruno's formula, 
\[
 \tilde C=\frac{X_x}{T_tU^1}C.
\]

\begin{proposition}\label{pro:EquivGroupoidOfGenBurgersKdVEqs}
The class~\eqref{eq:GenBurgersKdVEqs} is normalized in the usual sense.
Its usual equivalence group~$G^\sim_{\mbox{\tiny\eqref{eq:GenBurgersKdVEqs}}}$
consists of the transformations in the joint space of $(t,x,u,\theta)$
whose $(t,x,u)$-components are of the form
\[
\tilde t=T(t),\quad \tilde x=X(t,x),\quad \tilde u=U^1(t)u+U^0(t,x), 
\]
where $T=T(t)$, $X=X(t,x)$, $U^1=U^1(t)$ and $U^0=U^0(t,x)$
are arbitrary smooth functions of their arguments such that $T_tX_xU^1\ne0$.
\end{proposition}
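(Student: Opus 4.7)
The plan is to exploit the embedding of the class~\eqref{eq:GenBurgersKdVEqs} into the normalized superclass~$\mathcal E$, which has already been carried out in the excerpt. Since $\mathcal E$ is normalized in the usual sense and every admissible transformation of~$\mathcal E$ has the form~\eqref{eq:IntermediatePointTransformationBurgersKdV}, every admissible transformation of~\eqref{eq:GenBurgersKdVEqs} must inherit this form, and the only remaining task is to identify the additional constraints on $T$, $X$, $U^1$, $U^0$ imposed by membership of both source and target in the narrower class, and then to verify that these constraints cut out a Lie pseudogroup acting coherently on $(t,x,u,\theta)$.

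First I would fix arbitrary $\mathcal L_\theta$ and $\mathcal L_{\tilde\theta}$ in the class and substitute a transformation~\eqref{eq:IntermediatePointTransformationBurgersKdV} into~$\mathcal L_{\tilde\theta}$, using the expressions for the transformed total derivative operators recorded in the excerpt to re-express $\tilde u_{\tilde t}$ and $\tilde u_{\tilde k}$ in terms of $(t,x,u^{(\EqOrd)})$. Substituting the expression for $u_t$ from~$\mathcal L_\theta$ then yields a polynomial identity in the parametric derivatives $u_0,\dots,u_\EqOrd$ that must hold on the $\EqOrd$-jet. Splitting with respect to these derivatives proceeds top-down: the coefficient of $u_\EqOrd$ links $\tilde A^\EqOrd$ to $A^\EqOrd$ via powers of $X_x$, and successively lower-order coefficients determine $\tilde A^k$ for $k=\EqOrd-1,\dots,0$, $\tilde B$ and $\tilde C$. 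The decisive split is the coefficient of~$u^2$ produced by the nonlinear term $\tilde C\tilde u\tilde u_{\tilde x}$: after expanding $\tilde u=U^1(t,x)u+U^0(t,x)$ it forces $U^1_x=0$, exactly as indicated in the excerpt, so $U^1$ reduces to a function of~$t$ alone.

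Next I would verify that no further obstruction arises, i.e.\ that each remaining coefficient in the split identity can be absorbed into a definition of one of the target arbitrary elements, giving pointwise formulas for $\tilde A^0,\dots,\tilde A^\EqOrd,\tilde B,\tilde C$ in terms of $T$, $X$, $U^1$, $U^0$ and $A^0,\dots,A^\EqOrd,B,C$, without imposing any residual equation on $T$, $X$, $U^1$, $U^0$ beyond $T_tX_xU^1\ne0$ and $U^1_x=0$. The sample formula $\tilde C=X_xC/(T_tU^1)$ already exhibited in the excerpt shows the pattern; the remaining relations are obtained by the same matching but do not need to be written out here for the normalization statement.

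Having done this, the conclusion is immediate. The transformational part of an arbitrary admissible transformation depends only on the four functions $T,X,U^1,U^0$ and not on the source tuple $\theta$, and the prolongation of these transformations to $\theta$ by the formulas just described is a well-defined point action on the joint space $(t,x,u,\theta)$ preserving the class. Hence the equivalence groupoid coincides with the subgroupoid induced by the usual equivalence group, that is, the class is normalized in the usual sense, and $G^\sim_{\mbox{\tiny\eqref{eq:GenBurgersKdVEqs}}}$ consists of the transformations stated in the proposition. The main technical obstacle I anticipate is the Fa\`a di Bruno bookkeeping needed to justify that the splitting at every intermediate order produces no hidden constraint on $T$, $X$, $U^1$, $U^0$; but since for the purpose of normalization one needs only the solvability of the resulting triangular system for $\tilde\theta$ (and not its explicit form), this reduces to a systematic downward induction in the order of the parametric derivative involved.
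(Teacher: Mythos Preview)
Your proposal is correct and follows essentially the same approach as the paper: both exploit the embedding into the normalized superclass~$\mathcal E$ to restrict the transformational part to the form~\eqref{eq:IntermediatePointTransformationBurgersKdV}, then split the transformed equation with respect to the parametric derivatives, extract the constraint $U^1_x=0$ from the coefficient of~$u^2$, and observe that the remaining coefficients merely \emph{define} the target arbitrary elements in terms of the source ones and the transformation parameters without imposing further restrictions on $T$, $X$, $U^1$, $U^0$. Your remark that only the triangular solvability (not the explicit Fa\`a di Bruno formulas) is needed for the normalization claim matches the paper's own stance exactly.
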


We can now use the family of equivalence transformations parameterized by the arbitrary element~$C$, where
\[
\tilde t=t,\quad\tilde x=\int_{x_0}^x\frac{\mathrm{d}y}{C(t,y)},\quad \tilde u=u,
\]
to map the class~\eqref{eq:GenBurgersKdVEqs} 
onto its subclass singled out by the constraint $C=1$. 
Under this gauging we derive that $X_x=T_tU^1$ and thus $X_{xx}=0$, 
i.e., $X=X^1(t)x+X^0(t)$ and $U^1=X^1/T_t$. 
The gauged subclass is still normalized in the usual sense. 
Moreover, due to the most prominent equations from the class~\eqref{eq:GenBurgersKdVEqs} 
being of the form with $C=1$, this gauge is quite natural.

\begin{remark}
Given a class of differential equations, if a subclass is singled from it 
by constraints with explicit expressions for some arbitrary elements, 
we reparameterize this subclass using the reduced tuple of arbitrary elements, 
which is obtained from the complete tuple by excluding the constrained arbitrary elements. 
For example, under the gauge~$C=1$ we can assume 
that the tuple of arbitrary elements for the corresponding subclass 
is $(A^0,\dots,A^\EqOrd,B)$.
\end{remark}

With the restrictions on $T$, $X$ and $U$ obtained thus far, 
we now complete the procedure for finding the equivalence groupoid 
of the subclass associated with gauge $C=1$, 
which consists of equations of the form
\begin{equation}\label{eq:GenBurgersKdVEqsGaugeSubclassC1}
 u_t+uu_x=\sum_{k=0}^\EqOrd A^k(t,x)u_k+B(t,x).
\end{equation}
Transforming the equations from the gauged subclass, we find
\begin{gather*}
\frac1{T_t}\left(U^1u_t+U^1_tu+U^0_t-\frac{X_t}{X^1}(U^1u_x+U^0_x)\right)+\frac{1}{X^1}(U^1u+U^0)(U^1u_x+U^0_x)\\
\qquad{}=\sum_{k=0}^\EqOrd\frac{\tilde A^k}{(X^1)^k}(U^1u_k+U^0_k)+\tilde B,
\end{gather*}
which can be written, after substituting for~$u_t$ in view of the equation~\eqref{eq:GenBurgersKdVEqsGaugeSubclassC1}, as
\begin{gather*}
\sum_{k=0}^\EqOrd A^ku_k+B+\left(\frac{U^0}{U^1}-\frac{X_t}{X^1}\right)u_x+\left(\frac{U^1_t}{U^1}+\frac{U^0_x}{U^1}\right)u
+\frac{U^0_t}{U^1}-\frac{X_t}{X^1}\frac{U^0_x}{U^1}+\frac{U^0U^0_x}{(U^1)^2}{}\\
\qquad{}=\sum_{k=0}^\EqOrd\left[\frac{T_t}{(X^1)^k}\tilde A^ku_k+\frac{T_t}{U^1}\left(\tilde B+ \frac{\tilde A^k}{(X^1)^k}U^0_k\right)\right].
\end{gather*}
Splitting this equation with respect to~$u_k$ yields
the transformation components for the arbitrary elements. 
We have thus proved the following theorem.

\begin{theorem}\label{thm:EquivalenceGroupGenBurgersKdVEqsGaugeC1}
The class~\eqref{eq:GenBurgersKdVEqsGaugeSubclassC1} 
of reduced $(1{+}1)$-dimensional general $\EqOrd$th order Burgers--KdV equations with $C=1$ is normalized in the usual sense.
Its usual equivalence group~$G^\sim_{\mbox{\tiny\eqref{eq:GenBurgersKdVEqsGaugeSubclassC1}}}$ 
is constituted by the transformations of the form
\begin{gather*}
\tilde t=T(t),\quad \tilde x=X^1(t)x+X^0(t),\quad \tilde u=\frac{X^1}{T_t}u+U^0(t,x),
\\
\tilde A^j=\frac{(X^1)^j}{T_t}A^j,\ \
\tilde A^1=\frac{X^1}{T_t}A^1+U^0-\frac{X^1_tx+X^0_t}{T_t},\ \
\tilde A^0=\frac{1}{T_t}\left(A^0+\frac{X^1_t}{X^1}-\frac{T_{tt}}{T_t}+\frac{T_t}{X^1}U^0_x\right),
\\
\tilde B=\frac{X^1}{(T_t)^2}B+\frac{U^0_t}{T_t}
+\frac{U^0_x}{X^1}\left(U^0-\frac{X^1_tx+X^0_t}{T_t}\right)
-\sum_{k=0}^\EqOrd\frac{U^0_k}{(X^1)^k}\tilde A^k,
\end{gather*}
where $j=2,\dots,\EqOrd$, and $T=T(t)$, $X^1=X^1(t)$, $X^0=X^0(t)$ and $U^0=U^0(t,x)$
are arbitrary smooth functions of their arguments such that $T_tX^1\ne0$.
\end{theorem}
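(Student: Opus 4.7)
My plan is to finalize the direct-method calculation that the excerpt has already set up, with only the final splitting and bookkeeping remaining. Since the class~\eqref{eq:GenBurgersKdVEqsGaugeSubclassC1} is a subclass of the class~\eqref{eq:GenBurgersKdVEqs}, which by Proposition~\ref{pro:EquivGroupoidOfGenBurgersKdVEqs} is normalized in the usual sense, every admissible transformation of~\eqref{eq:GenBurgersKdVEqsGaugeSubclassC1} is induced by an element of~$G^\sim_{\mbox{\tiny\eqref{eq:GenBurgersKdVEqs}}}$. Hence I may assume from the outset that $\tilde t=T(t)$, $\tilde x=X(t,x)$, $\tilde u=U^1(t)u+U^0(t,x)$ with $T_tX_xU^1\ne0$, and that the remaining task is only to determine which such transformations preserve the gauge and to read off the induced action on~$\theta=(A^0,\dots,A^\EqOrd,B)$.

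The gauge condition is enforced by the formula $\tilde C=X_x/(T_tU^1)$ already recorded in the excerpt: imposing $C=\tilde C=1$ gives $X_x=T_tU^1$. Since the right-hand side is independent of~$x$, I immediately obtain $X_{xx}=0$, so $X=X^1(t)x+X^0(t)$ and $U^1=X^1/T_t$. Substituting these relations into the arbitrary transformation and its prolongations leads precisely to the transformed identity displayed just above the theorem statement in the excerpt.

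The heart of the argument is then to split that identity with respect to the parametric derivatives. Since $\mathcal L_\theta$ is solved for~$u_t$ in the transformed equation, the remaining derivatives $u_0,u_1,\dots,u_\EqOrd$ are parametric and can be treated as independent variables on the manifold~$\mathcal L^\EqOrd_\theta$. Collecting coefficients of $u_k$ for $k\geqslant2$ yields $\tilde A^k=(X^1)^kA^k/T_t$; collecting the coefficient of $u_x$ yields, after using $U^1=X^1/T_t$ to eliminate the $U^0/U^1$ ratio, the formula $\tilde A^1=(X^1/T_t)A^1+U^0-(X^1_tx+X^0_t)/T_t$; collecting the coefficient of~$u$ and using $U^1_t/U^1=X^1_t/X^1-T_{tt}/T_t$ gives the stated expression for $\tilde A^0$; and the remaining $u_k$-free part produces $\tilde B$ after grouping the terms $-X_tU^0_x/(T_tX^1)+U^0U^0_x/X^1$ into the factorized form in the theorem. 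The only bookkeeping point to watch is the sum $\sum_k U^0_k\tilde A^k/(X^1)^k$ on the right-hand side, which collects all contributions from the transformation of the constant term.

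Finally, I must argue normalization. The crucial observation is that the resulting formulas for $\tilde\theta$ depend only on~$\theta$ and on the free functions $T$, $X^1$, $X^0$, $U^0$, with no residual constraints on~$\theta$ and no dependence of $(T,X^1,X^0,U^0)$ on~$\theta$. Hence every admissible transformation is the restriction to equation variables of a point transformation in~$(t,x,u,\theta)$ whose $(t,x,u)$-part is independent of~$\theta$, which is exactly the defining property of usual normalization. The Lie-pseudogroup structure of~$G^\sim_{\mbox{\tiny\eqref{eq:GenBurgersKdVEqsGaugeSubclassC1}}}$ is inherited from composition and inversion of the parameters $(T,X^1,X^0,U^0)$, with the nondegeneracy $T_tX^1\ne0$ (equivalently, $U^1\ne0$) preserved throughout. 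I do not expect any genuine obstacle; the only care needed is algebraic cleanliness while substituting $U^1=X^1/T_t$ and $X_t=X^1_tx+X^0_t$ to match the exact form of the formulas stated in the theorem.
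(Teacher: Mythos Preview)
Your proposal is correct and follows essentially the same approach as the paper: you exploit the normalization of the superclass~\eqref{eq:GenBurgersKdVEqs} via Proposition~\ref{pro:EquivGroupoidOfGenBurgersKdVEqs}, impose the gauge $C=\tilde C=1$ to obtain $X_{xx}=0$ and $U^1=X^1/T_t$, and then split the transformed equation with respect to the parametric derivatives $u_0,\dots,u_\EqOrd$ to read off the $\tilde A^k$ and~$\tilde B$ components. The paper carries out exactly this computation in the paragraphs immediately preceding the theorem statement, and your account of normalization---that the resulting formulas impose no residual constraints on~$\theta$ and involve parameter functions independent of~$\theta$---matches the paper's reasoning.
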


At this stage, it is convenient to introduce one more gauge.
In particular, the family of equivalence transformations parameterized by the arbitrary element~$A^1$
with $T=t$, $X^1=1$, $X^0=0$ and $U^0=-A^1$
maps the associated gauged subclass~\eqref{eq:GenBurgersKdVEqsGaugeSubclassC1} with $C=1$
to the subclass of~\eqref{eq:GenBurgersKdVEqs} with $C=1$ and $A^1=0$.
This gauging implies that $U^0=(X^1_tx+X^0_t)/T_t$.
The corresponding gauged subclass consisting of equations of the form
\begin{equation}\label{eq:GenBurgersKdVEqsGaugedSubclass}
\mathcal L_\kappa\colon\quad u_t+uu_x=\sum_{j=2}^\EqOrd A^j(t,x)u_j+A^0(t,x)u+B(t,x),
\end{equation}
where $A^r\ne0$ and $\kappa=(A^0,A^2,\dots,A^\EqOrd,B)$ is the reduced arbitrary-element tuple, is still normalized in the usual sense.
This is the gauged subclass that is appropriate
for solving the complete group classification problem
for the class~\eqref{eq:GenBurgersKdVEqs}.
This leads to the following theorem.

\begin{theorem}\label{thm:EquivalenceGroupGenBurgersKdVEqsGaugeC1A10}
The class~\eqref{eq:GenBurgersKdVEqsGaugedSubclass} 
of reduced $(1{+}1)$-dimensional general  $\EqOrd$th~order Burgers--KdV equations,  
which is singled out from the class~\eqref{eq:GenBurgersKdVEqs} by the gauge $(C,A^1)=(1,0)$, 
is normalized in the usual sense.
Its usual equivalence group~$G^\sim$ consists of the transformations of the form
\begin{subequations}\label{eq:EquivalenceTransformationsGenBurgersKdVEqsGaugeC1A10}
\begin{gather}
\tilde t=T(t),\quad \tilde x=X^1(t)x+X^0(t),\quad \tilde u=\frac{X^1}{T_t}u+\frac{X^1_t}{T_t}x+\frac{X^0_t}{T_t},
\label{eq:PointTransformationBetweenGenBurgersKdVEqsGaugeC1A10a}\\
\tilde A^j=\frac{(X^1)^j}{T_t}A^j,\quad  \tilde A^0=\frac{1}{T_t}\left(A^0+2\frac{X^1_t}{X^1}-\frac{T_{tt}}{T_t}\right),
\label{eq:PointTransformationBetweenGenBurgersKdVEqsGaugeC1A10b}\\
\tilde B=\frac{X^1}{(T_t)^2}B+\frac1{T_t}\left(\frac{X^1_t}{T_t}\right)_tx+\frac1{T_t}\left(\frac{X^0_t}{T_t}\right)_t
-\left(\frac{X^1_t}{T_t}x+\frac{X^0_t}{T_t}\right)\tilde A^0,
\label{eq:PointTransformationBetweenGenBurgersKdVEqsGaugeC1A10c}
\end{gather}
\end{subequations}
where $j=2,\dots,\EqOrd$, and $T=T(t)$, $X^1=X^1(t)$ and $X^0=X^0(t)$ are arbitrary smooth functions of their arguments with $T_tX^1\ne0$.

\end{theorem}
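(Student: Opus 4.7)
The plan is to derive Theorem~\ref{thm:EquivalenceGroupGenBurgersKdVEqsGaugeC1A10} as a corollary of Theorem~\ref{thm:EquivalenceGroupGenBurgersKdVEqsGaugeC1}. Since the class~\eqref{eq:GenBurgersKdVEqsGaugedSubclass} is a subclass of~\eqref{eq:GenBurgersKdVEqsGaugeSubclassC1}, which has already been shown to be normalized in the usual sense, any admissible transformation of~\eqref{eq:GenBurgersKdVEqsGaugedSubclass} is the restriction to the subclass of an admissible transformation of the ambient class, and hence is generated by an element of $G^\sim_{\mbox{\tiny\eqref{eq:GenBurgersKdVEqsGaugeSubclassC1}}}$. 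Thus the $(t,x,u)$-components are forced to have the form~\eqref{eq:IntermediatePointTransformationBurgersKdV} with $X=X^1(t)x+X^0(t)$ and $U^1=X^1/T_t$, and all that remains is to single out the subgroup that preserves the additional gauge $A^1=0$.

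First I would impose simultaneously $A^1=0$ and $\tilde A^1=0$ in the transformation formula from Theorem~\ref{thm:EquivalenceGroupGenBurgersKdVEqsGaugeC1}, namely
\[
\tilde A^1=\frac{X^1}{T_t}A^1+U^0-\frac{X^1_tx+X^0_t}{T_t}.
\]
This yields the single constraint $U^0=(X^1_tx+X^0_t)/T_t$, which completely determines~$U^0$ in terms of the remaining free functions $T$, $X^1$, $X^0$ and reduces the $(t,x,u)$-components to~\eqref{eq:PointTransformationBetweenGenBurgersKdVEqsGaugeC1A10a}. Substituting this expression for~$U^0$ into the transformation laws for~$A^j$ ($j=2,\dots,\EqOrd$) and for~$A^0$ from Theorem~\ref{thm:EquivalenceGroupGenBurgersKdVEqsGaugeC1} immediately reproduces~\eqref{eq:PointTransformationBetweenGenBurgersKdVEqsGaugeC1A10b}; in particular, the term $T_tU^0_x/X^1$ in the formula for~$\tilde A^0$ becomes $X^1_t/X^1$, doubling the existing contribution.

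The only computation requiring care is the one for~$\tilde B$. Substituting $U^0=(X^1_tx+X^0_t)/T_t$ annihilates the middle term $(U^0_x/X^1)(U^0-(X^1_tx+X^0_t)/T_t)$ in the $\tilde B$-formula of Theorem~\ref{thm:EquivalenceGroupGenBurgersKdVEqsGaugeC1}, while the sum $\sum_{k=0}^\EqOrd (U^0_k/(X^1)^k)\tilde A^k$ collapses to $U^0\tilde A^0$ since $U^0$ is affine in~$x$, $\tilde A^1=0$, and $U^0_k=0$ for $k\geqslant2$. Expanding $U^0_t/T_t$ as $(x/T_t)(X^1_t/T_t)_t+(1/T_t)(X^0_t/T_t)_t$ then gives exactly~\eqref{eq:PointTransformationBetweenGenBurgersKdVEqsGaugeC1A10c}.

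Finally, I would verify closure: a direct check shows that the transformations~\eqref{eq:EquivalenceTransformationsGenBurgersKdVEqsGaugeC1A10} indeed map any equation of the form~\eqref{eq:GenBurgersKdVEqsGaugedSubclass} to another equation of the same form, so they constitute a group and exhaust all admissible transformations of the subclass. This simultaneously establishes the announced description of $G^\sim$ and the normalization of~\eqref{eq:GenBurgersKdVEqsGaugedSubclass} in the usual sense. The main potential obstacle is purely bookkeeping in the $\tilde B$ computation, since one must correctly account for the dependence of~$U^0$ on $t$ (both through $X^1_t$, $X^0_t$ and through the overall $1/T_t$ factor) when re-expressing $U^0_t/T_t$ in the form appearing in~\eqref{eq:PointTransformationBetweenGenBurgersKdVEqsGaugeC1A10c}; no conceptual difficulty arises beyond this.
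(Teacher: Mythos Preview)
Your proposal is correct and follows essentially the same approach as the paper: the paper also derives Theorem~\ref{thm:EquivalenceGroupGenBurgersKdVEqsGaugeC1A10} directly from Theorem~\ref{thm:EquivalenceGroupGenBurgersKdVEqsGaugeC1} by imposing the additional gauge $A^1=0$, obtaining the constraint $U^0=(X^1_tx+X^0_t)/T_t$ and substituting it into the transformation formulas. In fact, you supply more computational detail (the collapse of the sum in the $\tilde B$-formula and the expansion of $U^0_t/T_t$) than the paper, which simply states the constraint on~$U^0$ and then records the resulting theorem.
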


\begin{corollary}\label{cor:OnEquivalenceAlgebraGenBurgersKdVEqs}
The equivalence algebra of the class~\eqref{eq:GenBurgersKdVEqsGaugedSubclass} 
of $(1{+}1)$-dimensional general $\EqOrd$th~order Burgers--KdV equations is given by
$
 \mathfrak g^\sim=\langle\hat D(\tau),\hat S(\zeta),\hat P(\chi)\rangle,
$
where $\tau=\tau(t)$, $\zeta=\zeta(t)$ and $\chi=\chi(t)$ run through the set of smooth functions of~$t$, with
\begin{gather*}
\hat D(\tau)=\tau\partial_t-\tau_tu\p_u-\tau_t\sum_{j=2}^\EqOrd A^j\p_{A^j}-(\tau_tA^0+\tau_{tt})\p_{A^0}-2\tau_tB\p_B,
\\[-1ex]
\hat S(\zeta)=\zeta x\p_x+(\zeta u+\zeta_tx)\p_u+j\zeta\sum_{j=2}^\EqOrd A^j\p_{A^j}+2\zeta_t\p_{A^0}+(\zeta B+\zeta_{tt}x-\zeta_t x A^0)\p_B,
\\[1ex]
\hat P(\chi)=\chi\partial_x+\chi_t\p_u+(\chi_{tt}-\chi_t A^0)\p_B.
\end{gather*}
\end{corollary}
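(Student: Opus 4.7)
The plan is to derive $\mathfrak g^\sim$ directly from the explicit description of the equivalence group~$G^\sim$ supplied by Theorem~\ref{thm:EquivalenceGroupGenBurgersKdVEqsGaugeC1A10}, by differentiating one-parameter subgroups at the identity. Since $G^\sim$ is parameterized by the three arbitrary smooth functions $T(t)$, $X^1(t)$ and $X^0(t)$ (subject only to the open constraint $T_tX^1\ne0$), the Lie algebra~$\mathfrak g^\sim$ is itself parameterized by three arbitrary functions of~$t$, and the three generating families $\hat D(\tau)$, $\hat S(\zeta)$, $\hat P(\chi)$ should arise from perturbing each of $T$, $X^1$, $X^0$ separately about the identity values $T=t$, $X^1=1$, $X^0=0$.

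Concretely, I would set up three one-parameter subgroups of~$G^\sim$, in each of which two of the group parameters are frozen at the identity: (i)~$T(t,\varepsilon)=t+\varepsilon\tau(t)$, $X^1=1$, $X^0=0$ for~$\hat D(\tau)$; (ii)~$T=t$, $X^1(t,\varepsilon)=1+\varepsilon\zeta(t)$, $X^0=0$ for~$\hat S(\zeta)$; and (iii)~$T=t$, $X^1=1$, $X^0(t,\varepsilon)=\varepsilon\chi(t)$ for~$\hat P(\chi)$. For each family I would substitute the parametric form into~\eqref{eq:PointTransformationBetweenGenBurgersKdVEqsGaugeC1A10a}--\eqref{eq:PointTransformationBetweenGenBurgersKdVEqsGaugeC1A10c}, Taylor-expand to first order in~$\varepsilon$, and read off the coefficients of $\p_t$, $\p_x$, $\p_u$, $\p_{A^j}$, $\p_{A^0}$, $\p_B$. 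For instance, family~(i) yields $1/T_t=1-\varepsilon\tau_t+O(\varepsilon^2)$, $1/T_t^2=1-2\varepsilon\tau_t+O(\varepsilon^2)$ and $T_{tt}/T_t=\varepsilon\tau_{tt}+O(\varepsilon^2)$, which immediately produces the stated form of~$\hat D(\tau)$; families~(ii) and~(iii) proceed analogously and recover~$\hat S(\zeta)$ and~$\hat P(\chi)$.

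Once the three generating families have been produced, completeness follows from the observation that an arbitrary infinitesimal equivalence transformation of the class~\eqref{eq:GenBurgersKdVEqsGaugedSubclass} corresponds to an arbitrary first-order variation $(\tau,\zeta,\chi)$ of the triple $(T,X^1,X^0)$ about the identity, and so is a sum $\hat D(\tau)+\hat S(\zeta)+\hat P(\chi)$ for suitable $\tau,\zeta,\chi$; the linear independence of the three families over the space of smooth functions of~$t$ is visible from their $\p_t$-, $\p_x$- and $\p_u$-components. The only step that needs real care is the bookkeeping for the $\p_{A^0}$- and $\p_B$-components in families~(i) and~(ii), because the formulas~\eqref{eq:PointTransformationBetweenGenBurgersKdVEqsGaugeC1A10b}--\eqref{eq:PointTransformationBetweenGenBurgersKdVEqsGaugeC1A10c} combine $T_t$, $T_{tt}$, $X^1_t$ and the arbitrary elements nonlinearly; this is routine Taylor expansion and I do not anticipate any conceptual obstacle.
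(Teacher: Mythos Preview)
Your proposal is correct and follows essentially the same approach as the paper: both derive~$\mathfrak g^\sim$ by differentiating one-parameter subgroups of the already-computed group~$G^\sim$ at the identity, perturbing each of the parameter functions~$T$, $X^1$, $X^0$ in turn while holding the others at their identity values. Your write-up is in fact slightly more explicit than the paper's about the specific parameterizations and the completeness argument.
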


\begin{proof}
Since we have already computed the usual equivalence group~$G^\sim$, the associated equivalence algebra~$\mathfrak g^\sim$ can be obtained in a straightforward deductive fashion. In particular, $\mathfrak g^\sim$ is spanned by vector fields representing the infinitesimal generators of one-parameter subgroups of the usual equivalence group~$G^\sim$. Thus, successively assuming one of the parameter functions~$T$, $X^1$ and $X^0$ to depend on a continuous group parameter~$\varepsilon$ 
(in such a manner that the identical transformation corresponds to the value~$\varepsilon=0$), 
we can obtain the coefficients of the infinitesimal generators of the form
$\hat Q=\tau\p_t+\xi\p_x+\eta\p_u+\phi^0\p_{A^0}+\sum_{j=2}^\EqOrd\phi^j\p_{A^j}+\psi\p_B$ by determining
\[
 \tau=\frac{\mathrm{d} \tilde t}{\mathrm{d} \varepsilon}\Big|_{\varepsilon=0},\ \ 
 \xi=\frac{\mathrm{d} \tilde x}{\mathrm{d} \varepsilon}\Big|_{\varepsilon=0},\ \ 
 \eta=\frac{\mathrm{d} \tilde u}{\mathrm{d} \varepsilon}\Big|_{\varepsilon=0},\ \ 
 \phi^0=\frac{\mathrm{d} \tilde A^0}{\mathrm{d} \varepsilon}\Big|_{\varepsilon=0},\ \ 
 \phi^j=\frac{\mathrm{d} \tilde A^j}{\mathrm{d} \varepsilon}\Big|_{\varepsilon=0},\ \ 
 \psi=\frac{\mathrm{d} \tilde B}{\mathrm{d} \varepsilon}\Big|_{\varepsilon=0}.
\]
This results in the generating vector fields~$\hat D(\tau)$, $\hat S(\zeta)$ and~$\hat P(\chi)$, 
which are associated to the parameter functions~$T$, $X^1$ and $U^0$, respectively.
\end{proof}

Since the class~\eqref{eq:GenBurgersKdVEqs} is mapped onto
the class~\eqref{eq:GenBurgersKdVEqsGaugedSubclass} by a family of equivalence transformations, 
and both the classes are normalized in the usual sense, 
the following assertion is obvious. 

\begin{proposition}\label{pro:GroupClassificationsOfGenBurgersKdVEqsAndGaugedGenBurgersKdVEqs} 
The group classification of the class~\eqref{eq:GenBurgersKdVEqs} reduces to 
that of its subclass~\eqref{eq:GenBurgersKdVEqsGaugedSubclass}. 
More specifically, any complete list of $G^\sim$-inequivalent Lie symmetry extensions 
in the class~\eqref{eq:GenBurgersKdVEqsGaugedSubclass} is 
a complete list of $G^\sim_{\mbox{\tiny\eqref{eq:GenBurgersKdVEqs}}}$-inequivalent Lie symmetry extensions 
in the class~\eqref{eq:GenBurgersKdVEqs}.
\end{proposition}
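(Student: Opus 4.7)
The plan is to combine three ingredients already on the table: both classes are normalized in the usual sense (Proposition~\ref{pro:EquivGroupoidOfGenBurgersKdVEqs} and Theorem~\ref{thm:EquivalenceGroupGenBurgersKdVEqsGaugeC1A10}); and Section~\ref{sec:EquivalenceGroupoidGenBurgersKdVEqs} already displays an explicit family of elements of $G^\sim_{\mbox{\tiny\eqref{eq:GenBurgersKdVEqs}}}$ (first the stretching $\tilde x=\int_{x_0}^x\mathrm{d}y/C(t,y)$, then the translation $\tilde u=u-A^1$) which maps every equation $\mathcal L_\theta$ from~\eqref{eq:GenBurgersKdVEqs} onto some equation $\mathcal L_\kappa$ in the gauged subclass~\eqref{eq:GenBurgersKdVEqsGaugedSubclass}. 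This supplies the surjectivity needed for the reduction.

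First I would show that, when restricted to gauged tuples, the two equivalence relations coincide. One direction is immediate because $G^\sim$ acts on the gauged subclass precisely as the subgroup of $G^\sim_{\mbox{\tiny\eqref{eq:GenBurgersKdVEqs}}}$ preserving the constraints $(C,A^1)=(1,0)$. Conversely, if $\kappa,\kappa'$ both satisfy $(C,A^1)=(1,0)$ and $\kappa'=\mathcal T\kappa$ for some $\mathcal T\in G^\sim_{\mbox{\tiny\eqref{eq:GenBurgersKdVEqs}}}$, then $(\kappa,\kappa',\pi_*\mathcal T)$ is an admissible transformation of the gauged subclass; since that subclass is normalized in the usual sense, this admissible transformation is induced by some $\mathcal T'\in G^\sim$, hence $\kappa$ and $\kappa'$ are $G^\sim$-equivalent. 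Thus the $G^\sim_{\mbox{\tiny\eqref{eq:GenBurgersKdVEqs}}}$-orbits and the $G^\sim$-orbits on the gauged stratum are in natural bijection, and together with the preceding paragraph this bijection meets every $G^\sim_{\mbox{\tiny\eqref{eq:GenBurgersKdVEqs}}}$-orbit in~$\mathcal S$.

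Next I would invoke the fact that any equivalence transformation conjugates maximal Lie invariance algebras to maximal Lie invariance algebras, and in particular sends the kernel algebra $\mathfrak g^\cap_{\mbox{\tiny\eqref{eq:GenBurgersKdVEqs}}}$ onto itself; restricting to gauged tuples identifies it with the kernel algebra of~\eqref{eq:GenBurgersKdVEqsGaugedSubclass}. Consequently, the defining property $\mathfrak g_\theta\supsetneq\mathfrak g^\cap$ of a Lie symmetry extension is invariant under the gauging transformation, and a complete list of $G^\sim$-inequivalent Lie symmetry extensions in~\eqref{eq:GenBurgersKdVEqsGaugedSubclass} yields, upon lifting to any preimage in~\eqref{eq:GenBurgersKdVEqs}, a complete list of $G^\sim_{\mbox{\tiny\eqref{eq:GenBurgersKdVEqs}}}$-inequivalent Lie symmetry extensions there.

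I do not expect a serious obstacle; the argument is essentially a repackaging of the normalization results. The one subtle point is the verification that the admissible transformation $(\kappa,\kappa',\pi_*\mathcal T)$ really lives in the equivalence groupoid of the gauged subclass rather than merely of the larger one, which is what justifies appealing to the normalization of~\eqref{eq:GenBurgersKdVEqsGaugedSubclass} in the previous step. This follows because source and target already satisfy $(C,A^1)=(1,0)$, so the triple is indeed an admissible transformation in the gauged subclass.
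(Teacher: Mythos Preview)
Your approach is correct and coincides with the paper's: the paper does not give a detailed proof but simply declares the assertion ``obvious'' from the two facts you invoke---that both classes are normalized in the usual sense and that the class~\eqref{eq:GenBurgersKdVEqs} is mapped onto the subclass~\eqref{eq:GenBurgersKdVEqsGaugedSubclass} by a family of equivalence transformations. Your proposal is precisely an unpacking of this obviousness, and the one subtle point you flag (that the admissible transformation between two gauged tuples lies in the groupoid of the gauged subclass, so that normalization of~\eqref{eq:GenBurgersKdVEqsGaugedSubclass} applies) is indeed the crux of the argument.
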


\section{Alternative gauges}\label{sec:AlternativeGauges}

We show that the gauge $C=1$ is the best initial gauge for the class~\eqref{eq:GenBurgersKdVEqs}
and the gauge $(C,A^1)=(1,0)$ is the best for singling out a subclass
in order to carry out the group classification.

An obvious choice for an alternative gauge is $A^\EqOrd=1$. 
It was used in~\cite{bihl16a} as the basic gauge in the course of group classification of 
linear equations of the form~\eqref{eq:GenBurgersKdVEqs}, for which $C=0$.
The $A^\EqOrd$-component of equivalence transformations 
in the class~\eqref{eq:GenBurgersKdVEqs} is 
\[
\tilde A^\EqOrd=\frac{(X_x)^\EqOrd}{T_t}A^\EqOrd.
\]
If~$A^\EqOrd=1$ and~$\tilde A^\EqOrd=1$, then the parameters of the corresponding admissible transformations
given in Proposition~\ref{pro:EquivGroupoidOfGenBurgersKdVEqs}
satisfy the constraint $(X_x)^\EqOrd=T_t$, i.e., $X=X^1(t)x+X^0(t)$, where $(X^1)^\EqOrd=T_t$,
which makes the parameterization of the usual equivalence group of the corresponding subclass
more complicated than using the gauge~$C=1$.

\begin{proposition}\label{pro:EquivGroupoidOfGenBurgersKdVEqsGaugeAr1}
The subclass of the class~\eqref{eq:GenBurgersKdVEqs}
singled out by the constraint $A^\EqOrd=1$ is normalized in the usual sense.
Its usual equivalence group is constituted by the transformations of the form
\begin{gather*}
\tilde t=T(t),\quad \tilde x=X^1(t)x+X^0(t),\quad \tilde u=U^1(t)u+U^0(t,x),
\\
\tilde A^l=\frac{(X^1)^l}{T_t}A^l,\quad
\tilde A^1=\frac{X^1}{T_t}\left(A^1+\frac{U^0}{U^1}C-\frac{X^1_tx+X^0_t}{X^1}\right),\quad
\tilde A^0=\frac{1}{T_t}\left(A^0+\frac{U^1_t}{U^1}+\frac{U^0_x}{U^1}C\right),
\\
\tilde B=\frac{U^1}{T_t}B+\frac{U^0_t}{T_t}
+\frac{U^0_x}{T_t}\left(\frac{U^0}{U^1}C-\frac{X^1_tx+X^0_t}{X^1}\right)
-\frac{U^0_\EqOrd}{(X^1)^\EqOrd}-\sum_{k=0}^{\EqOrd-1}\frac{U^0_k}{(X^1)^k}\tilde A^k,\quad
\tilde C=\frac{X^1}{T_tU^1}C,
\end{gather*}
where $l=2,\dots,\EqOrd-1$,
and $T=T(t)$, $X^0=X^0(t)$, $U^1=U^1(t)$ and $U^0=U^0(t,x)$ are arbitrary smooth functions of their arguments
such that $T_tU^1\ne0$,
as well as $X^1=(T_t)^{1/\EqOrd}$ if $\EqOrd$ is odd and
$T_t>0$, $X^1=\varepsilon(T_t)^{1/\EqOrd}$ with $\varepsilon=\pm1$ if $\EqOrd$ is even.
\end{proposition}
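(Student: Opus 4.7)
The plan is to leverage the normalization of the ambient class~\eqref{eq:GenBurgersKdVEqs} rather than compute admissible transformations from scratch. Since that class is normalized in the usual sense by Proposition~\ref{pro:EquivGroupoidOfGenBurgersKdVEqs}, every admissible transformation of its subclass with $A^\EqOrd=1$ is induced by a transformation of the form $\tilde t=T(t)$, $\tilde x=X(t,x)$, $\tilde u=U^1(t)u+U^0(t,x)$ with $T_tX_xU^1\ne0$. The gauge condition $A^\EqOrd=\tilde A^\EqOrd=1$ combined with the transformation law $\tilde A^\EqOrd=(X_x)^\EqOrd A^\EqOrd/T_t$ immediately forces $(X_x)^\EqOrd=T_t$. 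Since the right-hand side depends on~$t$ alone, so does $(X_x)^\EqOrd$; hence $X_{xx}=0$ and $X=X^1(t)x+X^0(t)$. The parity of $\EqOrd$ then governs the solvability: for odd $\EqOrd$ the real $\EqOrd$th root is unique, so $X^1=(T_t)^{1/\EqOrd}$ with no sign ambiguity; for even $\EqOrd$ we must have $T_t>0$ and $X^1=\varepsilon(T_t)^{1/\EqOrd}$ with $\varepsilon\in\{-1,+1\}$.

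With the form of $X$ determined, I would derive the transformation components of the remaining arbitrary elements $A^0,A^1,\dots,A^{\EqOrd-1},B,C$ via the direct method. Because $X_x=X^1(t)$ is independent of~$x$, the prolongation formulas simplify sharply to $\tilde u_k=U^1u_k/(X^1)^k+U^0_k/(X^1)^k$ for $k\geqslant1$, so no appeal to Fa\`{a} di Bruno's formula is needed. Substituting these expressions together with the formulas for $\tilde u_{\tilde t}$ and $\tilde u_{\tilde x}$ from the start of Section~\ref{sec:EquivalenceGroupoidGenBurgersKdVEqs} into the transformed equation $\tilde u_{\tilde t}+\tilde C\tilde u\tilde u_{\tilde x}=\sum_k\tilde A^k\tilde u_k+\tilde B$, then eliminating $u_t$ by means of the source equation, and finally splitting with respect to the parametric derivatives~$u_0,\dots,u_\EqOrd$ produces the listed formulas: the coefficient of $uu_x$ delivers $\tilde C$; each monomial $u_k$ with $k=2,\dots,\EqOrd-1$ delivers $\tilde A^k$; the coefficient of $u_x$ delivers $\tilde A^1$ (after invoking the already-found expression for $\tilde C$); the coefficient of~$u$ delivers $\tilde A^0$; and the remaining derivative-free terms deliver $\tilde B$.

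To conclude normalization in the usual sense, I would observe that the splitting produces no further constraints on $T,X^0,U^1,U^0$, and that for any such parameter functions satisfying the stated regularity conditions and any admissible source tuple with $A^\EqOrd=1$ the computed target tuple again satisfies $\tilde A^\EqOrd=1$, $\tilde C\tilde A^\EqOrd\ne0$, and the independence conditions on the arbitrary elements. Hence the extension to the space of arbitrary elements is well defined, projectable, and closed under composition, so the described transformations constitute the usual equivalence group of the subclass and exhaust its equivalence groupoid. The only point requiring minor care is the correct handling of the admissible signs of~$X^1$ relative to~$T_t$ for even $\EqOrd$; the remainder of the argument is a routine computation made tractable precisely by $X$ being affine in~$x$.
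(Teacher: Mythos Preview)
Your proposal is correct and takes essentially the same approach as the paper. The paper itself only sketches the key step preceding the proposition---deriving $(X_x)^{\EqOrd}=T_t$ from the gauge $A^\EqOrd=\tilde A^\EqOrd=1$ and hence $X=X^1(t)x+X^0(t)$ with $(X^1)^\EqOrd=T_t$---and then states the result without further detail; your argument fills in exactly the remaining direct-method computation that the paper leaves implicit, in the same spirit as the derivation of Theorem~\ref{thm:EquivalenceGroupGenBurgersKdVEqsGaugeC1}.
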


In contrast to the class~\eqref{eq:GenBurgersKdVEqsGaugeSubclassC1},
the additional gauge $A^1=0$ slightly worsens the normalization property.
It leads to the appearance of the arbitrary element~$C$ in the $u$-component of admissible transformations
since then we have
\[
U^0=\frac{X^1_tx+X^0_t}{X^1C}U^1.
\]
Denote by~$\theta'$ the arbitrary-element tuple reduced by the double gauge, 
\[\theta'=(A^0,A^2,\dots,A^{\EqOrd-1},B,C).\] 

\begin{proposition}\label{pro:EquivGroupoidOfGenBurgersKdVEqsGaugeAr1A10}
The equivalence groupoid of the subclass~$\mathcal A_1$ of the class~\eqref{eq:GenBurgersKdVEqs}
singled out by the constraints $A^\EqOrd=1$ and $A^1=0$
consists of the triples $(\theta',\tilde\theta',\varphi)$'s, 
where the point transformation~$\varphi$ is of the form 
\begin{subequations}\label{eq:PointTransOfGenBurgersKdVEqsGaugeAr1A10}
\begin{gather}
\tilde t=T(t),\quad \tilde x=X^1(t)x+X^0(t),\quad \tilde u=U^1(t)u+U^0,\quad U^0:=\frac{X^1_tx+X^0_t}{X^1C}U^1,
\label{eq:PointTransOfGenBurgersKdVEqsGaugeAr1A10a}
\end{gather}
the arbitrary-element tuples~$\theta'$ and~$\tilde\theta'$ are related according to
\begin{gather}
\tilde A^l=\frac{(X^1)^l}{T_t}A^l,\quad
\tilde A^0=\frac{1}{T_t}\left(A^0+\frac{U^1_t}{U^1}+\frac{U^0_x}{U^1}C\right),\quad
\tilde C=\frac{X^1}{T_tU^1}C,
\label{eq:PointTransOfGenBurgersKdVEqsGaugeAr1A10b}
\\
\tilde B=\frac{U^1}{T_t}B+\frac{U^0_t}{T_t}
+\frac{U^0_x}{T_t}\left(\frac{U^0}{U^1}C-\frac{X^1_tx+X^0_t}{X^1}\right)
-\frac{U^0_\EqOrd}{(X^1)^\EqOrd}-\sum_{l=2}^{\EqOrd-1}\frac{U^0_l}{(X^1)^l}\tilde A^l-U^0\tilde A^0,
\label{eq:PointTransOfGenBurgersKdVEqsGaugeAr1A10c}
\end{gather}
\end{subequations}
with $l=2,\dots,\EqOrd-1$,
and $T=T(t)$, $X^0=X^0(t)$ and $U^1=U^1(t)$ being arbitrary smooth functions of~$t$
such that $T_tU^1\ne0$,
as well as $X^1=(T_t)^{1/\EqOrd}$ if $\EqOrd$ is odd and
$T_t>0$, $X^1=\varepsilon(T_t)^{1/\EqOrd}$ with $\varepsilon=\pm1$ if $\EqOrd$ is even.
\end{proposition}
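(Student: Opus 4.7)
The strategy is to deduce the claim directly from Proposition~\ref{pro:EquivGroupoidOfGenBurgersKdVEqsGaugeAr1}, since $\mathcal A_1$ is the subclass singled out by one further algebraic constraint, $A^1=0$, of the subclass of~\eqref{eq:GenBurgersKdVEqs} with $A^\EqOrd=1$, which is already known to be normalized in the usual sense. Because normalization is hereditary in the weak form that the equivalence groupoid of any subclass is contained in the restriction of the equivalence groupoid of the ambient class to source/target tuples lying in the subclass, every admissible transformation of $\mathcal A_1$ is necessarily of the form displayed in Proposition~\ref{pro:EquivGroupoidOfGenBurgersKdVEqsGaugeAr1}, subject to the additional requirement that both the source and target equations satisfy the vanishing of the $A^1$-coefficient.

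The first step is therefore to take the expression
\[
\tilde A^1=\frac{X^1}{T_t}\left(A^1+\frac{U^0}{U^1}C-\frac{X^1_tx+X^0_t}{X^1}\right)
\]
from Proposition~\ref{pro:EquivGroupoidOfGenBurgersKdVEqsGaugeAr1} and impose $A^1=\tilde A^1=0$. Since $C\ne0$ and $U^1\ne0$ by assumption, solving the resulting algebraic equation uniquely for the transformation component $U^0$ yields
\[
U^0=\frac{X^1_tx+X^0_t}{X^1C}\,U^1,
\]
which is precisely~\eqref{eq:PointTransOfGenBurgersKdVEqsGaugeAr1A10a}. The remaining components~\eqref{eq:PointTransOfGenBurgersKdVEqsGaugeAr1A10b} and~\eqref{eq:PointTransOfGenBurgersKdVEqsGaugeAr1A10c} are obtained by copying the corresponding formulas from Proposition~\ref{pro:EquivGroupoidOfGenBurgersKdVEqsGaugeAr1} (those for $\tilde A^l$ with $l=2,\dots,\EqOrd-1$, $\tilde A^0$, $\tilde C$ and $\tilde B$), substituting the above expression for $U^0$ wherever it appears, and grouping the term $-U^0\tilde A^0$ in $\tilde B$ by combining the summand $k=0$ from the sum of Proposition~\ref{pro:EquivGroupoidOfGenBurgersKdVEqsGaugeAr1} with $\tilde A^0$. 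Conversely, any transformation of this form maps an equation in $\mathcal A_1$ to an equation in $\mathcal A_1$, since by construction both the source and target satisfy the constraints $A^\EqOrd=\tilde A^\EqOrd=1$ and $A^1=\tilde A^1=0$.

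The one subtle point—and the reason this proposition is stated as a description of the equivalence groupoid rather than of a usual equivalence group—is that whenever $X^1_tx+X^0_t\not\equiv0$, the component $U^0$ depends on the source arbitrary element $C$ in an essential, nonlocal-to-$(t,x,u)$ way. Thus the displayed families of transformations cannot be viewed as parameterized by $(T,X^1,X^0,U^1)$ alone through a projectable point transformation of $(t,x,u,\theta')$-space: the transformation acting on equation variables itself varies with~$\theta'$. This is exactly the mechanism that later justifies calling the associated equivalence group only generalized, and it is the only nontrivial conceptual point; the algebraic manipulations themselves are mechanical substitutions from Proposition~\ref{pro:EquivGroupoidOfGenBurgersKdVEqsGaugeAr1}.
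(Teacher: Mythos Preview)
Your proof is correct and follows essentially the same approach as the paper: the paper derives Proposition~\ref{pro:EquivGroupoidOfGenBurgersKdVEqsGaugeAr1A10} directly from Proposition~\ref{pro:EquivGroupoidOfGenBurgersKdVEqsGaugeAr1} by imposing $A^1=\tilde A^1=0$ in the $\tilde A^1$-component and solving for~$U^0$, then substituting this expression into the remaining transformation formulas. Your additional remarks on why the resulting description is that of a groupoid rather than a usual equivalence group anticipate exactly the discussion the paper gives immediately after the proposition.
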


It is obvious that the subclass~$\mathcal A_1$ is not normalized in the usual sense. 
Its usual equivalence group is constituted by the point transformations 
of the form~\eqref{eq:PointTransOfGenBurgersKdVEqsGaugeAr1A10}
in the joint space of the variables $(t,x,u)$ and the arbitrary elements~$\theta'$, 
where parameters satisfy more constraints, $T_{tt}=X^0_t=0$, and thus $X^1_t=0$ and~$U^0=0$. 

All the components of~\eqref{eq:PointTransOfGenBurgersKdVEqsGaugeAr1A10} 
locally depend on~$C$, and, moreover, the expressions for~$\tilde A^0$ and~$\tilde B$ involve 
derivatives of~$C$ with respect to~$t$ and~$x$. 
This is why, to interpret~\eqref{eq:PointTransOfGenBurgersKdVEqsGaugeAr1A10} 
as generalized equivalence transformations, 
we need to formally extend the arbitrary-element tuple~$\theta'$ 
with the derivatives of~$C$ as new arbitrary elements,
$Z^0:=C_t$ and~$Z^k:=C_k$, $k=1,\dots,\EqOrd$,
and prolong equivalence transformations to them, 
\begin{gather}\label{eq:PointTransOfGenBurgersKdVEqsGaugeAr1A10d}
\tilde Z^0=\frac{X^1}{T_t^2U^1}Z^0+\left(\frac{X^1}{T_tU^1}\right)_t\frac C{T_t}, \quad
\tilde Z^k=\frac{(X^1)^{1-k}}{T_t^2U^1}Z^k, \quad k=1,\dots,\EqOrd.
\end{gather}
The derivatives of~$U^0$ in the expressions for~$\tilde A^0$ and~$\tilde B$ should 
be expanded and then derivatives of~$C$ should be replaced by the corresponding~$Z$'s.

We denote by~$\bar{\mathcal A}_1$ 
the class of equations of the form~\eqref{eq:GenBurgersKdVEqs} with $(A^\EqOrd,A^1)=(1,0)$ 
and the extended arbitrary-element tuple~$\bar\theta'=(A^0,A^2,\dots,A^{\EqOrd-1},B,C,Z^0,\dots,Z^\EqOrd)$, 
where the relations defining~$Z^0$, \dots, $Z^\EqOrd$ are assumed as 
additional auxiliary equations for arbitrary elements. 

\begin{theorem}\label{thm:GenEquivGroupOfGenBurgersKdVEqsGaugeAr1A10}
The class~$\bar{\mathcal A}_1$ is normalized in the generalized sense.
Its generalized equivalence group~\smash{$\bar G^\sim_{\bar{\mathcal A}_1}$} coincides with its effective generalized equivalence group 
and consists of the point transformations 
in the joint space of the variables $(t,x,u)$ and the arbitrary elements~$\bar\theta'$ with components 
of the form~\eqref{eq:PointTransOfGenBurgersKdVEqsGaugeAr1A10},~\eqref{eq:PointTransOfGenBurgersKdVEqsGaugeAr1A10d} 
and the same constraints for parameters as in Proposition~\ref{pro:EquivGroupoidOfGenBurgersKdVEqsGaugeAr1A10}, 
where partial derivatives of~$U^0$ are replaced by the corresponding restricted total derivatives with 
$\bar{\mathrm D}_t=\p_t+Z^0\p_C$ and $\bar{\mathrm D}_x=\p_x+Z^1\p_C+Z^2\p_{Z^1}+\dots+Z^\EqOrd\p_{Z^{\EqOrd-1}}$.
\end{theorem}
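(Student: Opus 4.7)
The plan is to re-interpret the admissible transformations given by Proposition~\ref{pro:EquivGroupoidOfGenBurgersKdVEqsGaugeAr1A10} as point transformations in the joint space of variables and extended arbitrary elements, and then argue separately about normalization and effectiveness.

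First I would observe that the equivalence groupoids of~$\mathcal A_1$ and~$\bar{\mathcal A}_1$ coincide: since the auxiliary relations $Z^0=C_t$ and $Z^k=C_k$ recover $Z^0,\dots,Z^\EqOrd$ uniquely from~$C$, Proposition~\ref{pro:EquivGroupoidOfGenBurgersKdVEqsGaugeAr1A10} already gives a complete description of~$\mathcal G^\sim_{\bar{\mathcal A}_1}$. The problem is then purely to lift these admissible triples $(\theta',\tilde\theta',\varphi)$ to point transformations in the space of $(t,x,u,\bar\theta')$.

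Next, starting from the formulas \eqref{eq:PointTransOfGenBurgersKdVEqsGaugeAr1A10a}--\eqref{eq:PointTransOfGenBurgersKdVEqsGaugeAr1A10c}, I would expand the derivatives $U^0_t,U^0_x,U^0_2,\dots,U^0_\EqOrd$ entering the right-hand sides for $\tilde A^0$ and $\tilde B$ by the chain rule. Because $U^0=(X^1_tx+X^0_t)U^1/(X^1C)$ depends on~$t,x$ only through $C(t,x)$ together with explicit polynomial factors in~$x$, only the derivatives $C_t,C_1,\dots,C_\EqOrd$ (and no mixed or higher-$t$ derivatives) appear. Replacing each $C_t$ by~$Z^0$ and each $C_k$ by~$Z^k$ converts these formulas into rational expressions in the extended tuple~$\bar\theta'$ and the group parameters. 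Adjoining the prolongation~\eqref{eq:PointTransOfGenBurgersKdVEqsGaugeAr1A10d}, obtained by differentiating the law $\tilde C=(X^1/(T_tU^1))C$ with respect to~$\tilde t$ and~$\tilde x$ via the inverse Jacobian, I would then check that the restricted total derivative operators $\bar{\mathrm D}_t$ and~$\bar{\mathrm D}_x$ reproduce exactly these substitutions. This identifies a set of point transformations in $(t,x,u,\bar\theta')$-space which induce every admissible transformation in~$\mathcal G^\sim_{\bar{\mathcal A}_1}$, establishing generalized normalization.

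Two routine verifications remain: that the class of transformations is closed under composition and inversion (so it is genuinely a Lie pseudogroup), and that it preserves the auxiliary system defining~$Z^0,\dots,Z^\EqOrd$, i.e., $\tilde Z^0=\tilde C_{\tilde t}$ and $\tilde Z^k=\tilde C_{\tilde x\cdots\tilde x}$, which should follow by direct computation from~\eqref{eq:PointTransOfGenBurgersKdVEqsGaugeAr1A10d} and the transformation of~$\tilde C$. The minimality of the extension of the arbitrary-element tuple by precisely $Z^0,\dots,Z^\EqOrd$ follows from the fact that each $Z^k$ genuinely appears in the transformation rule for $\tilde A^0$ or $\tilde B$ for some choice of parameters.

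For the claim that $\bar G^\sim_{\bar{\mathcal A}_1}$ coincides with its own effective generalized equivalence group, I would show that $\bar G^\sim_{\bar{\mathcal A}_1}$ itself is effective: any triple $(\theta',\tilde\theta',\varphi)\in\mathcal G^\sim_{\bar{\mathcal A}_1}$ uniquely determines the parameter tuple $(T,X^1,X^0,U^1)$ through the $(t,x)$-components of~$\varphi$ and the parity constraint on~$X^1$, and hence corresponds to exactly one element of $\bar G^\sim_{\bar{\mathcal A}_1}$. Thus no proper subgroup can induce the whole subgroupoid~$\bar{\mathcal H}$.

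I expect the main technical obstacle to be the bookkeeping in the chain-rule expansion of $U^0_k$ for general~$\EqOrd$, verifying that only $Z^0,\dots,Z^\EqOrd$ (and no higher-order derivatives of~$C$) intervene and that the resulting compact expressions agree with those produced by iterated application of $\bar{\mathrm D}_t$ and $\bar{\mathrm D}_x$. Once this combinatorial identification is in place, the remaining steps are essentially those already carried out for the usual-equivalence cases in Theorems~\ref{thm:EquivalenceGroupGenBurgersKdVEqsGaugeC1} and~\ref{thm:EquivalenceGroupGenBurgersKdVEqsGaugeC1A10}.
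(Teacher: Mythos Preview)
Your construction of the group~$G$ of transformations~\eqref{eq:PointTransOfGenBurgersKdVEqsGaugeAr1A10},~\eqref{eq:PointTransOfGenBurgersKdVEqsGaugeAr1A10d} and the verification that~$G$ generates the whole equivalence groupoid are fine, and your uniqueness argument does show that~$G$ is an effective generalized equivalence group. The gap is that you never prove the reverse containment $\bar G^\sim_{\bar{\mathcal A}_1}\subseteq G$, and your effectiveness paragraph silently assumes it. A~priori, an element of the \emph{full} generalized equivalence group~$\bar G^\sim_{\bar{\mathcal A}_1}$ is an arbitrary point transformation of $(t,x,u,\bar\theta')$ preserving the class, so its parameter functions $T$, $X^0$, $X^1$, $U^1$ could depend on the arbitrary elements $A^0,\dots,A^{\EqOrd-1},B,C,Z^0,\dots,Z^\EqOrd$ as well as on~$t$. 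Your uniqueness argument (``each admissible triple determines $(T,X^1,X^0,U^1)$'') only pins down the \emph{values} of these parameters at each fixed~$\bar\theta'$; it does not exclude a strictly larger group whose elements realise the same admissible triples but carry genuine $\bar\theta'$-dependence that cancels upon evaluation. That is exactly the phenomenon occurring for the classes~$\bar{\mathcal K}_2$ and~$\bar{\mathcal K}_3$ later in the paper, where the effective group is a proper subgroup of the generalized equivalence group.

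The paper closes this gap by observing that any element of~$\bar G^\sim_{\bar{\mathcal A}_1}$ must, upon substitution of each particular~$\bar\theta'$, yield an admissible transformation of the form in Proposition~\ref{pro:EquivGroupoidOfGenBurgersKdVEqsGaugeAr1A10}; hence its components have that form with $T,X^0,X^1$ possibly depending on~$\bar\theta'$ and partial derivatives replaced by total derivatives prolonged to the arbitrary elements. The constraints ${\rm D}_xT={\rm D}_xX^0={\rm D}_xX^1=0$ (with the prolonged operator~${\rm D}_x$) can then be split with respect to the unconstrained derivatives of the arbitrary elements, forcing $T,X^0,X^1$ to depend on~$t$ alone. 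This splitting step is what rules out the extra dependence and gives $\bar G^\sim_{\bar{\mathcal A}_1}=G$; you should add it.
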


\begin{proof}
The point transformations of the above form constitute a group~$G$, 
which generates the entire equivalence groupoid of the class~$\bar{\mathcal A}_1$ 
and is minimal among point-transformation groups in the joint space of $(t,x,u,\bar\theta')$ 
that have this generation property. 
Therefore, $G$ is an effective generalized equivalence group of the class~$\bar{\mathcal A}_1$. 
We are going to prove that the group~$G$ coincides with~\smash{$\bar G^\sim_{\bar{\mathcal A}_1}$}.
Indeed, substituting every particular value of~$\bar\theta'$ 
to any element of~\smash{$\bar G^\sim_{\bar{\mathcal A}_1}$} gives 
an admissible transformation of the class~$\bar{\mathcal A}_1$. 
This implies that elements of~\smash{$\bar G^\sim_{\bar{\mathcal A}_1}$} 
are of the form~\eqref{eq:PointTransOfGenBurgersKdVEqsGaugeAr1A10},~\eqref{eq:PointTransOfGenBurgersKdVEqsGaugeAr1A10d}, 
where the parameter functions~$T$, $X^0$ and~$X^1$ may depend on arbitrary elements, 
and the partial derivatives of these functions are replaced 
by the corresponding total derivatives prolonged to the arbitrary elements of the class~$\bar{\mathcal A}_1$.
At the same time, these parameters satisfy the condition 
${\rm D}_xT={\rm D}_xX^0={\rm D}_xX^1=0$ 
with the prolonged total derivative operator~${\rm D}_x$.
This condition implies via splitting with respect to unconstrained derivatives of arbitrary elements 
that the parameters~$T$, $X^0$ and~$X^1$ are functions of~$t$ only.
Hence \smash{$\bar G^\sim_{\bar{\mathcal A}_1}=G$}.
\end{proof}

Therefore, the gauge $(C,A^1)=(1,0)$ is better than the gauge $(A^\EqOrd,A^1)=(1,0)$.

\begin{remark}\label{rem:GenEquivGroupOfGenBurgersKdVEqsGaugeAr1A10}
To the best of our knowledge,
Theorem~\ref{thm:GenEquivGroupOfGenBurgersKdVEqsGaugeAr1A10}
provides the first example for a generalized equivalence group
containing transformations
whose components for equation variables depend on a nonconstant arbitrary element. 
This is also an example of a generalized equivalence group being effective itself, 
and thus the corresponding class of differential equations admits 
a unique effective generalized equivalence group.
\end{remark}

A more complicated example of a generalized equivalence group
is given by the subclass~$\mathcal A_0$ of the class~\eqref{eq:GenBurgersKdVEqs}
singled out by the mere constraint $A^1=0$.
The $A^1$-component of equivalence transformations of the class~\eqref{eq:GenBurgersKdVEqs}
takes the form
\[
\tilde A^1=\frac{X_x}{T_t}A^1+\frac{X_x}{T_t}\frac{U^0}{U^1}C-\frac{X_t}{T_t}
-\sum_{j=2}^\EqOrd\tilde A^jX_x\left(\frac1{X_x}\partial_x\right)^{j-1}\frac1{X_x},
\]
where each~$\tilde A^j$, $j=2,\dots,\EqOrd$, is a combination of $A^i$, $i=j,\dots,\EqOrd$
with coefficients expressed via $T_t$ and derivatives of~$X$ with respect to~$x$.
Substituting the expression for~$U^0$ implied by the gauge $A^1=0$,
\[
U^0=\frac{X_tU^1}{X_xC}
+\frac{T_tU^1}{C}\sum_{j=2}^\EqOrd\tilde A^j\left(\frac1{X_x}\partial_x\right)^{j-1}\frac1{X_x},
\]
into the general form of admissible transformations of the class~\eqref{eq:GenBurgersKdVEqs} 
and neglecting the relation between~$A^1$ and~$\tilde A^1$,
we get the elements of the equivalence groupoid of the subclass~$\mathcal A_0$. 
Therefore, this subclass is not normalized in the usual sense, 
and its usual equivalence group is isomorphic to 
the subgroup of the group~$G^\sim_{\mbox{\tiny\eqref{eq:GenBurgersKdVEqs}}}$ 
singled out by constraining group parameters with $X_{xx}=X_t=0$.
Similarly to Theorem~\ref{thm:GenEquivGroupOfGenBurgersKdVEqsGaugeAr1A10},
we can consider the counterpart~$\bar{\mathcal A}_0$ of the subclass~$\mathcal A_0$, 
where the tuple of arbitrary elements $(A^0,A^2,\dots,A^\EqOrd,B,C)$
is formally extended 
with the derivatives~$C_t$, $C_k$, $A^j_t$ and~$A^j_k$, $j=2,\dots,\EqOrd$, $k=1,\dots,\EqOrd$.
Then the expressions for transformational parts of admissible transformations of~$\mathcal A_0$
and their relations between initial and transformed arbitrary elements 
including the prolongation to the above derivatives 
give the components of the transformations 
constituting a group~$G$, 
which is obviously an effective generalized equivalence group~$G$ of the class~$\bar{\mathcal A}_0$.
Thus, the class~$\bar{\mathcal A}_0$ is also normalized in the generalized sense. 

Note that the entire generalized equivalence group~\smash{$\bar G^\sim_{\bar{\mathcal A}_0}$} 
of the class~$\bar{\mathcal A}_0$ coincides with its effective generalized equivalence group~$G$.
Indeed, in view of the description of the equivalence groupoid of the subclass~$\mathcal A_0$, 
elements of~\smash{$\bar G^\sim_{\bar{\mathcal A}_0}$} 
are of the form similar to that of elements of~$G$, 
where the group parameters~$T$, $X$ and~$U^1$ may also depend on arbitrary elements of the subclass~$\mathcal A_0$, 
and their partial derivatives in~$t$ and~$x$ are replaced 
by the corresponding total derivatives prolonged to the arbitrary elements of the class~$\bar{\mathcal A}_0$. 
At the same time, the condition ${\rm D}_xT={\rm D}_xU^1=0$ 
with the prolonged total derivative operator~${\rm D}_x$ 
implies that the parameter functions~$T$ and~$U^1$ still depend at most on~$t$. 
The corresponding expression for~$U^0$ involves the derivative~${\rm D}_x^\EqOrd X$, 
and hence the transformation component for~$B$ necessarily contains 
the  derivative~${\rm D}_x^{2\EqOrd}X$. 
Splitting this component with respect to the $2\EqOrd$th order $x$-derivatives of all arbitrary elements, 
which are not constrained, 
we derive that in fact the parameter function~$X$ also does not depend on arbitrary elements.

\section{Preliminary analysis of Lie symmetries}\label{sec:DetEqsLieSymmetriesGenBurgersKdVEqs}

We compute the maximal Lie invariance group of an equation~$\mathcal L_\kappa$ 
from the class~\eqref{eq:GenBurgersKdVEqsGaugedSubclass} using the infinitesimal method. 
For this, we define the generators of one-parameter point symmetry groups of~$\mathcal L_\kappa$ through $Q=\tau\p_t+\xi\p_x+\eta\p_u$
with the components~$\tau$, $\xi$ and~$\eta$ depending on $(t,x,u)$. 
The infinitesimal invariance criterion reads
\[
 Q^{(\EqOrd)}\bigg(u_t+uu_x-\sum_{j=2}^\EqOrd A^ju_j-A^0u-B\bigg)=0 \quad\text{for all solutions of~$\mathcal L_\kappa$}.
\]
The $\EqOrd$th prolongation $Q^{(\EqOrd)}$ of the vector field~$Q$ is given by
$
 Q^{(\EqOrd)}=Q+\sum_{0<|\alpha|\leqslant \EqOrd}\eta^\alpha\p_{u_\alpha}.
$
Recall that $\alpha=(\alpha_1,\alpha_2)$ is a multi-index, 
$\alpha_1,\alpha_2\in\mathbb N\cup\{0\}$, $|\alpha|=\alpha_1+\alpha_2$, 
and $u_{\alpha}=\partial^{|\alpha|}u/\partial t^{\alpha_1}\partial x^{\alpha_2}$. 
The coefficients~$\eta^\alpha$ in the prolonged vector fields~$Q^{(\EqOrd)}$ 
are obtainable from the general prolongation formula~\cite[Theorem 2.36]{olve86Ay},
\[
\eta^\alpha=\mathrm D^\alpha\left(\eta-\tau u_t-\xi u_x\right)+\tau u_{\alpha+\delta_1}+\xi u_{\alpha+\delta_2},
\]
where 
$\mathrm D^\alpha=\mathrm D_t^{\alpha_1}\mathrm D_x^{\alpha_2}$, 
$\mathrm D_t=\partial_t+\sum_\alpha u_{\alpha+\delta_1}\partial_{u_\alpha}$ and 
$\mathrm D_x=\partial_x+\sum_\alpha u_{\alpha+\delta_2}\partial_{u_\alpha}$ 
are the operators of total differentiation with respect to~$t$ and~$x$, respectively, 
and $\delta_1=(1,0)$ and~$\delta_2=(0,1)$.

The infinitesimal invariance criterion yields
\begin{align}\label{eq:InfinitesimalInvarianceGenBurgersKdVEqs}
\begin{split}
 &\eta^{(1,0)}+u\eta^{(0,1)}+\eta u_x
 =\sum_{j=2}^\EqOrd\left[(\tau A^j_t+\xi A^j_x)u_j+A^j\eta^{(0,j)}\right]
 +(\tau A^0_t+\xi A^0_x)u+A^0\eta\\[-1ex]
 &\qquad{}+\tau B_t+\xi B_x,
\qquad\mbox{wherever}\qquad 
u_t+uu_x=\sum_{j=2}^\EqOrd A^ju_j+A^0u+B.
\end{split}
\end{align}

We have shown in Section~\ref{sec:EquivalenceGroupoidGenBurgersKdVEqs} 
that the class~\eqref{eq:GenBurgersKdVEqsGaugedSubclass} is normalized in the usual sense. 
Therefore, we know that the restrictions derived 
in Corollary~\ref{cor:OnEquivalenceAlgebraGenBurgersKdVEqs}
for the $(t,x,u)$-components of vector fields from the equivalence algebra~$\mathfrak g^\sim$ 
also hold for the components of infinitesimal symmetry generators. 
It is thus true that
$\tau=\tau(t)$, $\xi=\zeta(t)x+\chi(t)$, $\eta=(\zeta(t)-\tau_t(t))u+\zeta_t(t)x+\chi_t(t)$.

Substituting this restricted form of the coefficients of~$Q$ 
into the infinitesimal invariance criterion~\eqref{eq:InfinitesimalInvarianceGenBurgersKdVEqs}, we obtain
\begin{gather*}
\sum_{j=2}^\EqOrd(\tau A^j_t+(\zeta x+\chi)A^j_x+(\tau_t-j\zeta)A^j)u_j
+(\tau A^0_t+(\zeta x+\chi)A^0_x+\tau_tA^0)u\\
\qquad{}+\tau B_t+(\zeta x+\chi)B_x-(\zeta-2\tau_t)B+(\zeta_tx+\chi_t)A^0
=(2\zeta_t-\tau_{tt})u+\zeta_{tt}x+\chi_{tt}.
\end{gather*}
This equation can be split with respect to~$u$ and its spatial derivatives, resulting in the system
\begin{subequations}\label{eq:DeterminingEqsGenBurgersKdVEqs}
\begin{align}
 &\tau A^j_t+(\zeta x+\chi)A^j_x+(\tau_t-j\zeta)A^j=0,\quad j=2,\dots,\EqOrd,\label{eq:DeterminingEqsGenBurgersKdVEqs1}\\
 &\tau A^0_t+(\zeta x+\chi)A^0_x+\tau_tA^0=2\zeta_t-\tau_{tt},\label{eq:DeterminingEqsGenBurgersKdVEqs2}\\
 &\tau B_t+(\zeta x+\chi)B_x-(\zeta-2\tau_t)B+(\zeta_tx+\chi_t)A^0=\zeta_{tt}x+\chi_{tt}.\label{eq:DeterminingEqsGenBurgersKdVEqs3}
\end{align}
\end{subequations}

Since all of the determining equations~\eqref{eq:DeterminingEqsGenBurgersKdVEqs} essentially depend on the arbitrary elements~$\kappa$, 
they constitute the system of \emph{classifying equations} for Lie symmetries of equations from the class~\eqref{eq:GenBurgersKdVEqsGaugedSubclass}. 
Thus, solving the group classification problem for the class~\eqref{eq:GenBurgersKdVEqsGaugedSubclass} reduces to 
solving the classifying equations~\eqref{eq:DeterminingEqsGenBurgersKdVEqs} up to the equivalence induced by~$G^\sim$. 
Due to the structure of the determining equations~\eqref{eq:DeterminingEqsGenBurgersKdVEqs} we have proved the following proposition.

\begin{proposition}
The maximal Lie invariance algebra~$\mathfrak g_\kappa$ of the equation~$\mathcal L_\kappa$ from the class~\eqref{eq:GenBurgersKdVEqsGaugedSubclass} 
is spanned by the vector fields of the form~$Q=D(\tau)+S(\zeta)+P(\chi)$,
where the parameter functions~$\tau$, $\zeta$ and $\chi$ run through the solution set of the determining equations~\eqref{eq:DeterminingEqsGenBurgersKdVEqs}, and
\begin{align*}
D(\tau)=\tau\p_t-\tau_tu\p_u,\quad
S(\zeta)=\zeta x\p_x+(\zeta u+\zeta_t x)\p_u,\quad P(\chi)=\chi\p_x+\chi_t\p_u.
\end{align*}
\end{proposition}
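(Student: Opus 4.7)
The plan is to assemble the proposition directly from the work already carried out in Section~\ref{sec:DetEqsLieSymmetriesGenBurgersKdVEqs}, since essentially all the heavy lifting has been done before the statement. A general point symmetry generator of~$\mathcal L_\kappa$ has the form $Q=\tau\p_t+\xi\p_x+\eta\p_u$ with coefficients depending on $(t,x,u)$. First I would invoke Corollary~\ref{cor:OnEquivalenceAlgebraGenBurgersKdVEqs} together with the normalization of the class~\eqref{eq:GenBurgersKdVEqsGaugedSubclass} in the usual sense (Theorem~\ref{thm:EquivalenceGroupGenBurgersKdVEqsGaugeC1A10}): since $\mathfrak g_\kappa\subseteq\pi_*\mathfrak g^\sim$, the $(t,x,u)$-components of~$Q$ must be among those of the projections of the generators $\hat D(\tau)$, $\hat S(\zeta)$, $\hat P(\chi)$, which forces $\tau=\tau(t)$, $\xi=\zeta(t)x+\chi(t)$, and $\eta=(\zeta(t)-\tau_t(t))u+\zeta_t(t)x+\chi_t(t)$ for some smooth functions $\tau$, $\zeta$, $\chi$ of~$t$.

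Next I would simply rearrange this general $Q$ into the claimed sum: grouping the $\tau$-dependent terms gives $\tau\p_t-\tau_tu\p_u=D(\tau)$, the $\zeta$-dependent terms give $\zeta x\p_x+(\zeta u+\zeta_tx)\p_u=S(\zeta)$, and the $\chi$-dependent terms give $\chi\p_x+\chi_t\p_u=P(\chi)$. So every admissible coefficient tuple produces a vector field of the form $Q=D(\tau)+S(\zeta)+P(\chi)$, and conversely any such $Q$ has coefficients of the restricted form above.

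Finally I would substitute this restricted $Q$ into the infinitesimal invariance criterion~\eqref{eq:InfinitesimalInvarianceGenBurgersKdVEqs}. This computation was already performed in the text preceding the proposition: after prolongation, eliminating~$u_t$ via the equation, and splitting with respect to the parametric derivatives of~$u$, the surviving constraints are precisely the system~\eqref{eq:DeterminingEqsGenBurgersKdVEqs}. Thus $Q\in\mathfrak g_\kappa$ if and only if $(\tau,\zeta,\chi)$ solve~\eqref{eq:DeterminingEqsGenBurgersKdVEqs}, which is the assertion of the proposition.

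There is no real obstacle here, since the hard steps---computing the usual equivalence group, verifying that the class is normalized in the usual sense, and splitting the invariance condition---have already been dispatched in earlier sections. The only thing worth verifying carefully is that the splitting with respect to $u_1,u_2,\dots,u_\EqOrd$ and the constant term in~$u$ really does reproduce~\eqref{eq:DeterminingEqsGenBurgersKdVEqs1}--\eqref{eq:DeterminingEqsGenBurgersKdVEqs3} with no additional constraints, which is a routine check using the restricted form of~$\eta^\alpha$ forced by $\tau$, $\xi$, $\eta$ being at most linear in~$u$.
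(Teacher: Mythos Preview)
Your proposal is correct and follows essentially the same approach as the paper: the paper states the proposition immediately after deriving~\eqref{eq:DeterminingEqsGenBurgersKdVEqs}, with only the remark ``Due to the structure of the determining equations~\eqref{eq:DeterminingEqsGenBurgersKdVEqs} we have proved the following proposition,'' since the restricted form of $(\tau,\xi,\eta)$ was already obtained from normalization plus Corollary~\ref{cor:OnEquivalenceAlgebraGenBurgersKdVEqs}, and the determining equations were already derived by substitution and splitting. One small imprecision: the splitting is with respect to $u$ and $u_j$ for $j=2,\dots,\EqOrd$ (and the term independent of these), not with respect to $u_1$, since the $u_1$-terms cancel under the gauge $A^1=0$.
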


\begin{proposition}
The kernel Lie invariance algebra~$\mathfrak g^\cap:=\bigcap_\kappa\mathfrak g_\kappa$ of equations 
from the class~\eqref{eq:GenBurgersKdVEqsGaugedSubclass} is trivial, 
that is,~$\mathfrak g^\cap=\{0\}$.
\end{proposition}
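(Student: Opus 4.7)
The plan is to leverage the fact that $\mathfrak g^\cap$ is defined as the intersection over all admissible arbitrary-element tuples~$\kappa$, which forces the determining equations~\eqref{eq:DeterminingEqsGenBurgersKdVEqs} to hold simultaneously for every admissible~$\kappa$ rather than for a single fixed one. Thanks to the preceding proposition, a candidate element of~$\mathfrak g^\cap$ is already of the restricted form $Q=D(\tau)+S(\zeta)+P(\chi)$ with $\tau$, $\zeta$, $\chi$ depending only on~$t$, so the task reduces to showing that these three functions must vanish identically when $(\tau,\zeta,\chi)$ is required to solve~\eqref{eq:DeterminingEqsGenBurgersKdVEqs} for all~$\kappa\in\mathcal S$.

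The core of the argument is to exploit equation~\eqref{eq:DeterminingEqsGenBurgersKdVEqs1} with $j=\EqOrd$, namely
\[
\tau A^\EqOrd_t+(\zeta x+\chi)A^\EqOrd_x+(\tau_t-\EqOrd\zeta)A^\EqOrd=0.
\]
Fix an arbitrary point $(t_0,x_0)$ in the domain. The coefficients $\tau(t_0)$, $\zeta(t_0)x_0+\chi(t_0)$, and $\tau_t(t_0)-\EqOrd\zeta(t_0)$ are independent of~$A^\EqOrd$. The only constraint imposed on $A^\EqOrd$ in the class is $A^\EqOrd\ne0$, so using a standard bump-function construction one can exhibit nowhere-vanishing smooth functions~$A^\EqOrd$ realizing prescribed independent values for $A^\EqOrd(t_0,x_0)$, $A^\EqOrd_t(t_0,x_0)$ and $A^\EqOrd_x(t_0,x_0)$. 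Splitting the above identity pointwise with respect to these three independent quantities yields $\tau(t_0)=0$ and $\zeta(t_0)x_0+\chi(t_0)=0$.

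Since $t_0$ is arbitrary, $\tau\equiv0$; since the second relation is a linear identity in~$x_0$ at each~$t_0$, its coefficients must vanish, giving $\zeta\equiv0$ and $\chi\equiv0$. Thus $Q=0$, which is the claim. The remaining determining equations~\eqref{eq:DeterminingEqsGenBurgersKdVEqs2} and~\eqref{eq:DeterminingEqsGenBurgersKdVEqs3} are not even required for the argument and are automatically satisfied.

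I do not expect a genuine obstacle in this proof. The whole step is a textbook application of the splitting technique with respect to arbitrary elements; the only mildly delicate point is the independent pointwise realization of $A^\EqOrd$ and its first-order derivatives within the class of nowhere-vanishing smooth functions, which is immediate from the flexibility of smooth bump functions. Consequently, the proof should be a short paragraph with no calculational burden.
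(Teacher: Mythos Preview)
Your proof is correct and follows essentially the same approach as the paper: split the determining equations with respect to the arbitrary elements~$\kappa$ and their derivatives, which forces $\tau=0$ and $\zeta x+\chi=0$, hence $\zeta=\chi=0$. Your version is merely more explicit about the justification of the splitting step (via pointwise realization with bump functions) and singles out equation~\eqref{eq:DeterminingEqsGenBurgersKdVEqs1} with $j=\EqOrd$ as sufficient on its own, whereas the paper states the argument in a single sentence.
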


\begin{proof}
To derive the kernel of maximal Lie invariance algebras one assumes 
the arbitrary elements~$\kappa$ to vary. 
Then it is possible to split the determining equations~\eqref{eq:DeterminingEqsGenBurgersKdVEqs} 
also with respect to the arbitrary elements~$\kappa$ and their derivatives. 
This immediately yields $\tau=0$, $\zeta x+\chi=0$ and, 
since $\zeta$ and $\chi$ are functions of~$t$ only, it follows that $\zeta=\chi=0$.
\end{proof}

At this stage, it is appropriate to introduce the linear span
\[
 \mathfrak g_\spanindex:=\langle D(\tau),S(\zeta),P(\chi)\rangle,
\]
where the parameter functions $\tau$, $\zeta$ and $\chi$ 
run through the set of smooth functions of $t$. 
The nonzero commutation relations between vector fields spanning~$\mathfrak g_\spanindex$ 
are given by
\begin{gather*}
[D(\tau),D(\check\tau)]=D(\tau\check\tau_t-\check\tau\tau_t),\quad
[D(\tau),S(\zeta)]=S(\tau\zeta_t),\quad
[D(\tau),P(\chi)]=P(\tau\chi_t),\\
[S(\zeta),P(\chi)]=-P(\zeta\chi).
\end{gather*}
In view of these commutation relations, 
it is obvious that $\mathfrak g_\spanindex$ is a Lie algebra. 
Moreover, it is true that $\mathfrak g_\spanindex=\cup_\kappa\,\mathfrak g_\kappa$ 
since any of the vector fields $D(\tau)$, $S(\zeta)$ and $P(\chi)$ 
lies in $\mathfrak g_\kappa$ for some particular value of the arbitrary element~$\kappa$. 

Let us here and in the following denote by~$\pi$ the projection map from the joint space of $(t,x,u,\kappa)$ 
onto the space of the variables $(t,x,u)$ alone, 
i.e., $\pi(t,x,u,\kappa)=(t,x,u)$.
This map properly pushforwards vector fields from~$\mathfrak g^\sim$ 
and transformations from~$G^\sim$, and $\pi_*\mathfrak g^\sim=\mathfrak g_\spanindex$. 
This displays that in fact the class~\eqref{eq:GenBurgersKdVEqsGaugedSubclass} 
is strongly normalized in the usual sense~\cite{popo10Ay}.
Note that the normalization of the class~\eqref{eq:GenBurgersKdVEqsGaugedSubclass} in the usual sense
only implies that $\mathfrak g_\spanindex\subseteq\pi_*\mathfrak g^\sim$.
The spanning vector fields of~$\mathfrak g^\sim$, 
$\hat D(\tau)$, $\hat S(\zeta)$ and $\hat P(\chi)$, are pushforwarded by~$\pi$ 
to the spanning vector fields~$D(\tau)$, $S(\zeta)$ and $P(\chi)$ 
of~$\mathfrak g_\spanindex$, respectively. 
By means of the pushforward by~$\pi$, the adjoint action of~$G^\sim$ on~$\mathfrak g^\sim$ 
induces the action of $\pi_*G^\sim$ on~$\mathfrak g_\spanindex$ and, therefore, 
on the set of subalgebras of~$\mathfrak g_\spanindex$.  
Recall that a subalgebra~$\mathfrak s$ of $\mathfrak g_\spanindex$ is called appropriate if there exists a $\kappa$ such that $\mathfrak s=\mathfrak g_\kappa$. 
For any value of the arbitrary element~$\kappa$ and any transformation~$\mathcal T\in G^\sim$, 
the pushforward by $\pi_*\mathcal T$ maps 
the maximal Lie invariance algebra~$\mathfrak g_\kappa$ of the equation~$\mathcal L_\kappa$
onto the maximal Lie invariance algebra~$\mathfrak g_{\mathcal T\kappa}$
of the equation~$\mathcal L_{\mathcal T\kappa}$, 
and both the algebras~$\mathfrak g_\kappa$ and~$\mathfrak g_{\mathcal T\kappa}$ 
are included in~$\mathfrak g_\spanindex$. 
In other words, the action of $\pi_*G^\sim$ on~$\mathfrak g_\spanindex$ preserves 
the set of appropriate subalgebras of~$\mathfrak g_\spanindex$, 
and hence the group~$\pi_*G^\sim$ generates a well-defined equivalence relation on these subalgebras. 
As a result, we have proved the following proposition, which is the basis for the group classification of class~\eqref{eq:GenBurgersKdVEqsGaugedSubclass}.

\begin{proposition}
 The complete group classification of the class~\eqref{eq:GenBurgersKdVEqsGaugedSubclass} of gauged $(1{+}1)$-dimensional general Burgers--KdV equations of order $\EqOrd$ is obtained by classifying all appropriate subalgebras of the Lie algebra~$\mathfrak g_\spanindex$ under the equivalence relation generated by the action of~$\pi_* G^\sim$.
\end{proposition}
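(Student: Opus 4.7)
The plan is to reduce the group classification problem for the class~\eqref{eq:GenBurgersKdVEqsGaugedSubclass} to the classification of appropriate subalgebras of $\mathfrak g_\spanindex$ up to $\pi_*G^\sim$-equivalence, using two facts that have already been established in preceding parts of this section: the class~\eqref{eq:GenBurgersKdVEqsGaugedSubclass} is normalized in the usual sense by Theorem~\ref{thm:EquivalenceGroupGenBurgersKdVEqsGaugeC1A10}, and $\pi_*\mathfrak g^\sim=\mathfrak g_\spanindex$, as seen by comparing the generators in Corollary~\ref{cor:OnEquivalenceAlgebraGenBurgersKdVEqs} with the general form of the solutions of the determining equations~\eqref{eq:DeterminingEqsGenBurgersKdVEqs}. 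Granted these two inputs, the argument is essentially a bookkeeping exercise.

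The first step is to observe that for every $\kappa\in\mathcal S$ the maximal Lie invariance algebra $\mathfrak g_\kappa$ is an appropriate subalgebra of $\mathfrak g_\spanindex$. Indeed, normalization of~\eqref{eq:GenBurgersKdVEqsGaugedSubclass} in the usual sense forces $\mathfrak g_\kappa\subseteq\pi_*\mathfrak g^\sim$, and the latter equals $\mathfrak g_\spanindex$; hence each $\mathfrak g_\kappa$ is a subalgebra of $\mathfrak g_\spanindex$ realized as the maximal Lie invariance algebra of some equation in the class, which is precisely the definition of appropriate. The second step is to verify equivariance of the correspondence $\kappa\mapsto\mathfrak g_\kappa$: for any $\mathcal T\in G^\sim$ and any $\kappa$, the point transformation $\pi_*\mathcal T$ maps $\mathcal L_\kappa$ bijectively onto $\mathcal L_{\mathcal T\kappa}$, and therefore conjugates their maximal Lie invariance algebras, yielding $\mathfrak g_{\mathcal T\kappa}=(\pi_*\mathcal T)_*\mathfrak g_\kappa$. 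Consequently, $G^\sim$-orbits of arbitrary elements with nontrivial Lie symmetry extension map onto $\pi_*G^\sim$-orbits of appropriate subalgebras, and the map is surjective by the definition of appropriateness.

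Combining the two steps, producing a complete list of $\pi_*G^\sim$-inequivalent appropriate subalgebras~$\mathfrak s\subseteq\mathfrak g_\spanindex$ is equivalent to producing a complete list of $G^\sim$-inequivalent values of~$\kappa$ with $\mathfrak g_\kappa\supsetneq\mathfrak g^\cap=\{0\}$, provided that for each representative subalgebra~$\mathfrak s$ one recovers the associated tuple $\kappa$ by integrating the classifying equations~\eqref{eq:DeterminingEqsGenBurgersKdVEqs} under the constraint $\mathfrak g_\kappa=\mathfrak s$. The only substantive work not encapsulated in the above argument is the actual enumeration of appropriate subalgebras of $\mathfrak g_\spanindex$ up to $\pi_*G^\sim$-equivalence; this is the genuine obstacle and is combinatorial in nature, which is why it is deferred to Section~\ref{sec:PropertiesOfAppropriateSubalgebrasGenBurgersKdVEqs} and Section~\ref{sec:GroupClassificationGenBurgersKdVEqs}.
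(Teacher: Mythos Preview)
Your proof is correct and follows essentially the same approach as the paper's own argument, which appears in the paragraph immediately preceding the proposition: both establish that $\mathfrak g_\kappa\subseteq\pi_*\mathfrak g^\sim=\mathfrak g_\spanindex$ via normalization, verify the equivariance $(\pi_*\mathcal T)_*\mathfrak g_\kappa=\mathfrak g_{\mathcal T\kappa}$, and conclude that $\pi_*G^\sim$ acts well-definedly on the set of appropriate subalgebras. You are somewhat more explicit than the paper in noting that recovering the actual tuples~$\kappa$ from a representative subalgebra~$\mathfrak s$ requires integrating the classifying equations~\eqref{eq:DeterminingEqsGenBurgersKdVEqs}, but this is a welcome clarification rather than a departure in method.
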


To efficiently carry out the group classification using the algebraic method, it is necessary to compute the adjoint actions of the transformations from $\pi_*G^\sim$
on the vector fields $Q$ from~$\mathfrak g_\spanindex$.

The adjoint actions of the transformations, $\varphi$, from $\pi_*G^\sim$ on the vector fields, $Q$, from $\mathfrak g_\spanindex$ are directly computable from the definition of the pushforward~$\varphi_*$ of~$Q$ by~$\varphi$,
\[
 \varphi_* Q=Q(T)\p_{\tilde t}+Q(X)\p_{\tilde x}+Q(U)\p_{\tilde u},
\]
see, e.g., \cite{bihl11Dy,bihl16a,card11Ay,kuru16a}. Here, the coefficients of~$\varphi_* Q$ are given in terms of the transformed variables obtained by substituting~$(t,x,u)=\varphi^{-1}(\tilde t,\tilde x,\tilde u)$
using the inverse transformation~$\varphi^{-1}$ of~$\varphi$.

In practice, this is done by considering the families of elementary transformations~$\mathcal D(T)$, $\mathcal S(X^1)$ and~$\mathcal P(X^0)$ from~$\pi_*G^\sim$, which follow from~\eqref{eq:EquivalenceTransformationsGenBurgersKdVEqsGaugeC1A10} by restricting all except one of the parameter functions~$T$, $X^1$ and $X^0$ to trivial values (which are $t$ for $T$, one for $X^1$ and zero for $X^0$). The nontrivial pushforwards of the spanning vector fields of~$\mathfrak g_\spanindex$ by these elementary transformations from~$\pi_*G^\sim$ are given by
\begin{align}\label{eq:AdjointActionsGenBurgersKdVEqs}
\begin{split}
 &\mathcal D_*(T)D(\tau)=\tilde D(\tau T_t),\quad \mathcal D_*(T)S(\zeta)=\tilde S(\zeta),\quad \mathcal D_*(T)P(\chi)=\tilde P(\chi),\\
 &\mathcal S_*(X^1)D(\tau) =\tilde D(\tau)+\tilde S\left(\tau \frac{X^1_t}{X^1}\right) ,\quad \mathcal S_*(X^1)P(\chi)=\tilde P(\chi X^1),\\
 & \mathcal P_*(X^0)D(\tau) =\tilde D(\tau)+\tilde P\left(\tau X^0_t\right), \quad \mathcal P_*(X^0)S(\zeta)=\tilde S(\zeta)-\tilde P(\zeta X^0).
\end{split}
\end{align}
Here the tildes over the right-hand side operators indicate that the given vector fields are expressed using the transformed variables, which also includes substituting $t=T^{-1}(\tilde t)$ for $t$, where $T^{-1}$ is the inverse function of~$T$.

\section{Properties of appropriate subalgebras}\label{sec:PropertiesOfAppropriateSubalgebrasGenBurgersKdVEqs}

An important step for the group classification of class~\eqref{eq:GenBurgersKdVEqsGaugedSubclass} is
to study properties of appropriate subalgebras of the algebra~$\mathfrak g_\spanindex$ to be classified.
In particular, we determine the maximum dimension of admitted Lie invariance algebras of equations from this class.
This is formulated in the following lemma.

\begin{lemma}\label{lem:DimOfLieInvAlgebraBurgersKdV}
For any tuple of arbitrary elements~$\kappa$, $\dim\mathfrak g_\kappa\leqslant 5$.
\end{lemma}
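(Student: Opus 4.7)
The approach is to view the system \eqref{eq:DeterminingEqsGenBurgersKdVEqs} as a linear system for the triple $(\tau,\zeta,\chi)$ of smooth functions of~$t$ and to bound the dimension of its solution space by combining a reduction to a linear ODE system in~$t$ with one extra algebraic constraint coming from the highest-order coefficient~$A^{\EqOrd}$. Since the assignment $(\tau,\zeta,\chi)\mapsto D(\tau)+S(\zeta)+P(\chi)$ is an injection of triples into $\mathfrak g_\spanindex$, a bound on $\dim\mathfrak g_\kappa$ follows directly from a bound on the dimension of this tuple space.

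First I would fix any point $(t_0,x_0)$ with $A^\EqOrd(t_0,x_0)\ne 0$, which is possible because $A^\EqOrd\not\equiv 0$. Evaluating \eqref{eq:DeterminingEqsGenBurgersKdVEqs2} at $x=x_0$ expresses $\tau_{tt}$ as a linear combination of $\tau,\tau_t,\zeta,\zeta_t,\chi$ with coefficients smooth in~$t$. Equation~\eqref{eq:DeterminingEqsGenBurgersKdVEqs3} involves both $\zeta_{tt}x$ and $\chi_{tt}$, so at a single value of~$x$ it supplies only one scalar equation in two unknowns. To separate them I would differentiate \eqref{eq:DeterminingEqsGenBurgersKdVEqs3} once in~$x$ and evaluate at $x_0$, which isolates $\zeta_{tt}$ as a linear combination of the six components $(\tau,\tau_t,\zeta,\zeta_t,\chi,\chi_t)$; substituting back into \eqref{eq:DeterminingEqsGenBurgersKdVEqs3} at $x_0$ then yields an analogous formula for $\chi_{tt}$. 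Together these three relations give a linear first-order ODE system in~$t$ for the six-vector $(\tau,\tau_t,\zeta,\zeta_t,\chi,\chi_t)(t)$, whose solutions are uniquely determined by their values at~$t_0$. Hence the evaluation map from the tuple space into~$\mathbb R^6$ is injective, giving the preliminary bound $\dim\mathfrak g_\kappa\leqslant 6$.

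To drop from six to five I would invoke equation \eqref{eq:DeterminingEqsGenBurgersKdVEqs1} with $j=\EqOrd$, which contains no second derivatives of the unknowns and, when evaluated at $(t_0,x_0)$, reads
\[
A^\EqOrd(t_0,x_0)\,\tau_t(t_0)+A^\EqOrd_t(t_0,x_0)\,\tau(t_0)+A^\EqOrd_x(t_0,x_0)\bigl(\zeta(t_0)x_0+\chi(t_0)\bigr)-\EqOrd\,A^\EqOrd(t_0,x_0)\,\zeta(t_0)=0.
\]
Since $A^\EqOrd(t_0,x_0)\ne 0$, the coefficient of $\tau_t(t_0)$ is nonzero, so this is a nontrivial linear equation on the six-dimensional space of admissible initial data; intersecting with the corresponding hyperplane reduces the image of the evaluation map to at most five dimensions, whence $\dim\mathfrak g_\kappa\leqslant 5$. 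The main technical obstacle is the separation step at the start of the ODE reduction: one cannot read $\zeta_{tt}$ and $\chi_{tt}$ off \eqref{eq:DeterminingEqsGenBurgersKdVEqs3} at a single point, and the argument hinges on exploiting the $x$-independence of these functions to obtain a second equation by differentiation in~$x$. Once that separation is performed, the remainder of the proof is standard linear ODE uniqueness together with the explicit algebraic constraint supplied by~$A^\EqOrd$.
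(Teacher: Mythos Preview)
Your argument is correct. Both you and the paper reduce the classifying equations to a linear ODE system in~$t$ with finite-dimensional solution space, but the details differ. The paper uses~\eqref{eq:DeterminingEqsGenBurgersKdVEqs1} with $j=\EqOrd$ at a fixed point~$x_1$ to solve \emph{directly} for~$\tau_t$, obtaining a first-order equation for~$\tau$; it then separates~$\zeta_{tt}$ and~$\chi_{tt}$ by evaluating~\eqref{eq:DeterminingEqsGenBurgersKdVEqs3} at \emph{two} distinct points~$x_2,x_3$. This yields immediately a canonical ODE system of total order~$1+2+2=5$, so the bound~$5$ falls out without a separate constraint step. You instead use~\eqref{eq:DeterminingEqsGenBurgersKdVEqs2} to get a second-order equation for~$\tau$, and you separate~$\zeta_{tt}$ from~$\chi_{tt}$ by differentiating~\eqref{eq:DeterminingEqsGenBurgersKdVEqs3} in~$x$ rather than sampling at two points; this gives a six-dimensional ODE system, which you then intersect with the algebraic constraint from~\eqref{eq:DeterminingEqsGenBurgersKdVEqs1} at~$t_0$. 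The paper's route is slightly more economical (the bound~$5$ appears in one step), while your differentiation trick avoids having to choose a second evaluation point and makes explicit the role of~$A^\EqOrd\ne0$ as the source of the extra constraint. Both are sound.
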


\begin{proof}
This statement follows directly from analyzing the solution space of the linear system~\eqref{eq:DeterminingEqsGenBurgersKdVEqs} 
with respect to~$\tau$, $\zeta$ and $\chi$ for a fixed tuple~$\kappa$. 
Denote by $\Omega_t\subseteq\mathbb R$ and $\Omega_x\subseteq\mathbb R$ open intervals on the $t$- and $x$-axes, 
such that the equation~$\mathcal L_\kappa$ is defined on the domain $\Omega_t\times \Omega_x$. Since~$A^\EqOrd\ne0$ by definition, 
we can resolve the equation~\eqref{eq:DeterminingEqsGenBurgersKdVEqs1} with $j=\EqOrd$ for~$\tau_t$ and fix $x_1\in\Omega_x$, yielding
\begin{subequations}\label{eq:DimOfLieInvAlgebraBurgersKdV}
\begin{equation}\label{eq:DimOfLieInvAlgebraBurgersKdV1}
 \tau_t=\EqOrd\zeta-\left(\tau \frac{A^\EqOrd_t}{A^\EqOrd}+(\zeta x+\chi)\frac{A^\EqOrd_x}{A^\EqOrd}\right)\bigg|_{x=x_1}=:R^1.
\end{equation}
Evaluating the classifying condition~\eqref{eq:DeterminingEqsGenBurgersKdVEqs3} at the two distinct points $x_2$ and $x_3$ from $\Omega_x$ and varying $t$, we obtain
 \begin{align*}
  &\zeta_{tt}x_2-\chi_{tt}=R^2,\quad \zeta_{tt}x_3-\chi_{tt}=R^3,
 \end{align*}
where $R^2$ and $R^3$ follow from substituting~$x_2$ and~$x_3$
into the left hand side of~\eqref{eq:DeterminingEqsGenBurgersKdVEqs3}, respectively. 
Due to $x_2$ and $x_3$ being different points, the above system can be written as
\begin{equation}\label{eq:DimOfLieInvAlgebraBurgersKdV2}
 \zeta_{tt}=\cdots,\quad \chi_{tt}=\cdots.
\end{equation}
\end{subequations}
If the arbitrary element~$\kappa$ is fixed, 
the system~\eqref{eq:DimOfLieInvAlgebraBurgersKdV} can be considered 
as a canonical system of linear ordinary differential equations 
in~$t$ for $\tau$, $\zeta$ and $\chi$. 
Its solution space is thus five-dimensional. 
Further conditions derived from the classifying equations~\eqref{eq:DeterminingEqsGenBurgersKdVEqs} can only reduce this solution space and hence it follows that $\dim\mathfrak g_\kappa\leqslant 5$.
\end{proof}

We introduce three integers related to the dimensions of certain subspaces
of the maximal Lie invariance algebra~$\mathfrak g_\kappa$ of the equation~$\mathcal L_\kappa$,
\begin{gather*}
k_1:=\dim\big(\mathfrak g_\kappa\cap\langle P(\chi)\rangle\big), \\
k_2:=\dim\big(\mathfrak g_\kappa\cap\langle S(\zeta),P(\chi)\rangle\big)-k_1, \\
k_3:=\dim\mathfrak g_\kappa-\dim\big(\mathfrak g_\kappa\cap\langle S(\zeta),P(\chi)\rangle\big)
    =\dim\mathfrak g_\kappa-k_1-k_2.
\end{gather*}
Although these integers depend on~$\kappa$,
in view of~\eqref{eq:AdjointActionsGenBurgersKdVEqs} it is obvious
that the dimensions of the subalgebras
$\mathfrak g_\kappa\cap\langle P(\chi)\rangle$ and $\mathfrak g_\kappa\cap\langle S(\zeta),P(\chi)\rangle$
as well as of the entire algebra~$\mathfrak g_\kappa$ are $G^\sim$-invariant, 
and thus the integers~$k$'s are also $G^\sim$-invariant.

\begin{lemma}
The integers $k_1$, $k_2$ and $k_3$ are $G^\sim$-invariant values,
i.e., they are the same for all $G^\sim$-equivalent equations
from the class~\eqref{eq:GenBurgersKdVEqsGaugedSubclass}.
\end{lemma}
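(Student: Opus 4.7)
The plan is to reduce the invariance of the three integers to two stability properties of the distinguished subspaces $\langle P(\chi)\rangle$ and $\langle S(\zeta),P(\chi)\rangle$ of $\mathfrak g_\spanindex$ under the action of the pushforward group $\pi_*G^\sim$. Once these subspaces are stable, the dimensions of their intersections with $\mathfrak g_\kappa$ will be transported by $\pi_*\mathcal T$ to the corresponding intersections with $\mathfrak g_{\mathcal T\kappa}$, and everything falls out.

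First I would recall that for any $\mathcal T\in G^\sim$, the pushforward $\pi_*\mathcal T$ is a point transformation of $(t,x,u)$ that maps $\mathcal L_\kappa$ to $\mathcal L_{\mathcal T\kappa}$, so it induces a Lie algebra isomorphism $(\pi_*\mathcal T)_*\colon\mathfrak g_\kappa\to\mathfrak g_{\mathcal T\kappa}$. In particular, $\dim\mathfrak g_\kappa=\dim\mathfrak g_{\mathcal T\kappa}$, which already gives the $G^\sim$-invariance of $k_1+k_2+k_3$.

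Next I would check the stability of the two distinguished subspaces of $\mathfrak g_\spanindex$ under $\pi_*G^\sim$. It suffices to verify this on the generators $\mathcal D(T)$, $\mathcal S(X^1)$, $\mathcal P(X^0)$ of $\pi_*G^\sim$, and the formulas~\eqref{eq:AdjointActionsGenBurgersKdVEqs} make this immediate: each of the three elementary pushforwards sends any $P(\chi)$ to an element of $\langle\tilde P(\chi)\rangle$, so the subspace $\langle P(\chi)\rangle$ is $\pi_*G^\sim$-invariant. Similarly, each elementary pushforward sends any $S(\zeta)$ into $\langle\tilde S(\zeta),\tilde P(\chi)\rangle$ (only $\mathcal P_*(X^0)$ produces an additional $P$-term, namely $-\tilde P(\zeta X^0)$), and of course the subspace $\langle P(\chi)\rangle$ is contained in $\langle S(\zeta),P(\chi)\rangle$, so the latter subspace is also $\pi_*G^\sim$-invariant.

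From these two stability statements, the isomorphism $(\pi_*\mathcal T)_*$ restricts to isomorphisms
\[
\mathfrak g_\kappa\cap\langle P(\chi)\rangle\ \xrightarrow{\ \sim\ }\ \mathfrak g_{\mathcal T\kappa}\cap\langle P(\chi)\rangle,\quad
\mathfrak g_\kappa\cap\langle S(\zeta),P(\chi)\rangle\ \xrightarrow{\ \sim\ }\ \mathfrak g_{\mathcal T\kappa}\cap\langle S(\zeta),P(\chi)\rangle,
\]
so the dimensions of both intersections are preserved. This yields the $G^\sim$-invariance of $k_1$ and of $k_1+k_2$, and combined with the already established invariance of $\dim\mathfrak g_\kappa=k_1+k_2+k_3$, we obtain the invariance of $k_1$, $k_2$ and $k_3$ separately. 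No step here is really hard; the only point requiring care is to recognize that because $\mathcal S_*(X^1)S(\zeta)$ is not listed in~\eqref{eq:AdjointActionsGenBurgersKdVEqs}, one must independently confirm it stays within $\langle S(\zeta),P(\chi)\rangle$, which is a direct computation from the form of the equivalence transformations in Theorem~\ref{thm:EquivalenceGroupGenBurgersKdVEqsGaugeC1A10}.
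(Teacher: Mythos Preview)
Your proposal is correct and follows essentially the same approach as the paper: you establish that $(\pi_*\mathcal T)_*$ is an isomorphism $\mathfrak g_\kappa\to\mathfrak g_{\mathcal T\kappa}$ preserving the two distinguished subspaces $\langle P(\chi)\rangle$ and $\langle S(\zeta),P(\chi)\rangle$, from which the invariance of the three dimensions (and hence of $k_1$, $k_2$, $k_3$) follows. The only difference is that you spell out the stability verification via the elementary pushforwards in~\eqref{eq:AdjointActionsGenBurgersKdVEqs}, whereas the paper simply asserts preservation of the spans (having already remarked this just before stating the lemma).
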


\begin{proof}
Let $\mathcal T\in G^\sim$ transform $\mathcal L_\kappa$ to $\mathcal L_{\tilde\kappa}$.
The transformation $\pi_*\mathcal T$ pushforwards $\mathfrak g_\kappa$ onto $\mathfrak g_{\tilde\kappa}$.
At the same time, it preserves the spans~$\langle S(\zeta),P(\chi)\rangle$ and $\langle P(\chi)\rangle$.
This is why
$\dim\mathfrak g_\kappa=\dim\mathfrak g_{\tilde\kappa}$,
$\dim\mathfrak g_\kappa\cap\langle S(\zeta),P(\chi)\rangle
=\dim\mathfrak g_{\tilde\kappa}\cap\langle S(\zeta),P(\chi)\rangle$ and
$\dim\mathfrak g_\kappa\cap\langle P(\chi)\rangle
=\dim\mathfrak g_{\tilde\kappa}\cap\langle P(\chi)\rangle$.
\end{proof}

We now proceed to find the upper bounds for values of these integers,
which is proved in a way similar to Lemma~\ref{lem:DimOfLieInvAlgebraBurgersKdV}.

\begin{lemma}\label{lem:DimOfLieInvSubalgebraBurgersKdV1}
For any $\kappa$, we have $(k_1,k_2)\in\{(0,0),(0,1),(2,0)\}$.
\end{lemma}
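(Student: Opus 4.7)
The plan is to bound $k_1$ and $k_2$ separately by specializing the determining equations~\eqref{eq:DeterminingEqsGenBurgersKdVEqs} to vector fields of the form $P(\chi)$ (with $\tau=\zeta=0$) and to $S(\zeta)+P(\chi)$ with $\zeta\not\equiv0$ (i.e.\ $\tau=0$ but $\zeta\not\equiv0$). First I would show $k_1\in\{0,2\}$, then rule out $k_1=2$ with $k_2\geqslant1$, and finally rule out $k_1=0$ with $k_2\geqslant2$.

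Substituting $(\tau,\zeta,\chi)=(0,0,\chi)$ into~\eqref{eq:DeterminingEqsGenBurgersKdVEqs} yields $\chi A^j_x=0$ for $j=2,\dots,\EqOrd$, $\chi A^0_x=0$, and $\chi B_x+\chi_tA^0-\chi_{tt}=0$. When $\chi\not\equiv0$, the first two relations force $A^j$ and~$A^0$ to depend only on~$t$ on the whole domain (by continuation from the open set where $\chi\ne0$), and the third then forces $B=B^1(t)x+B^0(t)$ together with the linear second-order ODE $\chi_{tt}-A^0\chi_t-B^1\chi=0$ for~$\chi$. Conversely, once these structural conditions on~$\kappa$ hold, every solution of that ODE produces a $P$-symmetry, so $k_1\in\{0,2\}$. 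Moreover, in the subcase $k_1=2$ we have $A^j_x=0$ globally, so any $(0,\zeta,\chi)\in\mathfrak g_\kappa$ with $\zeta\not\equiv0$ would collapse equation~\eqref{eq:DeterminingEqsGenBurgersKdVEqs1} at $j=\EqOrd$ to $-\EqOrd\,\zeta A^\EqOrd=0$, contradicting $A^\EqOrd\ne0$; hence $k_1=2$ forces $k_2=0$, producing the pair $(2,0)$.

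Suppose now $k_1=0$ and, for a contradiction, that $(0,\zeta_i,\chi_i)\in\mathfrak g_\kappa$ for $i=1,2$ with $\zeta_1,\zeta_2$ linearly independent over~$\mathbb R$; note that neither $\zeta_i$ is identically zero, since otherwise the corresponding triple would contribute to the vanishing $k_1$. Writing~\eqref{eq:DeterminingEqsGenBurgersKdVEqs1} at $j=\EqOrd$ for both triples as a homogeneous linear system
\[
\begin{pmatrix}\zeta_1x+\chi_1 & -\EqOrd\zeta_1\\ \zeta_2x+\chi_2 & -\EqOrd\zeta_2\end{pmatrix}\begin{pmatrix}A^\EqOrd_x\\A^\EqOrd\end{pmatrix}=0
\]
for the column vector $(A^\EqOrd_x,A^\EqOrd)^\top$, which is nonzero since $A^\EqOrd\ne0$, the determinant $\EqOrd(\zeta_1\chi_2-\zeta_2\chi_1)$ must vanish, so $\zeta_1\chi_2\equiv\zeta_2\chi_1$. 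On any open subinterval of~$t$ on which $\zeta_2\ne0$ this gives $\chi_i=\rho\zeta_i$ with $\rho:=\chi_2/\zeta_2$, so $\zeta_ix+\chi_i=\zeta_i(x+\rho)$; substituting into the two copies of $(\zeta_ix+\chi_i)A^0_x=2\zeta_{i,t}$ coming from~\eqref{eq:DeterminingEqsGenBurgersKdVEqs2} and dividing forces $\zeta_{1,t}/\zeta_1=\zeta_{2,t}/\zeta_2$, so that $\zeta_1/\zeta_2$ is locally constant. This contradicts the linear independence of $\zeta_1,\zeta_2$, so $k_2\leqslant1$ in the regime $k_1=0$, which completes the list $(k_1,k_2)\in\{(0,0),(0,1),(2,0)\}$.

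The principal technical obstacle will be the propagation of identities such as $A^j_x=0$ or $\zeta_1\chi_2=\zeta_2\chi_1$, established on the open subsets where one of $\chi$, $\zeta_2$, or~$A^\EqOrd$ is nonzero, to identities on the entire domain. This is immediate under analyticity assumptions on the arbitrary elements and reduces to a short density/connectedness argument in the smooth setting.
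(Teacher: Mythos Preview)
Your proof is correct and reaches the same conclusion, but the organization differs from the paper's. The paper splits at the outset into the two cases $A^\EqOrd_x\ne0$ and $A^\EqOrd_x=0$: in the first case it solves the $j=\EqOrd$ classifying equation explicitly as $\chi=f(t)\zeta$ (by fixing a point~$x_0$) and then derives $\zeta_t=g(t)\zeta$ from the $A^0$-equation, giving $k_1=0$ and $k_2\leqslant1$ directly; in the second case $\zeta=0$ is immediate, and then the dichotomy $k_1\in\{0,2\}$ is established just as you do. You instead first settle $k_1\in\{0,2\}$ uniformly, and for $k_2\leqslant1$ replace the explicit solving step by a $2\times2$ determinant argument yielding $\zeta_1\chi_2=\zeta_2\chi_1$, followed by the same use of the $A^0$-equation. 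The two routes are essentially equivalent; the paper's is marginally more direct since it bypasses the determinant, while yours avoids having to pick a point $x_0$ with $A^\EqOrd_x(t,x_0)\ne0$. One small point: your parenthetical only justifies that each $\zeta_i$ is nonzero, not that the pair $(\zeta_1,\zeta_2)$ is linearly independent; the latter does follow from $k_1=0$ (a nontrivial linear relation between the $\zeta_i$ would produce a nonzero element of $\mathfrak g_\kappa\cap\langle P(\chi)\rangle$), and you should say so explicitly. The propagation caveats you flag are present at the same level in the paper's argument as well.
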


\begin{proof}
For any vector field $Q=S(\zeta)+P(\chi)$ from the algebra~$\mathfrak g_\kappa$,
the parameter functions~$\zeta$ and~$\chi$ satisfy 
the system of classifying equations~\eqref{eq:DeterminingEqsGenBurgersKdVEqs} 
for the chosen tuple of arbitrary elements $\kappa=(A^0,A^2,\dots,A^\EqOrd,B)$ and $\tau=0$.

In particular, if $A^\EqOrd_x\ne0$, we can solve
the classifying equation~\eqref{eq:DeterminingEqsGenBurgersKdVEqs1} with $j=\EqOrd$
with respect to $\chi$ to obtain $\chi=(\EqOrd A^\EqOrd/A^\EqOrd_x-x)\zeta$.
After fixing a value $x=x_0$, this equation implies
that there exists a function~$f=f(t)$ such that $\chi=f\zeta$.
Then the classifying equation~\eqref{eq:DeterminingEqsGenBurgersKdVEqs2} implies that $2\zeta_t=(x+f)A^0_x\zeta$. 
Again fixing $x=x_0$, we hence derive an equation $\zeta_t=g\zeta$ with some function $g=g(t)$. 
Since the parameter functions~$\zeta$ and~$\chi$ of any vector field $Q=S(\zeta)+P(\chi)$ from~$\mathfrak g_\kappa$  
satisfy the same system $\chi=f\zeta$ and $\zeta_t=g\zeta$, 
we have $k_1=0$ and $k_2\leqslant1$ in this case.

If $A^\EqOrd_x=0$, then the classifying equation \eqref{eq:DeterminingEqsGenBurgersKdVEqs1} with $j=\EqOrd$
directly gives $\zeta=0$, i.e., $k_2=0$.
Suppose that $k_1>0$, i.e., there exists a nonzero~$\chi^1$ such that $P(\chi^1)\in\mathfrak g_\kappa$.
It then follows from the system~\eqref{eq:DeterminingEqsGenBurgersKdVEqs}
that $A^j_x=0$, $A^0_x=0$ and $\chi^1 B_x+\chi^1_tA^0=\chi^1_{tt}$.
Differentiating the last equation with respect to~$x$ results in $B_{xx}=0$.
Thus, we have $P(\chi)\in\mathfrak g_\kappa$ for any~$\chi$
from the two-dimensional solution space of the equation $\chi_{tt}=A^0\chi_t+B_x\chi$,
which means $k_1=2$. 
\end{proof}

The projection~$\varpi$ from the space of~$(t,x,u)$ on the space of~$t$ alone
properly pushforwards elements of $\mathfrak g_\spanindex$ 
according to $D(\tau)+S(\zeta)+P(\chi)\mapsto\tau\p_t$, 
and hence $\varpi_*\mathfrak g_\spanindex=\{\tau\p_t\}$, 
where $\tau$ runs through the set of smooth functions of~$t$.
The pushforward~$\varpi_* G^\sim$ of~$G^\sim$ by~$\varpi$ is also well defined.

\begin{lemma}\label{lem:OnOperatorsInvolvingTau}
The projection~$\varpi_*\mathfrak g_\kappa$ is a Lie algebra
for any tuple of arbitrary elements~$\kappa$, 
and $\dim\varpi_*\mathfrak g_\kappa=k_3\leqslant3$.
It is true that $\varpi_*\mathfrak g_\kappa\in\{0,\langle\p_t\rangle,\langle\p_t,t\p_t\rangle,\langle\p_t,t\p_t,t^2\p_t\rangle\}\bmod\varpi_* G^\sim$.
\end{lemma}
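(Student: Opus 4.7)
The plan is to interpret $\varpi_*$ as a Lie-algebra homomorphism on $\mathfrak g_\spanindex$, read off $\dim\varpi_*\mathfrak g_\kappa$ from a rank--nullity argument, and then conclude via the classical Lie classification of finite-dimensional Lie algebras of vector fields on the line.

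First I would verify that $\varpi_*$ is a Lie-algebra homomorphism from $\mathfrak g_\spanindex$ to the space of smooth vector fields of the form $\tau(t)\p_t$ on~$\Omega_t$. Inspecting the commutation relations displayed just before the lemma, one sees that only $[D(\tau),D(\check\tau)]=D(\tau\check\tau_t-\check\tau\tau_t)$ contributes a $D$-summand, whereas $[D,S]$, $[D,P]$ and $[S,P]$ produce only $S$- or $P$-summands, which $\varpi_*$ annihilates. Hence $\varpi_*[Q_1,Q_2]=[\varpi_*Q_1,\varpi_*Q_2]$ for all $Q_1,Q_2\in\mathfrak g_\spanindex$, and $\varpi_*\mathfrak g_\kappa$ is a Lie subalgebra of $\mathrm{Vect}(\Omega_t)$. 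The kernel of $\varpi_*$ restricted to $\mathfrak g_\kappa$ is precisely $\mathfrak g_\kappa\cap\langle S(\zeta),P(\chi)\rangle$, of dimension $k_1+k_2$, and the rank--nullity theorem yields $\dim\varpi_*\mathfrak g_\kappa=\dim\mathfrak g_\kappa-(k_1+k_2)=k_3$.

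Since $\dim\mathfrak g_\kappa\leqslant5$ by Lemma~\ref{lem:DimOfLieInvAlgebraBurgersKdV}, the algebra $\varpi_*\mathfrak g_\kappa$ is finite-dimensional. I would then invoke the classical Lie classification of finite-dimensional Lie subalgebras of $\mathrm{Vect}(\mathbb R)$: up to a local diffeomorphism of the real line, every such subalgebra is one of $\{0\}$, $\langle\p_t\rangle$, $\langle\p_t,t\p_t\rangle$, $\langle\p_t,t\p_t,t^2\p_t\rangle$; in particular $\dim\varpi_*\mathfrak g_\kappa\leqslant3$, which gives $k_3\leqslant3$.

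It remains to match the $\varpi_*G^\sim$-equivalence with local-diffeomorphism equivalence on the $t$-axis. The adjoint formulas~\eqref{eq:AdjointActionsGenBurgersKdVEqs} show that $\mathcal S(X^1)$ and $\mathcal P(X^0)$ leave the $D$-component of any element of~$\mathfrak g_\spanindex$ unchanged, whereas $\mathcal D(T)$ sends $\tau\p_t$ to $(\tau T_t)\p_{\tilde t}$, which is exactly the pushforward of $\tau\p_t$ by the change of variable $\tilde t=T(t)$. Since $T$ ranges over all smooth functions with $T_t\ne0$, $\varpi_*G^\sim$ realizes every local change of the independent variable, and the four canonical forms above are precisely the four $\varpi_*G^\sim$-inequivalent possibilities listed in the lemma. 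I expect the main delicacy to be a careful appeal to Lie's classification in the smooth category on a possibly small interval~$\Omega_t$: if $\varpi_*\mathfrak g_\kappa$ is not transitive, one first restricts to a subinterval on which a generic element is nonvanishing, applies the classification there, and then uses that $\varpi_*G^\sim$ contains arbitrary time diffeomorphisms to bring the algebra to one of the four canonical forms.
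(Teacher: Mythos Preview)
Your proposal is correct and follows essentially the same route as the paper: both establish that $\varpi_*\mathfrak g_\kappa$ is a Lie algebra (you via the observation that $\varpi_*$ is a Lie-algebra homomorphism on~$\mathfrak g_\spanindex$, the paper by the equivalent direct elementwise check), and both conclude by invoking Lie's classification of finite-dimensional subalgebras of vector fields on the line together with the identification of $\varpi_*G^\sim$ with the pseudogroup of local time-diffeomorphisms. Your explicit rank--nullity step identifying $\dim\varpi_*\mathfrak g_\kappa=k_3$ simply spells out what the paper leaves implicit in the definition of~$k_3$.
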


\begin{proof}
We show that $\varpi_*\mathfrak g_\kappa$ is indeed a Lie algebra.
Given $\tau^i\p_t\in \varpi_*\mathfrak g_\kappa$, $i=1,2$,
there exist $Q^i\in \mathfrak g_\kappa$ such that~$\varpi_*Q^i=\tau^i\p_t$.
For any constants $c_1$ and $c_2$, we have that
$c_1Q^1+c_2Q^2\in\mathfrak g_\kappa$
and thus $c_1\tau^1\p_t+c_2\tau^2\p_t=\varpi_*(c_1Q^1+c_2Q^2)\in\varpi_*\mathfrak g_\kappa$ .
This means that~$\varpi_*\mathfrak g_\kappa$ is indeed a linear space.
This space is closed under the Lie bracket of vector fields and, therefore, is a Lie algebra,
because $[\tau^1\p_t,\tau^2\p_t]=(\tau^1\tau^2_t-\tau^2\tau^1_t)\p_t=\varpi_*[Q^1,Q^2]\in\varpi_*\mathfrak g_\kappa$.

Since the pushforward~$\varpi_* G^\sim$ of~$G^\sim$ by the projection~$\varpi$ 
coincides with the (pseudo)group of local diffeomorphisms in the space of~$t$,
Lie's theorem can be invoked. 
It states that the maximum dimension of finite-dimensional Lie algebras of vector fields 
on the complex (resp.\ real) line is three. 
Up to local diffeomorphisms of the line,
these algebras are given by $\{0\}$, $\langle\p_t\rangle$, $\langle\p_t,t\p_t\rangle$ and~$\langle\p_t,t\p_t,t^2\p_t\rangle$.
\end{proof}

It follows that~$|k|:=k_1+k_2+k_3=\dim\mathfrak g_\kappa\leqslant5$.

\section{Group classification}\label{sec:GroupClassificationGenBurgersKdVEqs}

The main result of the paper is given by the following assertion.

\begin{theorem}\label{thm:GroupClassificationGenBurgersKdVEqs}
 A complete list of~$G^\sim$-inequivalent (and, therefore, $\mathcal G^\sim$-inequivalent)
 Lie symmetry extensions in the class~\eqref{eq:GenBurgersKdVEqsGaugedSubclass} is exhausted by the cases given in Table~\ref{tab:CompleteGroupClassificationBurgersKdVEquations}.
\end{theorem}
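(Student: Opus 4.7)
\medskip

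\noindent\textbf{Proof plan.}
Since the class~\eqref{eq:GenBurgersKdVEqsGaugedSubclass} is normalized in the usual sense (Theorem~\ref{thm:EquivalenceGroupGenBurgersKdVEqsGaugeC1A10}) and, moreover, $\pi_*\mathfrak g^\sim=\mathfrak g_\spanindex$, the classification problem reduces to enumerating, up to the action of $\pi_*G^\sim$, the appropriate subalgebras of $\mathfrak g_\spanindex$. The plan is to use the $G^\sim$-invariant triple $(k_1,k_2,k_3)$ introduced in Section~\ref{sec:PropertiesOfAppropriateSubalgebrasGenBurgersKdVEqs} as the principal classifying parameter. By Lemmas~\ref{lem:DimOfLieInvAlgebraBurgersKdV}--\ref{lem:OnOperatorsInvolvingTau}, only finitely many values $(k_1,k_2,k_3)$ are admissible, and each yields a constrained ansatz for the spanning vector fields of~$\mathfrak g_\kappa$.

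First, I would fix a representative of the projection $\varpi_*\mathfrak g_\kappa$ among the four canonical Lie algebras on the $t$-line listed in Lemma~\ref{lem:OnOperatorsInvolvingTau}; this gives a basis $\{D(\tau^i)\}_{i=1}^{k_3}$ with explicit $\tau^i\in\{1,t,t^2\}$, modulo the freedom of adding elements from the kernel of~$\varpi_*$. Second, for each such choice I would substitute the corresponding $Q=D(\tau)+S(\zeta)+P(\chi)$ into the classifying equations~\eqref{eq:DeterminingEqsGenBurgersKdVEqs} and read off the resulting constraints on $\kappa=(A^0,A^2,\dots,A^\EqOrd,B)$. The $j=\EqOrd$ instance of~\eqref{eq:DeterminingEqsGenBurgersKdVEqs1} typically determines $A^\EqOrd$ in separated form, then~\eqref{eq:DeterminingEqsGenBurgersKdVEqs1} with $j<\EqOrd$ pins down the remaining~$A^j$'s, and~\eqref{eq:DeterminingEqsGenBurgersKdVEqs2},~\eqref{eq:DeterminingEqsGenBurgersKdVEqs3} determine $A^0$ and~$B$, respectively. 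Third, I would augment the basis by the additional $S(\zeta^i)$ and $P(\chi^i)$ required by the prescribed values of $k_2$ and~$k_1$, using Lemma~\ref{lem:DimOfLieInvSubalgebraBurgersKdV1} to restrict how these may coexist with the $D$-part.

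Once each case produces a constrained family of arbitrary elements, I would simplify those families by applying the residual part of $\pi_*G^\sim$ that preserves the chosen representative of $\varpi_*\mathfrak g_\kappa$, using the explicit adjoint actions~\eqref{eq:AdjointActionsGenBurgersKdVEqs} to gauge away as many parameter functions in~$\kappa$ as possible and to bring any free multiplicative/additive constants to canonical values (typically $0$, $\pm 1$). In particular, $\mathcal D_*(T)$ is used to normalize the $\tau$-component, $\mathcal S_*(X^1)$ is used to eliminate the $S$-tail of $D(\tau)$ or to scale~$\zeta$, and $\mathcal P_*(X^0)$ is used analogously for~$\chi$; after these reductions each case yields a unique canonical tuple~$\kappa$, which is the entry placed into Table~\ref{tab:CompleteGroupClassificationBurgersKdVEquations}.

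The delicate step will be ensuring both completeness and maximality. Completeness requires a systematic sweep of all admissible $(k_1,k_2,k_3)$ with $|k|\leqslant 5$ subject to the compatibility constraints coming from the commutation relations of $\mathfrak g_\spanindex$ and from Lemma~\ref{lem:DimOfLieInvSubalgebraBurgersKdV1}; maximality requires checking, for each listed representative~$\kappa$, that no specialization of the remaining free parameters in~$\kappa$ produces an accidentally larger symmetry algebra, which would indicate the case secretly belongs to another entry of the table. I would address this by computing, in each case, the full solution space of~\eqref{eq:DeterminingEqsGenBurgersKdVEqs} for the resulting~$\kappa$ and verifying that it has exactly the prescribed dimension $|k|$, thereby certifying that the extension is genuine and that inclusions between cases are correctly recorded. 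The main obstacle will be bookkeeping: sorting cases by $(k_1,k_2,k_3)$ and eliminating redundancies between nominally different cases that turn out to be $\pi_*G^\sim$-equivalent, particularly within the largest block $(k_1,k_2,k_3)=(2,1,3)$ where $\dim\mathfrak g_\kappa=5$.
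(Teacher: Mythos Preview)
Your overall strategy is correct and essentially matches the paper's proof: organize by the invariant triple $(k_1,k_2,k_3)$, simplify basis vector fields using the adjoint actions~\eqref{eq:AdjointActionsGenBurgersKdVEqs}, substitute into the classifying equations~\eqref{eq:DeterminingEqsGenBurgersKdVEqs}, and solve for~$\kappa$. The paper organizes the case split slightly differently---it takes $(k_1,k_2)\in\{(0,0),(0,1),(2,0)\}$ as the outer loop and $k_3$ as the inner loop, rather than fixing $\varpi_*\mathfrak g_\kappa$ first---but this is a matter of presentation, not substance.

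There is, however, a concrete error in your final paragraph. You refer to ``the largest block $(k_1,k_2,k_3)=(2,1,3)$ where $\dim\mathfrak g_\kappa=5$'', but this triple is doubly inadmissible: first, $2+1+3=6\neq 5$; second, and more importantly, Lemma~\ref{lem:DimOfLieInvSubalgebraBurgersKdV1} (which you cite) explicitly excludes $(k_1,k_2)=(2,1)$, since the only allowed pairs are $(0,0)$, $(0,1)$, $(2,0)$. The actual maximal case is $(k_1,k_2,k_3)=(2,0,3)$, which occurs only for $\EqOrd=2$ and gives Case~\ref{202} of the table; for $\EqOrd>2$ the bound drops to $(2,0,2)$ with $\dim\mathfrak g_\kappa=4$. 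This is not merely a typo to fix: the proof that $k_3=3$ is incompatible with $\EqOrd>2$ (and that $k_3\geqslant 2$ is incompatible with $k_2=1$) is where real work happens, since it comes not from the commutation relations alone but from the compatibility of the classifying equations~\eqref{eq:DeterminingEqsGenBurgersKdVEqs} with the constraint $A^\EqOrd\ne 0$. Your plan as stated does not yet isolate this mechanism, and you should expect several ``impossible'' subcases of this kind rather than a single large block to disentangle.
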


\begin{table}[!ht]
\begin{center}\newcounter{tbn}\setcounter{tbn}{-1}
\caption{Complete group classification of the class~\eqref{eq:GenBurgersKdVEqs} (resp.~\eqref{eq:GenBurgersKdVEqsGaugedSubclass}).
\label{tab:CompleteGroupClassificationBurgersKdVEquations}}
\def\arraystretch{1.6} 
\begin{tabular}{|c|c|l|}
\hline
no. & $\kappa$ & \hfil Basis of~$\mathfrak g_\kappa$\\
\hline
\refstepcounter{tbn}\thetbn\label{000}  & $A^j=A^j(t,x),\ A^0=A^0(t,x),\ B=B(t,x)$              & ---\\
\refstepcounter{tbn}\thetbn\label{001}  & $A^j=A^j(x),\ A^0=A^0(x),\ B=B(x)$                    & $D(1)$\\
\refstepcounter{tbn}\thetbn\label{002a} & $A^j=a_je^x,\ A^0=a_0e^x,\ B=be^{2x}$                 & $D(1),D(t)-P(1)$\\
\refstepcounter{tbn}\thetbn\label{002b} & $A^j=a_jx^j|x|^\nu,\ A^0=a_0|x|^\nu,\ B=bx|x|^{2\nu}$ & $D(1),D(t)-S(\nu^{-1})$, $\nu\ne0$\\ 
\refstepcounter{tbn}\thetbn\label{003}  & $A^j=a_jx^{j-2},\ A_0=0,\ B=bx^{-3}$                  & $D(1),D(t)+S(\frac12),D(t^2)+S(t)$\\
\refstepcounter{tbn}\thetbn\label{010a} & $A^j=\alpha^j(t)x^j,\ A^0=0,\ B=\beta(t)x$            & $S(1)$\\
\refstepcounter{tbn}\thetbn\label{010b} & $A^j=\alpha^j(t)x^j,\ A^0=1{+}2\ln|x|,\ B=\beta(t)x{-}x\ln^2|x|$                 & $S(e^t)$\\
\refstepcounter{tbn}\thetbn\label{011a} & $A^j=a_jx^j,\ A^0=0,\ B=bx$                           & $S(1),D(1)$\\
\refstepcounter{tbn}\thetbn\label{011b} & $A^j=a_jx^j,\ A^0=1+2\ln|x|,\ B=bx-x\ln^2|x|$         & $S(e^t),D(1)$\\
\refstepcounter{tbn}\thetbn\label{200}  & $A^j=\alpha^j(t),\ A^0=B=0$                           & $P(1),P(t)$\\
\refstepcounter{tbn}\thetbn\label{201}  & $A^j=a_j,\ A^0=a_0,\ B=bx$                            & $P(\chi^1),P(\chi^2),D(1)$\\
\refstepcounter{tbn}\thetbn\label{202}  & $A^\EqOrd=1,\ A^j=0,\,j\ne \EqOrd,\ A^0=B=0$                    & $P(1),P(t),D(1),D(t)+S(\frac1\EqOrd)$\\
                                        &                                                       & and, for~$\EqOrd=2$, $D(t^2)+S(t)$\\
\hline
\end{tabular}
\end{center}
{\footnotesize\looseness=-1
Here $j=2,\dots,\EqOrd$, $C=1$, $A^1=0$,
$D(\tau)=\tau\p_t-\tau_tu\p_u$,
$S(\zeta)=\zeta x\p_x+(\zeta u+\zeta_t x)\p_u$ and
$P(\chi)=\chi\p_x+\chi_t\p_u$.
The parameters~$\alpha^j$, $\alpha^0$ and~$\beta$ are arbitrary smooth functions of~$t$ with $\alpha^\EqOrd\ne0$.
The parameters~$a_j$, $a_0$ and~$b$ are arbitrary constants with $a_\EqOrd\ne0$.
In~Case~\ref{201}, the parameter functions~$\chi^1$ and~$\chi^2$
are linearly independent solutions of the equation $\chi_{tt}-a_0\chi_t-b\chi=0$,
and thus there are three cases for them depending on the sign of $\Delta:=a_0^2-4b$,

\hspace*{1em}(a) $\chi^1=e^{\lambda_1t}$, $\chi^2=e^{\lambda_2t}$ if $\Delta>0$, where $\lambda_{1,2}=(a_0\pm\sqrt{\Delta})/2$;

\hspace*{1em}(b) $\chi^1=e^{\mu t}$, $\chi^2=te^{\mu t}$ if $\Delta=0$, where $\mu=a_0/2$;

\hspace*{1em}(c) $\chi^1=e^{\mu t}\cos{\nu t}$, $\chi^2=e^{\mu t}\sin{\nu t}$ if $\Delta<0$, where $\mu=a_0/2$ and $\nu=\sqrt{-\Delta}/2$.

$a_\EqOrd=1\bmod G^\sim$ in Cases~\ref{002a}, \ref{002b}, \ref{003}, \ref{011a} and \ref{201}.
Moreover, in Case~\ref{002a} one of the constants~$a_j$'s with $j<\EqOrd$, $a_0$ or~$b$, if it is nonzero, can be set to $\pm1$ by shifts of~$x$.
See also Section~\ref{sec:GenBurgersKdVEqsWithSpace-DependentCoeffs} for the justification of gauging of parameters 
and Remark~\ref{rem:ConditionsOfMaximality} for the conditions 
under which the presented vector fields really span the corresponding maximal Lie invariance algebra.\par 
}
\end{table}

\begin{proof}
Lemmas~\ref{lem:DimOfLieInvSubalgebraBurgersKdV1}--\ref{lem:OnOperatorsInvolvingTau} imply that
any appropriate subalgebra~$\mathfrak s$ of~$\mathfrak g_\spanindex$ has a basis consisting of
\begin{enumerate}\itemsep=0ex
 \item $k_1$ vector fields $Q^i=P(\chi^i)$, $i=1,\dots,k_1$, with linearly independent~$\chi$'s,
 \item $k_2$ vector fields $Q^i=S(\zeta^i)+P(\chi^i)$, $i=k_1+1,\dots,k_1+k_2$, with nonzero~$\zeta$'s,
 \item $k_3$ vector fields $Q^i=D(\tau^i)+S(\zeta^i)+P(\chi^i)$, $i=k_1+k_2+1,\dots,|k|$, with linearly independent~$\tau$'s,
\end{enumerate}
where $(k_1,k_2)\in\{(0,0),(0,1),(2,0)\}$ and $k_3\leqslant3$.
The proof proceeds by separately investigating the cases associated with
the possible range of the tuple of invariant integers $(k_1,k_2,k_3)$.
For each possible value of $(k_1,k_2,k_3)$,
we start with the above form of basis vector fields~$Q$'s of~$\mathfrak s$
and simplify them as much as possible
using the adjoint actions of equivalence transformations presented in~\eqref{eq:AdjointActionsGenBurgersKdVEqs}
and linear recombination of~$Q$'s.
At the same time, we take into account the fact that $\mathfrak s$ is a Lie algebra,
i.e., it is closed with respect to the Lie bracket of vector fields,
$[Q^{i'},Q^{i''}]\in\langle Q^i, i=1,\dots,|k|\rangle$, $i',i''=1,\dots,|k|$.
This leads to constraints for the components of~$Q$'s only if
$k_2+k_3>1$ or $k_2+k_3=1$ and $k_1=2$.
Substituting the components of each simplified~$Q^i$
into the system of classifying equations~\eqref{eq:DeterminingEqsGenBurgersKdVEqs}
yields a system of equations in~$\kappa$
for which $\mathfrak g_\kappa\supset\mathfrak s$.
In total, we have $|k|$ such systems.
We unite them and simultaneously solve for the arbitrary elements~$\kappa$.
The compatibility of the joint system with respect to~$\kappa$
may imply additional constraints for the components of~$Q$'s.
The expression obtained for~$\kappa$ can be simplified
by equivalence transformations whose projected adjoint actions preserve~$\mathfrak s$.
Except Cases~\ref{000} and~\ref{001}, 
the condition for~$\kappa$ with $\mathfrak g_\kappa=\mathfrak s$ 
is obtained by the negation of the corresponding condition represented in Remark~\ref{rem:ConditionsOfMaximality}.

We have to consider the following cases:

\medskip

\noindent$\boldsymbol{k_1=k_2=0.}$
Here $\dim\mathfrak s=k_3$.
For $k_3>1$,
in view of Lemma~\ref{lem:OnOperatorsInvolvingTau}
we can use the simplified form of~$Q^i$, $i=1,\dots,k_3-1$,
derived in the subcase with the preceding value of~$k_3$.

\medskip

\noindent$k_3=0$.
We obtain the general Case~\ref{000},
where the algebra~$\mathfrak s$ coincides with the kernel algebra $\mathfrak g^\cap=\{0\}$
and thus there are no constraints for~$\kappa$.

\medskip

\noindent $k_3=1$.
A basis of~$\mathfrak s$ consists of a single vector field~$Q^1$ with $\tau^1\ne0$.
Successively using the adjoint actions $\mathcal D_*(T)$, $\mathcal S_*(X^1)$ and $\mathcal P_*(X^0)$
given in~\eqref{eq:AdjointActionsGenBurgersKdVEqs} for appropriate functions $X^0$, $X^1$ and~$T$,
this vector field can be mapped to $Q^1=D(1)$.
The system of classifying equations~\eqref{eq:DeterminingEqsGenBurgersKdVEqs}
with the components of~$Q^1$ evidently is $A^j_t=A^0_t=B_t=0$,
which results in Case~\ref{001} of Table~\ref{tab:CompleteGroupClassificationBurgersKdVEquations}.

\medskip

\noindent $k_3=2$.
Modulo $\pi_*G^\sim$-equivalence, basis elements of~$\mathfrak s$ take the form $Q^1=D(1)$ and $Q^2=D(t)+S(\zeta^2)+P(\chi^2)$.
The condition $[Q^1,Q^2]=D(1)+S(\zeta^2_t)+P(\chi^2_t)\in\langle Q^1,Q^2\rangle$ requires
that $\zeta^2_t=\chi^2_t=0$ and thus $\zeta^2,\chi^2=\const$.
Substituting the components of~$Q^1$ and then~$Q^2$ into
the classifying equations~\eqref{eq:DeterminingEqsGenBurgersKdVEqs}
and rearranging leads to the system
\begin{gather}\label{eq:SystemForArbitraryElements002}
\begin{split}
&A^j_t=0,\quad (\zeta^2x+\chi^2)A^j_x+(1-j\zeta^2)A^j=0,\\
&A^0_t=0,\quad (\zeta^2x+\chi^2)A^0_x+A^0=0,\\
&B_t=0,\quad   (\zeta^2x+\chi^2)B_x+(2-\zeta^2)B=0.
\end{split}
\end{gather}
If $\zeta^2=0$, in view of the condition $A^\EqOrd\ne0$ the equation $(\zeta^2x+\chi^2)A^\EqOrd_x+(1-r\zeta^2)A^\EqOrd=0$
implies that $\chi^2\ne0$ and we can set $\chi^2=-1\bmod G^\sim$;
otherwise, we can set $\chi^2=0$ using $\mathcal P_*(-\nu\chi^2)$, where $\nu:=-1/\zeta^2$, which preserves~$Q^1$.
Integrating the system~\eqref{eq:SystemForArbitraryElements002} for each of the subcases,
we respectively obtain Cases~\ref{002a} and~\ref{002b}, where $a_\EqOrd\ne0$ and thus $a_\EqOrd=1\bmod G^\sim$. 
In Case~\ref{002a}, one of the constants~$a_j$'s with $j<\EqOrd$, $a_0$ or~$b$, if it is nonzero, can be set to $\pm1$ by shifts of~$x$.

\medskip

\noindent$k_3=3.$
Modulo $\pi_*G^\sim$-equivalence and linearly combining basis elements,
we can assume that $Q^1=D(1)$, $Q^2=D(t)+S(\zeta^2)+P(\chi^2)$ and $Q^3=D(t^2)+S(\zeta^3)+P(\chi^3)$.
The completeness of the algebra~$\mathfrak s$ with respect to the Lie bracket of vector fields
implies that $[Q^1,Q^2]=Q^1$, $[Q^1,Q^3]=2Q^2$ and $[Q^2,Q^3]=Q^3$.
As in the previous case, the first commutation relation holds only if $\zeta^2,\chi^2=\const$.
The second commutation relation expands to
\[
2D(t)+S(\zeta^3_t)+P(\chi^3_t)=2D(t)+2S(\zeta^2)+2P(\chi^2),
\]
and thus $\zeta^3_t=2\zeta^2$, $\chi^3_t=2\chi^2$.
Integrating these two equations yields $\zeta^3=2\zeta^2t+\zeta^{30}$ and $\chi^3=2\chi^2t+\chi^{30}$,
where $\zeta^{30}$ and~$\chi^{30}$ are constants.
Next, commuting $Q^2$ and $Q^3$ yields
\[
 [Q^2,Q^3]=D(t^2)+S(2\zeta^2t)+P(2\chi^2t+\zeta^{30}\chi^2-\zeta^2\chi^{30}).
\]
We thus have $[Q^2,Q^3]=Q^3$ if and only if $\zeta^{30}=0$ and $(1+\zeta^2)\chi^{30}=0$.
Considering the classifying equations~\eqref{eq:DeterminingEqsGenBurgersKdVEqs}
for $Q^1$, $Q^2$ and~$Q^3$, we obtain the system~\eqref{eq:SystemForArbitraryElements002}
supplemented by the equations
$\chi^{30}A^j_x=0$, $\chi^{30}A^0_x=4\zeta^2-2$ and $\chi^{30}B_x+(2\zeta^2x+3\chi^2)A^0=0$.
If $\chi^{30}\ne0$, then $\zeta^2=-1$, $A^j_x=0$,
and the equations $(1+j)A^j=0$ imply that $A^j=0$ for any~$j$,
which contradicts the condition $A^\EqOrd\ne0$.
Therefore, $\chi^{30}=0$, which requires that $\zeta^2=1/2$, as well as $A^0=0$. We can then set $\chi^2=0$ using $\mathcal P_*(2\chi^2)$.
The remaining determining equations are $A^j_x=(j-2)A^j$ and $B_x=-3B$,
which are readily integrated to Case~\ref{003} with $a_\EqOrd\ne0$, and hence $a_\EqOrd=1\bmod G^\sim$.

\medskip

\noindent $\boldsymbol{k_1=0, k_2=1.}$
Hence $Q^1=S(\zeta^1)+P(\chi^1)$, where $\zeta^1\ne0$.

\medskip

\noindent$k_3=0$.
We can use the adjoint action $\mathcal P_*(\chi^1/\zeta^1)$ to set $\chi^1=0$.
Moreover, multiplying $Q^1$ by a nonzero constant and changing~$t$ if $\zeta^1$ is not a constant,
we can gauge $\zeta^1$ to $\zeta^1=e^{\varepsilon t}$, where $\varepsilon\in\{0,1\}\bmod G^\sim$.
The classifying conditions~\eqref{eq:DeterminingEqsGenBurgersKdVEqs} with components of~$Q^1$
then imply the system consisting of the equations
$xA^j_x=jA^j$, $xA^0_x=2\varepsilon$ and $xB_x-B+\varepsilon xA^0=\varepsilon x$.
The general solution of this system is 
\[
A^j=\alpha^j(t)x^j,\quad 
A^0=\alpha^0(t)+2\varepsilon\ln|x|,\quad 
B=\beta(t)x-\varepsilon(\alpha^0(t)-1)x\ln|x|-\varepsilon x\ln^2|x|,
\]
where $\alpha^j$, $\alpha^0$ and~$\beta$ are arbitrary smooth functions of~$t$ with $\alpha^\EqOrd\ne0$.
The subgroups of the equivalence group~$G^\sim$
whose projections to the space $(t,x,u)$ preserve the appropriate subalgebras
$\mathfrak s=\langle S(1)\rangle$ and $\mathfrak s=\langle S(e^t)\rangle$
are singled out from~$G^\sim$ by the constraints $X^0=0$ and $(X^0,T_t)=(0,1)$, respectively.
Elements from these subgroups allow us to set
$(\alpha^0,\alpha^\EqOrd)=(0,1)$ if $\varepsilon=0$  
or $\alpha^0=1$ if $\varepsilon=1$, 
which corresponds to Case~\ref{010a} or~\ref{010b}.
In the latter case we have chosen the gauge~$\alpha^0=1$ instead of~$\alpha^0=0$
in order to make the corresponding values of the tuple of arbitrary elements~$\kappa$ simpler.

\medskip

\noindent$k_3=1$.
First we reduce, modulo $\pi_*G^\sim$-equivalence, the basis element~$Q^2$ with $\tau^2\ne0$
to the form $Q^2=D(1)$.
The Lie bracket $[Q^2,Q^1]=S(\zeta^1_t)+P(\chi^1_t)$ is in~$\mathfrak s$ provided
that the tuples $(\zeta^1_t,\chi^1_t)$ and $(\zeta^1,\chi^1)$ are linearly dependent,
i.e., $\zeta^1=c_1e^{\varepsilon t}$, $\chi^1=c_0e^{\varepsilon t}$
for some constants $c_0$, $c_1$ and~$\varepsilon$, where $c_1\ne0$.
Multiplying~$Q^1$ by $1/c_1$ and, if $\varepsilon\ne0$, using $\mathcal D_*(\varepsilon t)$,
we can set $c_1=1$ and $\varepsilon\in\{0,1\}$.
Since $\mathcal P_*(c_0)D(1)=\tilde D(1)$ and $\mathcal P_*(c_0)Q^1=\tilde S(e^{\varepsilon t})$,
we can finally reduce~$Q^1$ to the form $Q^1=S(e^{\varepsilon t})$.
Therefore, the corresponding complete system for the arbitrary elements includes
the system from the previous case as a subsystem supplemented by the equations $A^j_t=A^0_t=B_t=0$. 
The general solution of the complete system is of the same form as in the case $k_3=0$, 
where the functions~$\alpha^j$, $\alpha^0$ and~$\beta$ should be replaced by 
the arbitrary constants~$a_j$, $a_0$ and~$b$ with $a_\EqOrd\ne0$. 
Similarly to the case $k_3=0$, the appropriate subalgebras
$\mathfrak s=\langle S(1),D(1)\rangle$ and $\mathfrak s=\langle S(e^t),D(1)\rangle$
are preserved by projections of equivalence transformations with
$(T_t,(X^1_t/X^1)_t)=(1,0)$ and $(T_t,(e^{-t}X^1_t/X^1)_t)=(1,0)$,
which makes possible the gauges $(a_0,a_\EqOrd)=(0,1)$ or~$a_0=1$
if $\varepsilon=0$ or $\varepsilon=1$ 
and obviously results in Cases~\ref{011a} and~\ref{011b}, respectively.

\medskip

\noindent$k_3\geqslant 2$. This case cannot be realized.
Indeed, otherwise we would additionally have, modulo $\pi_*G^\sim$-equivalence,
the basis element $Q^3=D(t)+S(\zeta^3)+P(\chi^3)$
and would still be able to repeat the consideration of the case $k_3=1$,
deriving the expression $A^\EqOrd=a_\EqOrd x^\EqOrd$ with $a_\EqOrd\ne0$.
At the same time, the classifying equation~\eqref{eq:DeterminingEqsGenBurgersKdVEqs1}
for $j=\EqOrd$ with the components of~$Q^3$ and with the above~$A^\EqOrd$ reads $\chi^3A^\EqOrd_x+A^\EqOrd=0$,
which would imply $a_\EqOrd=0$, giving a contradiction.

\medskip

\noindent$\boldsymbol{k_1=2, k_2=0}.$
Here we have $Q^i=P(\chi^i)$, $i=1,2$.
The classifying equations~\eqref{eq:DeterminingEqsGenBurgersKdVEqs} in this case imply 
that $A^j_x=0$, $A^0_x=0$ and $\chi^iB_x+\chi^i_tA^0=\chi^i_{tt}$.
Differentiating this last equation with respect to $x$ leads to $B_{xx}=0$,  
i.e., $B=B^1(t)x+B^0(t)$, and therefore $\chi^i_{tt}=A^0\chi^i_t+B^1\chi^i$.

\medskip

\noindent$k_3=0$.
We apply $\mathcal S_*(1/\chi^1)$ to~$\mathfrak s$ and set $\chi^1=1$, which implies that $\chi^2_t\ne0$.
Then using the adjoint action $\mathcal D_*(\chi^2)$ we can set $\chi^2=t$.
The equations $\chi^i_{tt}=A^0\chi^i_t+B^1\chi^i$, $i=1,2$, then imply that $A^0=B^1=0$.
The components of the transformation $\mathcal P(X^0)$ for the arbitrary elements are $\tilde A^j=A^j$, $\tilde A^0=A^0$ and $\tilde B=B+X^0_{tt}$.
Thus, choosing $X^0_{tt}=-B^0$ allows us to gauge $B=0$, leading to Case~\ref{200}.

\medskip

\noindent$k_3=1$.
We start the simplification of basis elements of~$\mathfrak s$ from~$Q^3$ with $\tau^3\ne0$,
setting, modulo $\pi_*G^\sim$-equivalence, $Q^3=D(1)$.
It then follows from the classifying equations~\eqref{eq:DeterminingEqsGenBurgersKdVEqs}
with the components of~$Q^3$ that $A^j_t=A^0_t=B^1_t=B^0_t=0$.
We choose
\[
X^0=\dfrac{B^0}{B^1}\text{ \ if \ } B^1\neq0;\quad
X^0=\dfrac{B^0}{A^0}t\text{ \ if \ }B^1=0,\,A_0\neq0;\quad
X^0=-\dfrac{B^0}{2}t^2\text{ if }B^1=A_0=0.
\]
Since $\mathcal P_*(X^0) D(1)=\tilde D(1)+\tilde P(X^0_t)$,
$\mathcal P_*(X^0) P(\chi^i)=\tilde P(\chi^i)$, $i=1,2$,
and for the chosen value of~$X^0$ we have $X^0_t\in\langle\chi^1,\chi^2\rangle$,
the adjoint action $\mathcal P_*(X^0)$ preserves the algebra~$\mathfrak s$.
At the same time, the component of $\mathcal P(X^0)$ for~$B$ is $\tilde B=B+X^0_{tt}-A^0 X^0_t$
and therefore $\tilde B^0=B^0+X^0_{tt}-A^0X^0_t-B^1X^0=0$,
which yields Case~\ref{201}, where $a_\EqOrd=1\bmod G^\sim$.

\medskip

\noindent$k_3\geqslant2$.
We have one more basis element whose simplified form is $Q^4=D(t)+S(\zeta^4)+P(\chi^4)$.
Commuting $Q^3$ and $Q^4$ yields $[Q^3,Q^4]=D(1)+S(\zeta^4_t)+P(\chi^4_t)$,
which is in the algebra~$\mathfrak s$ provided that $\zeta^4=\const$.
The commutator of $Q^4$ with $Q^i$, $i=1,2$, gives
$
 [Q^4,Q^i]=P(t\chi^i_t-\zeta^4\chi^i),
$
which is in the algebra~$\mathfrak s$ only if $\chi^1=1$ and $\chi^2=t$
up to linearly combining $P(\chi^1)$ and $P(\chi^2)$.
It then follows that $A^0=B^1=0$.
We can let $B^0=0\bmod G^\sim$, and the classifying equation~\eqref{eq:DeterminingEqsGenBurgersKdVEqs3}
then implies that $\chi^4_{tt}=0$.
Upon linearly combining with $P(1)$ and $P(t)$ we can thus set $\chi^4=0$.
Moreover, the classifying equation~\eqref{eq:DeterminingEqsGenBurgersKdVEqs1} for $j=\EqOrd$
with the components of~$Q^4$ implies that $\zeta^2=1/\EqOrd$ and thus,
in view of the same equation for other $j$'s, $A^j=0$ if $j\ne \EqOrd$.
We can gauge $A^\EqOrd=1\bmod G^\sim$.
Now suppose that we also have a $Q^5=D(t^2)+S(\zeta^5)+P(\chi^5)$.
Then, the classifying equations~\eqref{eq:DeterminingEqsGenBurgersKdVEqs} for $A^\EqOrd$ and $A^0$ require
that $2t-\EqOrd\zeta^5=0$ and $2\zeta^5_t-2=0$.
This system is consistent only for $\EqOrd=2$, where $\zeta^5=t$.
This is why Case~\ref{202} splits depending on the value of~$\EqOrd$.
The equation~\eqref{eq:DeterminingEqsGenBurgersKdVEqs3} with $A^0=B=0$ implies that $\chi^5_{tt}=0$
and thus we can set $\chi^5=0$ upon linearly combining $Q^5$ with $Q^1$ and $Q^2$.
\end{proof}

\begin{corollary}
An $\EqOrd$th order evolution equation of the form~\eqref{eq:GenBurgersKdVEqs}
is reduced to the simplest form $u_t+uu_x=u_\EqOrd$ by a point transformation
if and only if the dimension of its maximal Lie invariance algebra is greater than three.
\end{corollary}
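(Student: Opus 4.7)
My plan is to treat the corollary as essentially a bookkeeping consequence of Theorem~\ref{thm:GroupClassificationGenBurgersKdVEqs} (the completed group classification) together with the gauging reduction from Proposition~\ref{pro:GroupClassificationsOfGenBurgersKdVEqsAndGaugedGenBurgersKdVEqs}. The equation $u_t+uu_x=u_\EqOrd$ sits in the gauged subclass~\eqref{eq:GenBurgersKdVEqsGaugedSubclass} as the member with $A^\EqOrd=1$ and all other arbitrary elements vanishing, which is precisely the representative of Case~\ref{202} in Table~\ref{tab:CompleteGroupClassificationBurgersKdVEquations}.

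For the ``only if'' direction, I would use the standard fact that a point transformation between two equations induces an isomorphism of their maximal Lie invariance algebras, hence preserves the dimension. Reading off Case~\ref{202}, the algebra of $u_t+uu_x=u_\EqOrd$ has basis $\{P(1),P(t),D(1),D(t)+S(1/\EqOrd)\}$ when $\EqOrd\geqslant3$ and additionally $D(t^2)+S(t)$ when $\EqOrd=2$, so its dimension is four or five. In either case the dimension exceeds three, which gives the implication immediately.

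For the ``if'' direction, I would argue as follows. Suppose $\mathcal L_\theta$ is an equation in the class~\eqref{eq:GenBurgersKdVEqs} with $\dim\mathfrak g_\theta>3$. By Proposition~\ref{pro:GroupClassificationsOfGenBurgersKdVEqsAndGaugedGenBurgersKdVEqs}, $\mathcal L_\theta$ is mapped by an equivalence transformation from $G^\sim_{\mbox{\tiny\eqref{eq:GenBurgersKdVEqs}}}$ to some $\mathcal L_\kappa$ in the gauged subclass~\eqref{eq:GenBurgersKdVEqsGaugedSubclass}, and this transformation preserves the dimension of the maximal Lie invariance algebra, so $\dim\mathfrak g_\kappa>3$. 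By Theorem~\ref{thm:GroupClassificationGenBurgersKdVEqs}, $\mathcal L_\kappa$ is $G^\sim$-equivalent to one of the twelve listed classification cases. A direct inspection of Table~\ref{tab:CompleteGroupClassificationBurgersKdVEquations} shows that Cases~\ref{000}--\ref{201} all have $\dim\mathfrak g_\kappa\leqslant 3$ (with equality only in Cases~\ref{003} and~\ref{201}), while Case~\ref{202} has dimension four for $\EqOrd\geqslant 3$ and five for $\EqOrd=2$. Thus $\mathcal L_\kappa$ must be $G^\sim$-equivalent to a Case~\ref{202} representative, whose canonical form is exactly $u_t+uu_x=u_\EqOrd$. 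Composing the successive equivalence transformations and projecting to the $(t,x,u)$ space (which is well defined since every element of $G^\sim_{\mbox{\tiny\eqref{eq:GenBurgersKdVEqs}}}$ is projectable by construction) yields the required point transformation.

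There is no real obstacle here beyond careful verification that the dimension count for each row of Table~\ref{tab:CompleteGroupClassificationBurgersKdVEquations} genuinely caps at three outside of Case~\ref{202}; this requires checking Cases~\ref{010a}--\ref{011b} where the parameter~$\varepsilon$ (or an analogous choice) could in principle allow extra extensions, and confirming that no further extension occurs beyond what is listed. Since Theorem~\ref{thm:GroupClassificationGenBurgersKdVEqs} already asserts completeness of the list, this check reduces to the simple numerical observation indicated above, and the corollary then follows at once.
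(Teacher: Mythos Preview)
Your proposal is correct and matches the paper's implicit reasoning: the corollary is stated immediately after Theorem~\ref{thm:GroupClassificationGenBurgersKdVEqs} with no separate proof, precisely because it is the bookkeeping observation you describe---only Case~\ref{202} in Table~\ref{tab:CompleteGroupClassificationBurgersKdVEquations} has a maximal Lie invariance algebra of dimension exceeding three, and point equivalence preserves this dimension. Your final caveat about Cases~\ref{010a}--\ref{011b} is unnecessary, since the table already records the \emph{maximal} algebras for each case and the completeness assertion of the theorem guarantees that any further extension would place the equation in a higher-dimensional case already listed.
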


\begin{remark}
Case~\ref{011a} can be merged with Case~\ref{002b} as the particular subcase with $\nu=0$
if we choose another second basis element, $\tilde Q^2=-\nu Q^2=S(1)-D(\nu t)$.
\end{remark}

\begin{remark}\label{rem:ConditionsOfMaximality}
Each of the subalgebras of~$\mathfrak g$
whose bases are presented in the third column of Table~\ref{tab:CompleteGroupClassificationBurgersKdVEquations}
is really the maximal Lie invariance algebra for the general case of values of the arbitrary elements~$\kappa$
given in the same row.%
\noprint{
In some cases, it is not difficult to explicitly indicate the necessary and sufficient conditions for~$\kappa$
under which there is no additional Lie symmetry extension.
They can be found by substituting the corresponding expressions for $\kappa$'s components
into the system of classifying conditions~\eqref{eq:DeterminingEqsGenBurgersKdVEqs}.
Thus,
there are no constraints in Cases~\ref{002a}, \ref{202}, \ref{011a} and~\ref{011b};
$a_0\neq0$ if $\nu=-2$ and $(a_2,\dots,a_{\EqOrd-1},b)\neq(0,\dots,0,0)$ if $\nu=-\EqOrd$ with $\EqOrd>2$ in Case~\ref{002b};
$\EqOrd>2$ or $b\ne0$ in Case~\ref{003};
$(a_2,\dots,a_{\EqOrd-1})\neq(0,\dots,0)$ or $(\EqOrd-2)^2b\ne(\EqOrd-1)a_0^2$ in Case~\ref{201};
$(\alpha^2_t,\dots,\alpha^\EqOrd_t,\beta_t)\neq(0,\dots,0,0)$ in Cases~\ref{010a} and~\ref{010b}.
}
For most of the classification cases, it is not difficult to explicitly indicate the necessary and sufficient conditions for~$\kappa$
under which there is an additional Lie symmetry extension.
They can be found by substituting the corresponding expressions for $\kappa$'s components
into the system of classifying conditions~\eqref{eq:DeterminingEqsGenBurgersKdVEqs}.
Thus, these conditions are
$a_0=0$ if $\nu=-2$ or $a_j=b=0$ with $j\ne \EqOrd$ if $\nu=-\EqOrd$ with $\EqOrd>2$ in Case~\ref{002b};
$\EqOrd=2$ and $b=0$ in Case~\ref{003};
$\alpha^j_t=\beta_t=0$ in Cases~\ref{010a} and~\ref{010b};
$a_j=0$ for $j\ne \EqOrd$ and $(\EqOrd-2)^2b=(\EqOrd-1)a_0^2$ in Case~\ref{201}.
There are no additional extensions in Cases~\ref{002a}, \ref{011a}, \ref{011b} and~\ref{202}.
An additional extension exists for Case~\ref{200} if and only if
the parameter functions~$\alpha^j$'s satisfy the system
$(\zeta^1t^2+\tau^1t+\tau^0)\alpha^j_t+(2\zeta^1t+\tau^1-j(\zeta^1t+\zeta^0))\alpha^j=0$
for some constants~$\zeta^0$, $\zeta^1$, $\tau^0$ and~$\tau^1$.
Another form of this condition, which is convenient for checking, is that
\[
\left(\frac{\alpha^j}{\alpha^\EqOrd}\right)_t=(\EqOrd-j)f\frac{\alpha^j}{\alpha^\EqOrd},\quad
\EqOrd\alpha^\EqOrd\alpha^j_t-j\alpha^j\alpha^\EqOrd_t=(\EqOrd-j)g\alpha^j\alpha^\EqOrd,
\]
for some functions~$f=f(t)$ and~$g=g(t)$ constrained by $g_{tt}+3gg_t+g^3=0$, $2(f_t+gf)=g_t+g^2$.
See also the claim on the most complicated Case~\ref{001} in Remark~\ref{rem:OnGenCaseofGenBurgersKdVEqsWithSpace-DependentCoeffs}.
\end{remark}

\begin{remark}\label{rem:OnCompletenessOfRestrictionsForAppropriateSubalgebras}
The proof of Theorem~\ref{thm:GroupClassificationGenBurgersKdVEqs} shows 
that the system of restrictions for appropriate subalgebras of~$\mathfrak g_\spanindex$ 
presented in Lemmas~\ref{lem:DimOfLieInvAlgebraBurgersKdV}, \ref{lem:DimOfLieInvSubalgebraBurgersKdV1} 
and~\ref{lem:OnOperatorsInvolvingTau} is not exhaustive. 
It can be completed by the conditions restricting the value set of~$k_3$ 
depending on values of~$k_1$ and~$k_2$,
\begin{gather*}
k_3\in\{0,1\}\quad \mbox{if}\quad k_2=1;\\  
\mbox{if}\quad k_1=2,\quad \mbox{then}\quad 
k_3\in\{0,1,2\}\quad \mbox{for}\quad \EqOrd>2
\quad \mbox{and}\quad 
k_3\in\{0,1,3\}\quad \mbox{for}\quad \EqOrd=2. 
\end{gather*}
For each value of the tuple $k=(k_1,k_2,k_3)$ satisfying the extended set of restrictions, 
there exists an appropriate subalgebra of~$\mathfrak g_\spanindex$ admitting this value of~$k$.
\end{remark}


\section{Alternative classification cases}\label{sec:AlternativeClassificationCases}

There are various possibilities for choosing representatives
in equivalence classes of pairs $(\mathfrak s, \{\mathcal L_\kappa\mid\mathfrak g_\kappa=\mathfrak s\})$,
where $\mathfrak s$ is an appropriate subalgebra of~$\mathfrak g$.
We have tried to simplify the representation of a pair by paying more attention to the pair's second entry.
In most cases the optimal choice is obvious 
and coincides with the selection carried out for Table~\ref{tab:CompleteGroupClassificationBurgersKdVEquations}.
At the same time, there are other options for the proof and representation of results 
in the cases $(k_1,k_2,k_3)=(2,0,1)$ and $(k_1,k_2)=(0,1)$.    

\medskip

\noindent$\boldsymbol{(k_1,k_2)=(0,1).}$
We follow the proof of Theorem~\ref{thm:GroupClassificationGenBurgersKdVEqs}
but reduce the basis vector fields to another form.
If $k_3=0$, we set, modulo $G^\sim$-equivalence, $Q^1=S(t^\varepsilon)$
instead of $Q^1=S(e^{\varepsilon t})$, where still $\varepsilon\in\{0,1\}$.
In the case $k_3=1$, for~$Q^2$ we choose the form $Q^2=D(t)$
in order to be able to set $Q^1=S(t^\varepsilon)$ again.
For $\varepsilon=1$, this gives the following alternative cases,
which are related to the corresponding cases of Table~\ref{tab:CompleteGroupClassificationBurgersKdVEquations}
by the equivalence transformation with $T=e^{\varepsilon t}$, $X^1=1$ and~$X^0=0$:

\begin{center}%
\def\arraystretch{1.5}\tabcolsep=1ex
\begin{tabular}{|c|c|l|}
\hline
\,$\tilde{\mbox{\ref{010b}}}$\, & $A^j=\alpha^j(t)x^j,\ A^0=\alpha^0(t)+2t^{-1}\ln|x|,$  & \\
                                & $B=\beta(t)x-\alpha^0(t)t^{-1}x\ln|x|-xt^{-2}\ln^2|x|$ & $S(t)$\\
\,$\tilde{\mbox{\ref{011b}}}$\, & $A^j=a_jt^{-1}x^j,\ A^0=a_0t^{-1}+2t^{-1}\ln|x|,$      & \\
                                & $\hspace*{2em} B=bt^{-2}x-a_0t^{-2}x\ln|x|-xt^{-2}\ln^2|x|$\hspace*{2em}                & $S(t),D(t)$\hspace*{12em}\\
\hline
\end{tabular}
\end{center}

\noindent$\boldsymbol{(k_1,k_2,k_3)=(2,0,1)}.$
Instead of setting $Q^3=D(1)$ in the proof for this case,
we can simplify the basis elements~$Q^1$ and~$Q^2$ to~$P(1)$ and~$P(t)$ as in the case $(k_1,k_2,k_3)=(2,0,0)$.
Then readily $A^j_x=0$, $A^0=0$ and, modulo $G^\sim$-equivalence, $B=0$.
Consider the subclass~$\mathcal K$ of equations from the class~\eqref{eq:GenBurgersKdVEqsGaugedSubclass}
with values of~$\kappa$ satisfying the above constraints,
\[\mathcal K=\{\mathcal L_\kappa\mid A^j_x=0,A^0=0,B=0,A^\EqOrd\ne0\}.\]
The subclass~$\mathcal K$ turns out to be normalized
with respect to its usual equivalence group~$G^\sim_{\mathcal K}$, which is finite-dimensional and
consists of the transformations in the space of $(t,x,u,A^2,\dots,A^\EqOrd)$
whose components are of the form~\eqref{eq:PointTransformationBetweenGenBurgersKdVEqsGaugeC1A10a}--\eqref{eq:PointTransformationBetweenGenBurgersKdVEqsGaugeC1A10b}
with
\[
T=\frac{c_1t+c_2}{c_3t+c_4},\quad
X^1=\frac{c_5}{c_3t+c_4},\quad
X^0=\frac{c_6t+c_7}{c_3t+c_4},
\]
where $c_1$, \dots, $c_7$ are arbitrary constants with $(c_1c_4-c_2c_3)c_5\ne0$ that are defined up to a nonzero multiplier.
Hence the equivalence algebra~$\mathfrak g^\sim_{\mathcal K}$ of~$\mathcal K$ is spanned by
$\hat D(1)$, $\hat D(t)$, $\hat D(t^2)+\hat S(t)$, $\hat S(1)$, $\hat P(1)$ and $\hat P(t)$.
The kernel Lie invariance algebra of~$\mathcal K$ is $\mathfrak g^\cap_{\mathcal K}=\langle P(1),P(t)\rangle$.
The restriction of $G^\sim$-equivalence to the subclass~$\mathcal K$ coincides with $G^\sim_{\mathcal K}$-equivalence.
This is why, up to $G^\sim$-equivalence, we can assume that in this case
the appropriate subalgebra~$\mathfrak s$ is spanned by $Q^1=P(1)$, $Q^2=P(t)$ and $Q^3=D(\tau^3)+S(\zeta^3)$,
and there are three cases for $(\tau^3,\zeta^3)$ and~$(A^2,\dots,A^\EqOrd)$:

(a)  $\tau^3=1$, $\zeta^3=\sigma\in\{0,1\}$, $A^j=a_je^{j\sigma t}$;

(b)  $\tau^3=t$, $\zeta^3=\sigma=\const\geqslant0$, $A^j=a_jt^{-1}|t|^{j\sigma}$;

(c)  $\tau^3=t^2+1$, $\zeta^3=\sigma=\const\geqslant0$, $A^j=a_j(t^2+1)^{j/2-1}e^{j\sigma\arctan t}$.

\noindent
Therefore, instead of the single Case~\ref{201} we have the following three cases:

\begin{center}%
\def\arraystretch{1.5}\tabcolsep=1ex
\begin{tabular}{|c|c|l|}
\hline
\,$\tilde{\mbox{\ref{201}a}}$\, &       $A^j=a_je^{j\sigma t},\ A^0=B=0$                                   & $P(1),P(t),D(1)+S(\sigma)$\\
  $\tilde{\mbox{\ref{201}b}}$   &       $A^j=a_jt^{-1}|t|^{j\sigma},\ A^0=B=0$                             & $P(1),P(t),D(t)+S(\sigma)$\\
  $\tilde{\mbox{\ref{201}c}}$   & \quad $A^j=a_j(t^2+1)^{j/2-1}e^{j\sigma\arctan t},\ A^0=B=0$\hspace*{1em}& $P(1),P(t),D(t^2+1)+S(t+\sigma)$\hspace*{2em}\\
\hline
\end{tabular}
\\[1ex]
{\footnotesize 
Modulo $G^\sim$-equivalence, $\sigma\in\{0,1\}$, $\sigma\geqslant0$ and $\sigma\geqslant0$ 
in Cases~$\tilde{\mbox{\ref{201}a}}$, $\tilde{\mbox{\ref{201}b}}$ and~$\tilde{\mbox{\ref{201}c}}$, respectively. \hfill\null
}
\end{center}

The advantage of this form for Case~\ref{201} is that the vector fields~$Q^1$ and~$Q^2$ then have the evident interpretation as
generators of translations with respect to the space variable~$x$ and Galilean boosts, respectively.
Meanwhile the forms of $Q^3$ and, especially, $A^j$ become more complicated. 
This is why we chose the previous form of Case~\ref{201} for Table~\ref{tab:CompleteGroupClassificationBurgersKdVEquations}.
Subcases of this case are related to subcases of the alternative Case~$\tilde{\mbox{\ref{201}}}$ by equivalence transformations
of the form~\eqref{eq:EquivalenceTransformationsGenBurgersKdVEqsGaugeC1A10}, where $X^0=0$ and
\[\arraycolsep=0ex
\begin{array}{llll}
\mbox{\ref{201}a}\to\tilde{\mbox{\ref{201}b}}\colon\quad&T=e^{(\lambda_2-\lambda_1)t},\quad& X^1=e^{-\lambda_1t},     \quad&\sigma=-\lambda_1/(\lambda_2-\lambda_1);\\[1ex]
\mbox{\ref{201}b}\to\tilde{\mbox{\ref{201}a}}\colon\quad&T=t,                         \quad& X^1=e^{-\mu t},          \quad&\sigma=-\mu;\\[1ex]
\mbox{\ref{201}c}\to\tilde{\mbox{\ref{201}c}}\colon\quad&T=\tan\nu t,                 \quad& X^1=e^{-\mu t}/\cos\nu t,\quad&\sigma=-\mu/\nu.
\end{array}
\]

\section{Equations with time-dependent coefficients}\label{sec:GenBurgersKdVEqsWithTime-DependentCoeffs}

To study Lie symmetries of equations from the class~\eqref{eq:GenBurgersKdVEqs}
with coefficients depending at most on~$t$, it is again convenient to start with a wider class,
which is the subclass~$\mathcal K_0$ of the class~\eqref{eq:GenBurgersKdVEqs}
singled out by the constraint $C_x=0$ (resp.\ $A^\EqOrd_x=0$)
implying $X_{xx}=0$ for admissible transformations.

\begin{proposition}\label{pro:EquivGroupoidOfGenBurgersKdVEqsGaugeCx0OrArx0}
The class~$\mathcal K_0$ is normalized in the usual sense.
Its usual equivalence group is constituted by the transformations of the form
\begin{gather*}
\tilde t=T(t),\quad \tilde x=X^1(t)x+X^0(t),\quad \tilde u=U^1(t)u+U^0(t,x),
\\
\tilde A^j=\frac{(X^1)^j}{T_t}A^j,\quad
\tilde A^1=\frac{X^1}{T_t}\left(A^1+\frac{U^0}{U^1}C-\frac{X^1_tx+X^0_t}{X^1}\right),\quad
\tilde A^0=\frac{1}{T_t}\left(A^0+\frac{U^1_t}{U^1}+\frac{U^0_x}{U^1}C\right),
\\
\tilde B=\frac{U^1}{T_t}B+\frac{U^0_t}{T_t}
+\frac{U^0_x}{T_t}\left(\frac{U^0}{U^1}C-\frac{X^1_tx+X^0_t}{X^1}\right)
-\sum_{k=0}^\EqOrd\frac{U^0_k}{(X^1)^k}\tilde A^k,\quad
\tilde C=\frac{X^1}{T_tU^1}C,
\end{gather*}
where $j=2,\dots,\EqOrd$,
and $T=T(t)$, $X^1=X^1(t)$, $X^0=X^0(t)$, $U^1=U^1(t)$ and $U^0=U^0(t,x)$ are arbitrary smooth functions of their arguments
with $T_tX^1U^1\ne0$.
\end{proposition}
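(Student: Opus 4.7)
My plan is to deduce the result from the normalization of the superclass~\eqref{eq:GenBurgersKdVEqs} established in Proposition~\ref{pro:EquivGroupoidOfGenBurgersKdVEqs}, together with the additional constraint $C_x=0$ that singles out $\mathcal K_0$. First I would observe that any admissible transformation in $\mathcal K_0$ is also admissible in the larger class, so its $(t,x,u)$-components must have the form $\tilde t=T(t)$, $\tilde x=X(t,x)$, $\tilde u=U^1(t)u+U^0(t,x)$ with $T_tX_xU^1\ne 0$, and the $C$-component of such a transformation is $\tilde C=X_xC/(T_tU^1)$. Imposing the subclass condition $\tilde C_{\tilde x}=0$ on this relation while using $C_x=0$ and $C\ne 0$ immediately forces $X_{xx}=0$, so $X=X^1(t)x+X^0(t)$ with $X^1\ne0$. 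This crucial reduction drastically simplifies all subsequent computations.

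The next step is to apply the direct method under the now-established ansatz $X=X^1(t)x+X^0(t)$. The total-derivative operators take the convenient form
\[\mathrm D_{\tilde x}=\frac{1}{X^1}\mathrm D_x,\qquad \mathrm D_{\tilde t}=\frac{1}{T_t}\left(\mathrm D_t-\frac{X^1_tx+X^0_t}{X^1}\mathrm D_x\right),\]
so that $\tilde u_{\tilde k}=(U^1u_k+U^0_k)/(X^1)^k$ for $k\ge 1$, sidestepping the Fa\`a di Bruno complications that would arise for more general $X$. Substituting everything into the target equation $\mathcal L_{\tilde\theta}$, eliminating $u_t$ by means of the source equation $\mathcal L_\theta$, and splitting the resulting identity with respect to the parametric derivatives $u,u_x,u_2,\dots,u_\EqOrd$ would produce, coefficient by coefficient, the pointwise formulas for $\tilde A^j$ ($j=2,\dots,\EqOrd$), $\tilde A^1$, $\tilde A^0$, and $\tilde B$ stated in the proposition, exactly as in the derivation underlying Theorem~\ref{thm:EquivalenceGroupGenBurgersKdVEqsGaugeC1A10}, but now without any gauge imposed on the arbitrary elements.

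The final step is the verification of normalization itself. I would check that the resulting system imposes no further constraints on the parameter functions $T,X^1,X^0,U^1,U^0$ beyond the nondegeneracy condition $T_tX^1U^1\ne 0$, and that for each admissible choice of these parameter functions the target tuple $\tilde\theta$ again lies in the solution set defining $\mathcal K_0$; in particular $\tilde C_{\tilde x}=0$ is automatic since $\tilde C$ depends only on $\tilde t$. Together these facts would show that every element of the equivalence groupoid of $\mathcal K_0$ arises from a point transformation in the joint space $(t,x,u,\theta)$ whose transformational part depends only on the equation variables and not on $\theta$, and the collection of such transformations forms a group. The main obstacle I anticipate is the bookkeeping in the direct-method splitting, specifically separating the $x$-dependent contributions carried by $X^1_tx+X^0_t$ and $U^0_x$ from the purely $t$-dependent parts; this is made tractable precisely because $X_{xx}=0$ renders each transformed derivative $\tilde u_{\tilde k}$ linear in $u_k,\dots,u_0$ with coefficients that are either functions of $t$ alone or $x$-derivatives of $U^0$, so no higher-order cross-terms proliferate.
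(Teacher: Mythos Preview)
Your proposal is correct and follows essentially the same approach as the paper: the paper states (in the sentence immediately preceding the proposition) that the constraint $C_x=0$ singling out~$\mathcal K_0$ from the normalized class~\eqref{eq:GenBurgersKdVEqs} ``impl[ies] $X_{xx}=0$ for admissible transformations,'' and the explicit transformation formulas then follow by the same direct-method computation already carried out for Theorem~\ref{thm:EquivalenceGroupGenBurgersKdVEqsGaugeC1A10}. Your plan spells out precisely these steps, including the derivation of $X_{xx}=0$ from $\tilde C_{\tilde x}=0$, $C_x=0$ and $C\ne0$, and the verification that no further constraints arise on the parameter functions.
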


Consider the subclass~$\mathcal K_1$ obtained by attaching the constraints
$A^0_x=0$, $A^1_{xx}=0$, $A^j_x=0$, $j=2,\dots,\EqOrd$, $C_x=0$ and $B_{xx}=0$ 
to the auxiliary system for arbitrary elements.
It is also normalized in the usual sense and its usual equivalence group is 
the subgroup of the usual equivalence group~$G^\sim_{\mathcal K_0}$ of the class~$\mathcal K_0$
that is associated with the constraint $U^0_{xx}=0$, i.e., $U^0=U^{01}(t)x+U^{00}(t)$.
Note that we can reparameterize the class~$\mathcal K_1$ by
representing $B=B^1(t)x+B^0(t)$, $A^1=A^{11}(t)x+A^{10}(t)$ 
and assuming the coefficients~$B^1$, $B^0$, $A^{11}$ and~$A^{10}$ as arbitrary elements instead of~$B$ and~$A^1$. 
The transformation component for~$B$ simplifies to 
\[
\tilde B=\frac{U^1}{T_t}B+\frac{U^0_t}{T_t}-\frac{U^0_x}{T_t}A^1
-\frac{U^0}{T_t}\left(A^0+\frac{U^1_t}{U^1}+\frac{U^0_x}{U^1}C\right).
\]

The next intermediate subclass~$\mathcal K_2$ is singled out 
by strengthening the constraint for~$A^1$ to $A^1_x=0$. 
In fact, this can be realized by gauging~$A^1$ in the class~$\mathcal K_0$ 
up to $G^\sim_{\mathcal K_0}$-equivalence. The transformational properties of the subclass~$\mathcal K_2$ 
are similar to those of the subclasses studied in Section~\ref{sec:AlternativeGauges}, 
which is related to bringing the constraint for~$A^1$ into the foreground 
among the constraints or gauges for other arbitrary elements; 
see further discussions below. 
Since the arbitrary element~$C$ is still not gauged to one, 
it parameterizes the $u$-component of admissible transformations in~$\mathcal K_2$, 
$
U^{01}=X^1_tU^1/(X^1C),
$
and this fact can again be interpreted in terms of generalized equivalence group. 

\begin{subequations}\label{eq:AdmTransGenBurgersKdVEqsSubclassK2}

\begin{theorem}\label{thm:EquivGroupoidOfGenBurgersKdVEqsSubclassK2}
The equivalence groupoid of the subclass~$\mathcal K_2$ of the class~\eqref{eq:GenBurgersKdVEqs}
singled out by the constraints $A^k_x=0$, $k=0,\dots,\EqOrd$, $C_x=0$ and $B_{xx}=0$ 
consists of the triples $(\theta,\tilde\theta,\varphi)$'s, 
where the point transformation~$\varphi$ is of the form
\begin{gather}\label{eq:AdmTransGenBurgersKdVEqsSubclassK2a}
\tilde t=T,\quad 
\tilde x=X^1x+X^0,\quad 
\tilde u=U^1u+\frac{X^1_tU^1}{X^1C}x+U^{00}, 
\end{gather}
the arbitrary-element tuples~$\theta$ and~$\tilde\theta$ are related according to
\begin{gather}\label{eq:AdmTransGenBurgersKdVEqsSubclassK2b}
\tilde A^j=\frac{(X^1)^j}{T_t}A^j,\quad
\tilde A^1=\frac{X^1}{T_t}\left(A^1+\frac{U^{00}}{U^1}C-\frac{X^0_t}{X^1}\right),\quad
\tilde A^0=\frac{1}{T_t}\left(A^0+\frac{U^1_t}{U^1}+\frac{X^1_t}{X^1}\right),
\\\label{eq:AdmTransGenBurgersKdVEqsSubclassK2c}
\tilde B=
\frac{U^1}{T_t}B+\left(\frac{X^1_tU^1}{X^1C}\right)_t\frac x{T_t}+\frac{U^{00}_t}{T_t}-\frac{X^1_tU^1}{X^1C}\frac{A^1}{T_t}
-\left(\frac{X^1_tU^1}{X^1C}x+U^{00}\right)\tilde A^0,
\noprint{
\tilde B=\frac{U^1}{T_t}B+\frac{U^0_t}{T_t}-\frac{U^0_x}{T_t}A^1
-\frac{U^0}{T_t}\left(A^0+\frac{U^1_t}{U^1}+\frac{X^1_t}{X^1}\right),\quad
}
\\\label{eq:AdmTransGenBurgersKdVEqsSubclassK2d}
\tilde C=\frac{X^1}{T_tU^1}C,
\end{gather}
with $j=2,\dots,\EqOrd$,
and $T=T(t)$, $X^1=X^1(t)$, $X^0=X^0(t)$, $U^1=U^1(t)$ and $U^{00}=U^{00}(t)$ 
are arbitrary smooth functions of~$t$ with $T_tX^1U^1\ne0$.
\end{theorem}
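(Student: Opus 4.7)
Since the subclass $\mathcal K_2$ is contained in $\mathcal K_0$ and Proposition~\ref{pro:EquivGroupoidOfGenBurgersKdVEqsGaugeCx0OrArx0} asserts that $\mathcal K_0$ is normalized in the usual sense, every admissible transformation $(\theta,\tilde\theta,\varphi)$ of $\mathcal K_2$ is induced by an element of the usual equivalence group $G^\sim_{\mathcal K_0}$. Hence its transformational part already has the form $\tilde t=T(t)$, $\tilde x=X^1(t)x+X^0(t)$, $\tilde u=U^1(t)u+U^0(t,x)$ with $T_tX^1U^1\ne0$, and the action on the arbitrary elements is given by the formulas listed there. The plan is to single out, among such transformations, those that additionally preserve the defining constraints of $\mathcal K_2$, namely $A^k_x=0$ for $k=0,\dots,\EqOrd$ and $B_{xx}=0$, in both the source and the target tuple.

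The conditions $\tilde A^j_x=0$ for $j=2,\dots,\EqOrd$ and $\tilde C_x=0$ are inherited automatically because the corresponding $\mathcal K_0$-transformation rules are multiplicative in $t$-dependent factors. The two essential new constraints are $\tilde A^0_x=0$ and $\tilde A^1_x=0$. Differentiating the $\mathcal K_0$-formula
\[
\tilde A^0=\frac{1}{T_t}\left(A^0+\frac{U^1_t}{U^1}+\frac{U^0_x}{U^1}C\right)
\]
with respect to~$x$ and using $A^0_x=C_x=0$ yields $\tilde A^0_x=CU^0_{xx}/(T_tU^1)$, so $C\ne0$ forces $U^0_{xx}=0$ and hence $U^0=U^{01}(t)x+U^{00}(t)$. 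Substituting this form into the $\mathcal K_0$-formula for $\tilde A^1$, differentiating with respect to~$x$ and using $A^1_x=0$ gives
\[
\tilde A^1_x=\frac{X^1}{T_t}\left(\frac{U^{01}C}{U^1}-\frac{X^1_t}{X^1}\right)=0,
\]
which uniquely determines $U^{01}=X^1_tU^1/(X^1C)$. This is the main conceptual step of the argument, and it is precisely what produces the nonlocal dependence on the source arbitrary element~$C$ in~\eqref{eq:AdmTransGenBurgersKdVEqsSubclassK2a}.

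The remaining task is to substitute the derived expression for $U^0$ into the other $\mathcal K_0$-transformation rules and simplify them by means of the identity $U^{01}C/U^1=X^1_t/X^1$. This immediately recovers~\eqref{eq:AdmTransGenBurgersKdVEqsSubclassK2b} and~\eqref{eq:AdmTransGenBurgersKdVEqsSubclassK2d}. For~\eqref{eq:AdmTransGenBurgersKdVEqsSubclassK2c}, only the terms $k=0,1$ of the sum $\sum_k(X^1)^{-k}U^0_k\tilde A^k$ survive (since $U^0_k=0$ for $k\geqslant2$), and the contribution $(U^{01}/X^1)\tilde A^1$ cancels against the corresponding piece of $(U^0_x/T_t)(U^0C/U^1-(X^1_tx+X^0_t)/X^1)$ once one uses the just-derived formula for~$\tilde A^1$; what remains is exactly~\eqref{eq:AdmTransGenBurgersKdVEqsSubclassK2c}. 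The constraint $\tilde B_{xx}=0$ is then automatic since the derived $\tilde B$ is visibly affine in~$x$: $B$ is affine in~$x$ by hypothesis, $\tilde A^0$ depends only on~$t$, and the sole explicit $x$-term is linear. The only step requiring care is this bookkeeping reduction of $\tilde B$; apart from it no further obstructions arise, and any choice of parameter functions $(T,X^1,X^0,U^1,U^{00})$ with $T_tX^1U^1\ne0$ yields a transformation within~$\mathcal K_2$, proving that the triples described by~\eqref{eq:AdmTransGenBurgersKdVEqsSubclassK2a}--\eqref{eq:AdmTransGenBurgersKdVEqsSubclassK2d} exhaust the equivalence groupoid.
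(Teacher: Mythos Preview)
Your argument is correct and follows essentially the same route as the paper: you start from the normalized superclass~$\mathcal K_0$ (Proposition~\ref{pro:EquivGroupoidOfGenBurgersKdVEqsGaugeCx0OrArx0}), impose the extra constraints defining~$\mathcal K_2$ on the target arbitrary elements, and obtain $U^0_{xx}=0$ followed by $U^{01}=X^1_tU^1/(X^1C)$. The paper itself states Theorem~\ref{thm:EquivGroupoidOfGenBurgersKdVEqsSubclassK2} without a formal proof, relying on the surrounding discussion of~$\mathcal K_0$ and the intermediate class~$\mathcal K_1$ (where $U^0_{xx}=0$ is first recorded), so your write-up actually supplies the details the paper leaves implicit.
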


The usual equivalence group~$G^\sim_{\mathcal K_2}$ of the subclass~$\mathcal K_2$ 
is constituted by the transformations~\eqref{eq:AdmTransGenBurgersKdVEqsSubclassK2a}--\eqref{eq:AdmTransGenBurgersKdVEqsSubclassK2d} 
additionally satisfying the constraint~$X^1_t=0$. 
Hence it is clear that the subclass~$\mathcal K_2$ is not normalized in the usual sense. 

The equation~\eqref{eq:AdmTransGenBurgersKdVEqsSubclassK2c} hints that
the proper treatment of the related generalized equivalence group 
within the framework of point transformations needs
considering the derivative~$C_t$ as an additional arbitrary element~$Z^0$
and prolonging the relation~\eqref{eq:AdmTransGenBurgersKdVEqsSubclassK2d} to~$Z^0$ as a derivative of~$C$,  
\begin{gather}\label{eq:AdmTransGenBurgersKdVEqsSubclassK2e}
\tilde Z^0=\frac{X^1}{T_t^2U^1}Z^0+\left(\frac{X^1}{T_tU^1}\right)_t\frac C{T_t}.
\end{gather}
We denote by~$\bar{\mathcal K}_2$ the class~$\mathcal K_2$ 
in which the tuple of arbitrary elements~$\theta$ is formally extended to
$\bar\theta=(A^0,\dots,A^\EqOrd,B,C,Z^0)$ with~$Z^0:=C_t$.

\end{subequations}

\begin{corollary}\label{cor:EffectGenEquivGroupOfGenBurgersKdVEqsSubclassK2}
The class~$\bar{\mathcal K}_2$ is normalized in the generalized sense. 
The group~\smash{$\breve G^\sim_{\bar{\mathcal K}_2}$} 
constituted by the transformations 
of the form~\eqref{eq:AdmTransGenBurgersKdVEqsSubclassK2}
is an effective generalized equivalence group of this class.
\end{corollary}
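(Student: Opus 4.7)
The proof will proceed in direct analogy with the argument given for Theorem~\ref{thm:GenEquivGroupOfGenBurgersKdVEqsGaugeAr1A10}. The strategy is to exhibit the group $\breve G^\sim_{\bar{\mathcal K}_2}$ as the natural point-transformation realization of the equivalence groupoid described in Theorem~\ref{thm:EquivGroupoidOfGenBurgersKdVEqsSubclassK2}, then verify two things: first, that it generates the whole groupoid of $\bar{\mathcal K}_2$ and is minimal (hence effective), and second, that it in fact coincides with the generalized equivalence group $\bar G^\sim_{\bar{\mathcal K}_2}$ in the sense of point transformations in the joint space of $(t,x,u,\bar\theta)$.

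First I would check that $\breve G^\sim_{\bar{\mathcal K}_2}$ is indeed a group. The transformations~\eqref{eq:AdmTransGenBurgersKdVEqsSubclassK2a}--\eqref{eq:AdmTransGenBurgersKdVEqsSubclassK2d} come with a prolongation to $Z^0$ supplied by \eqref{eq:AdmTransGenBurgersKdVEqsSubclassK2e}, which is exactly what one obtains by differentiating the rule for $\tilde C$ with respect to $t$ on the constraint $Z^0=C_t$. Closure under composition and the existence of inverses then follow from the chain rule, after rewriting the $x$-coefficient in $\tilde B$ as $\bar{\mathrm D}_t(X^1_tU^1/(X^1C))$, where $\bar{\mathrm D}_t=\p_t+Z^0\p_C$ is the restricted total derivative acting on functions of $(t,C)$. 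Next, by construction every element of $\breve G^\sim_{\bar{\mathcal K}_2}$ induces, for any fixed $\bar\theta$, an admissible transformation of $\mathcal K_2$, and conversely Theorem~\ref{thm:EquivGroupoidOfGenBurgersKdVEqsSubclassK2} states that every admissible transformation is of this shape. This yields the first half of the claim: $\breve G^\sim_{\bar{\mathcal K}_2}$ generates the entire equivalence groupoid of $\bar{\mathcal K}_2$, and it is minimal among point-transformation subgroups of the space $(t,x,u,\bar\theta)$ with this property, so it is an effective generalized equivalence group.

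The main step is to verify that $\breve G^\sim_{\bar{\mathcal K}_2}$ actually equals $\bar G^\sim_{\bar{\mathcal K}_2}$, which gives normalization in the generalized sense. Any element of $\bar G^\sim_{\bar{\mathcal K}_2}$, upon specialising $\bar\theta$ to an arbitrary admissible value, yields an admissible transformation of $\mathcal K_2$; by Theorem~\ref{thm:EquivGroupoidOfGenBurgersKdVEqsSubclassK2} its components must therefore have the form~\eqref{eq:AdmTransGenBurgersKdVEqsSubclassK2a}--\eqref{eq:AdmTransGenBurgersKdVEqsSubclassK2e}, but now with the parameters $T,X^1,X^0,U^1,U^{00}$ a priori allowed to depend on all of $\bar\theta$ and with all partial $t$-derivatives interpreted as prolonged total derivatives on the space of arbitrary elements. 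Splitting the condition $\mathrm D_x T=\mathrm D_x X^1=\mathrm D_x X^0=\mathrm D_x U^1=\mathrm D_x U^{00}=0$ (where $\mathrm D_x$ is the total derivative prolonged to the arbitrary elements of $\bar{\mathcal K}_2$) against the parametric $x$-derivatives of $A^0,\dots,A^\EqOrd,B,C,Z^0$, which are unconstrained in $\bar{\mathcal K}_2$, forces these five parameters to be functions of $t$ alone. The upshot is $\bar G^\sim_{\bar{\mathcal K}_2}=\breve G^\sim_{\bar{\mathcal K}_2}$, proving normalization in the generalized sense.

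The delicate point I expect to spend the most care on is managing the non-local appearance of $C_t$ in $\tilde B$: one must check that extending $\theta$ only by $Z^0=C_t$ (and not by higher derivatives of $C$ or by derivatives of the other arbitrary elements) is both necessary and sufficient for the transformations to be genuine point transformations in $(t,x,u,\bar\theta)$. Necessity follows by inspecting the $x$-coefficient $(X^1_tU^1/(X^1C))_t$ in \eqref{eq:AdmTransGenBurgersKdVEqsSubclassK2c}; sufficiency follows because no higher $t$- or any $x$-derivatives of $C$, nor any derivatives of $A^k$ or $B$, enter the transformation formulas once the gauge $C_x=0$, $A^k_x=0$, $B_{xx}=0$ is accounted for.
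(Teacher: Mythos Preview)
Your first paragraph is essentially the paper's proof and is correct: showing that the set of transformations~\eqref{eq:AdmTransGenBurgersKdVEqsSubclassK2} is closed under composition, generates the whole equivalence groupoid (by Theorem~\ref{thm:EquivGroupoidOfGenBurgersKdVEqsSubclassK2}), and is minimal with that property already establishes that $\breve G^\sim_{\bar{\mathcal K}_2}$ is an effective generalized equivalence group. Normalization in the generalized sense then follows immediately, since $\breve G^\sim_{\bar{\mathcal K}_2}\subseteq\bar G^\sim_{\bar{\mathcal K}_2}$ and the smaller group already generates the entire groupoid; the larger one does too.

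The ``main step'' you propose, however, is both unnecessary and incorrect. You claim that $\breve G^\sim_{\bar{\mathcal K}_2}=\bar G^\sim_{\bar{\mathcal K}_2}$, but the paper explicitly states the opposite immediately after this corollary and computes the full group in Corollary~\ref{cor:GenEquivGroupOfGenBurgersKdVEqsSubclassK2}: in~$\bar G^\sim_{\bar{\mathcal K}_2}$ the parameters $X^0$, $U^1$ and $U^{00}$ may depend on~$C$ as well as on~$t$. Your splitting argument fails because in the class~$\mathcal K_2$ the arbitrary elements $A^0,\dots,A^\EqOrd,C,Z^0$ are functions of~$t$ only (this is the content of the constraints $A^k_x=0$, $C_x=0$), so their $x$-derivatives are not ``unconstrained'' but identically zero, and the condition $\mathrm D_xT=\mathrm D_xX^0=\cdots=0$ gives no information ruling out dependence on~$C$. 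This is precisely the feature that distinguishes~$\bar{\mathcal K}_2$ from the class~$\bar{\mathcal A}_1$ of Theorem~\ref{thm:GenEquivGroupOfGenBurgersKdVEqsGaugeAr1A10}, where the arbitrary elements genuinely depend on~$x$ and the analogous splitting does work.
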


\begin{proof}
The set of the transformations 
of the form~\eqref{eq:AdmTransGenBurgersKdVEqsSubclassK2}, 
which is temporarily denoted by~$M$, 
is closed with respect to the transformation composition 
and contains the identity transformation. 
Each transformation from~$M$ is invertible by definition. 
So, $M$ is a group.  
The components of transformations from~$M$ are of the same form 
as the components of admissible transformations 
and the formulas relating the initial and target arbitrary elements. 
This is why the group~$M$ generates the equivalence groupoid~$\bar{\mathcal K}_2$ 
and, moreover, it is minimal among subgroups with such property. 
Therefore, $M$ is an effective generalized equivalence group of the class~$\bar{\mathcal K}_2$.
\end{proof}

The entire generalized equivalence group~\smash{$\bar G^\sim_{\bar{\mathcal K}_2}$} 
of the class~$\bar{\mathcal K}_2$
is much wider than its effective part~\smash{$\breve G^\sim_{\bar{\mathcal K}_2}$}. 

\begin{corollary}\label{cor:GenEquivGroupOfGenBurgersKdVEqsSubclassK2}
The generalized equivalence group~\smash{$\bar G^\sim_{\bar{\mathcal K}_2}$} 
of the class~$\bar{\mathcal K}_2$
consists of the transformations 
of the modified form~\eqref{eq:AdmTransGenBurgersKdVEqsSubclassK2}, 
where $T=T(t)$, $X^1=X^1(t)$, $X^0=X^0(t,C)$, $U^1=U^1(t,C)$ and $U^{00}=U^{00}(t,C)$ 
are arbitrary smooth functions of their arguments with $T_tX^1(CU^1_C-U^1)\ne0$,
and the partial derivatives of~$X^0$, $U^1$ and~$U^{00}$ in~$t$ should be replaced by 
the corresponding restricted total derivatives in~$t$ with $\bar{\mathrm D}_t=\p_t+Z^0\p_C$.
\end{corollary}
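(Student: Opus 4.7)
My plan is to mirror the strategy used in the proof of Theorem~\ref{thm:GenEquivGroupOfGenBurgersKdVEqsGaugeAr1A10}: denoting by $G$ the group of transformations characterized in the statement, I would show $\bar G^\sim_{\bar{\mathcal K}_2}=G$ by two inclusions. For $G\subseteq\bar G^\sim_{\bar{\mathcal K}_2}$, I would verify directly that each $\Phi\in G$ is a generalized equivalence transformation. The key observation is that along any fixed equation $C$ is a specific function of $t$ with $C_t=Z^0$, so evaluating any parameter $f(t,C)$ along this function turns $\bar{\mathrm D}_t f=f_t+Z^0 f_C$ into the ordinary derivative $df(t,C(t))/dt$. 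This reduces the modified form of~\eqref{eq:AdmTransGenBurgersKdVEqsSubclassK2} to the admissible transformation described by Theorem~\ref{thm:EquivGroupoidOfGenBurgersKdVEqsSubclassK2} for that specific equation, so $\Phi$ maps the class to itself. The non-degeneracy condition $T_t X^1(CU^1_C-U^1)\ne0$ is precisely what makes the Jacobian of the map $(t,C)\mapsto(\tilde t,\tilde C)=(T,X^1 C/(T_t U^1))$ non-singular.

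For the reverse inclusion, let $\Phi\in\bar G^\sim_{\bar{\mathcal K}_2}$. Substituting any particular value of $\bar\theta$ into $\Phi$ yields an admissible transformation of $\mathcal K_2$, so by Theorem~\ref{thm:EquivGroupoidOfGenBurgersKdVEqsSubclassK2} the $(t,x,u)$-components of $\Phi$ have the form~\eqref{eq:AdmTransGenBurgersKdVEqsSubclassK2a}, with parameters $T,X^1,X^0,U^1,U^{00}$ a priori depending on $(t,\bar\theta)$ but independent of $x,u$. The $\bar\theta$-components of $\Phi$ are then obtained from~\eqref{eq:AdmTransGenBurgersKdVEqsSubclassK2b}--\eqref{eq:AdmTransGenBurgersKdVEqsSubclassK2d} and~\eqref{eq:AdmTransGenBurgersKdVEqsSubclassK2e}, with every partial $t$-derivative of a parameter replaced by the corresponding $\bar{\mathrm D}_t$-application, since the chain rule along an arbitrary specific equation produces exactly this operator.

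The hard part will be restricting which dependences on $\bar\theta$ are admissible. The crucial constraint is that $\Phi$ must be a point transformation of the joint space $(t,x,u,\bar\theta)$, so each component of $\Phi$ must be a function of these variables alone and must not involve derivatives of $\bar\theta$ lying outside the extended tuple, notably $Z^0_t=C_{tt}$ and the $t$-derivatives of $A^0,A^2,\dots,A^\EqOrd,B$. Tracing the auxiliary relation $\tilde Z^0=\bar{\mathrm D}_t\tilde C/\bar{\mathrm D}_t T$, any dependence of $T$ or $X^1$ on $C$ would contribute a term proportional to $Z^0_t$ through a second application of $\bar{\mathrm D}_t$, reflecting the discrepancy between the joint-space $\bar{\mathrm D}_t^2$ and the honest second derivative $d^2/dt^2$ along a specific equation (which for $f(t,C)$ produces an extra $f_C\,Z^0_t$ contribution). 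Requiring this term to vanish forces $T_C=X^1_C=0$. Analogous splittings with respect to the unconstrained derivatives of $A^0,A^2,\dots,A^\EqOrd,B$ and with respect to $Z^0$ exclude any dependence of the five parameters on these remaining arbitrary elements and on $Z^0$. The main obstacle is the careful bookkeeping of which higher jets of $\bar\theta$ enter the formulas through each iterated $\bar{\mathrm D}_t$, and verifying that after all these eliminations the only allowed dependences are exactly $T=T(t)$, $X^1=X^1(t)$, $X^0=X^0(t,C)$, $U^1=U^1(t,C)$, and $U^{00}=U^{00}(t,C)$.
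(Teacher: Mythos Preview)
Your approach matches the paper's: both start from Theorem~\ref{thm:EquivGroupoidOfGenBurgersKdVEqsSubclassK2} to write elements of $\bar G^\sim_{\bar{\mathcal K}_2}$ in the form~\eqref{eq:AdmTransGenBurgersKdVEqsSubclassK2} with parameters a~priori depending on $(t,\bar\theta)$, and then split the resulting transformation components with respect to the unconstrained $t$-derivatives of arbitrary elements to pin down the allowed dependences. Two bookkeeping points need correcting.

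First, before any splitting the parameters may depend on all of $\bar\theta$, so the chain rule along a specific equation produces the \emph{full} total derivative $\mathrm D_t$ (prolonged to $A^k_t$, $B_t$, $Z^0_t$), not the restricted $\bar{\mathrm D}_t=\p_t+Z^0\p_C$; only after the first round of splitting has eliminated dependence on the $A$'s, $B$ and $Z^0$ does $\mathrm D_t$ reduce to $\bar{\mathrm D}_t$ on the remaining parameters. The paper makes this order explicit.

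Second, the $\tilde Z^0$-component contains a \emph{second} total derivative only of~$T$ (through $\mathrm D_t(\mathrm D_t T)$ hidden in the derivative of the $\tilde C$-formula), so it forces $T_C=0$ but does not constrain $X^1_C$: the factor $X^1$ enters $\tilde C$ undifferentiated and hence enters $\tilde Z^0$ only through $\mathrm D_tX^1$. The constraint $X^1_C=0$ comes instead from the $\tilde B$-component~\eqref{eq:AdmTransGenBurgersKdVEqsSubclassK2c}, whose term $\mathrm D_t\bigl((\mathrm D_tX^1)\,U^1/(X^1C)\bigr)$ contains $\mathrm D_t^2X^1$ and therefore a $Z^0_t\,X^1_C$ contribution that must vanish. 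With these two fixes your argument goes through and coincides with the paper's.
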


\begin{proof}
Theorem~\ref{thm:EquivGroupoidOfGenBurgersKdVEqsSubclassK2} implies that 
elements of~\smash{$\bar G^\sim_{\bar{\mathcal K}_2}$} 
are of the modified form~\eqref{eq:AdmTransGenBurgersKdVEqsSubclassK2},
where the group parameters~$T$, $X^1$, $X^0$, $U^1$ and~$U^{00}$ 
may depend on~$t$ and the arbitrary elements~$\bar\theta$. 
Hence partial derivatives of these parameter functions 
should be replaced by the corresponding total derivatives in~$t$ with 
\[
\mathrm D_t=\partial_t+\sum_\alpha u_{\alpha+\delta_1}\partial_{u_\alpha}+\sum_{k=0}^\EqOrd A^k_t\p_{A^k}+B_t\p_B+C_t\p_C+Z^0_t\p_{Z^0}+\cdots\,.
\] 
After substituting $Z^0$ for the derivative~$C_t$, the transformation components can be split 
with respect to the other derivatives of arbitrary elements in~$t$. 
The splitting implies that in fact the group parameters do not depend on 
$A$'s, $B$ and~$Z^0$, and, moreover, the parameters~$T$ and~$X^1$ do not depend on~$C$. 
The nondegeneracy condition for elements of~\smash{$\bar G^\sim_{\bar{\mathcal K}_2}$} 
is modified in comparison with that for elements 
of the effective part~\smash{$\breve G^\sim_{\bar{\mathcal K}_2}$} 
in view of the parameter function~$U^1$ becoming dependent on~$C$. 
This condition takes the form $T_tX^1U^1(C/U^1)_C\ne0$ 
and reduces to the condition given in the statement of the theorem. 
\end{proof}

\begin{remark}\label{rem:UniquenessOfEffectiveGenEquivGroups}
Given a class of differential equations with nontrivial effective generalized equivalence group, 
this group is in general not defined in a unique way. 
Indeed, consider the class~$\bar{\mathcal K}_2$. 
The effective generalized equivalence group~\smash{$\breve G^\sim_{\bar{\mathcal K}_2}$} 
defined in Corollary~\ref{cor:EffectGenEquivGroupOfGenBurgersKdVEqsSubclassK2} 
is not a normal subgroup 
of the entire generalized equivalence group~\smash{$\bar G^\sim_{\bar{\mathcal K}_2}$} 
of the class~$\bar{\mathcal K}_2$. 
Each subgroup of~\smash{$\bar G^\sim_{\bar{\mathcal K}_2}$} 
that is conjugate to~\smash{$\bar G^\sim_{\bar{\mathcal K}_2}$} 
is an effective generalized equivalence group of the class~$\bar{\mathcal K}_2$.
In other words, the class~$\bar{\mathcal K}_2$ possesses 
a wide family of conjugate effective generalized equivalence groups.
The similar fact is even more obvious for the class~$\bar{\mathcal K}_3$ studied below. 
\end{remark}

To have the required subclass~$\mathcal K_3$ of equations from the class~\eqref{eq:GenBurgersKdVEqs}
whose coefficients depend at most on~$t$, 
we now only need to impose a more restrictive constraint on~$B$,
replacing the additional auxiliary equation $B_{xx}=0$ by $B_x=0$, 
which can be implemented by gauging~$B$ within the class~$\mathcal K_2$
using its equivalence transformations. 
Unfortunately, this deteriorates the normalization property 
since then the function~$X^1$ parameterizing elements 
of the equivalence groupoid~$\mathcal G^\sim_{\mathcal K_3}$ of the class~$\mathcal K_3$ 
depends on the initial arbitrary elements~$C$ and~$A^0$ in a nonlocal way via the equation
\begin{equation}\label{eq:AdmTransGenBurgersKdVEqsSubclassK3Constraint}
\left(\frac{X^1_t}{C(X^1)^2}\right)_t=A^0\frac{X^1_t}{C(X^1)^2}.
\end{equation}
At the same time, the usual equivalence group~$G^\sim_{\mathcal K_3}$ of the subclass~$\mathcal K_3$ 
coincides with the group~$G^\sim_{\mathcal K_2}$. 
The computation of the generalized equivalence group of the subclass~$\mathcal K_3$ gives 
the same group, which is a trivial situation from the point of view of generalized equivalence.
As a result, the class~$\mathcal K_3$ is definitely not normalized in both the usual and the generalized senses. 
This is why we construct the extended generalized equivalence group of the subclass~$\mathcal K_3$ 
in a rigorous way. 
In fact, this is the first construction of such kind in the literature. 

We extend the arbitrary-element tuple~$\theta$ to $\bar\theta=(A^0,\dots,A^\EqOrd,B,C,Y^1,Y^2)$
with two more arbitrary elements, $Y^1$ and~$Y^2$, 
which are functions of~$t$ only and satisfy the auxiliary equations 
\begin{equation}\label{eq:GenBurgersKdVEqsSubclassK3AuxiliaryEqsForY}
Y^1_t=A^0,\quad Y^2_t=Ce^{Y^1}.
\end{equation}
Thus, we also implicitly impose the auxiliary equations $Y^i_{u_\alpha}=Y^i_x=0$, $|\alpha|\leqslant \EqOrd$, $i=1,2$.
Each value of~$\bar\theta$ satisfying all auxiliary equations of the class~$\mathcal K_3$
as well as the above equations for~$Y^1$ and~$Y^2$
is associated with an equation of the form~\eqref{eq:GenBurgersKdVEqs} 
with the corresponding value of~$\theta$. 
We formally denote this equation by $\bar{\mathcal L}_{\bar\theta}$ and the class of such equations by~$\bar{\mathcal K}_3$. 
It is obvious that the equations $\bar{\mathcal L}_{\bar\theta^1}$ and $\bar{\mathcal L}_{\bar\theta^2}$ coincide 
if $\theta^1=\theta^2$. 
This defines a gauge equivalence relation on the value set of arbitrary-element tuple~$\bar\theta$.
We show below that this gauge equivalence gives rise to a nontrivial gauge equivalence group of the class~$\bar{\mathcal K}_3$. 
(See Sections~2.1 and~2.5 of~\cite{popo10Ay} for notions related to gauge equivalence, 
which is called trivial equivalence in~\cite{lisl92Ay}.)
Since the set of point transformations from $\bar{\mathcal L}_{\bar\theta^1}$ to $\bar{\mathcal L}_{\bar\theta^2}$ 
coincides with that from $\mathcal L_{\theta^1}$ to $\mathcal L_{\theta^2}$, 
the equivalence groupoid of~$\mathcal K_3$ is isomorphic to 
the equivalence groupoid of~$\bar{\mathcal K}_3$ factorized with respect to the gauge equivalence. 
In the class~$\bar{\mathcal K}_3$, the constraint~\eqref{eq:AdmTransGenBurgersKdVEqsSubclassK3Constraint} 
can be solved with respect to~$X^1$ in terms of~$Y^2$, 
\looseness=-1
\begin{equation}\label{eq:GenBurgersKdVEqsSubclassK3ExpressionForX1}
X^1=\frac1{\varepsilon_1Y^2+\varepsilon_0},
\end{equation}
where $\varepsilon_1$ and~$\varepsilon_0$ are arbitrary constants with $(\varepsilon_1,\varepsilon_0)\ne(0,0)$.
Using this solution and the auxiliary equations~\eqref{eq:GenBurgersKdVEqsSubclassK3AuxiliaryEqsForY}, 
we prolong the relation~\eqref{eq:AdmTransGenBurgersKdVEqsSubclassK2b}--\eqref{eq:AdmTransGenBurgersKdVEqsSubclassK2d} 
between initial and transformed arbitrary elements to~$Y^1$ and~$Y^2$.
Thus, the equality chain
\[
\tilde Y^1_t=\tilde Y^1_{\tilde t}T_t=\tilde A^0T_t=A^0+\frac{U^1_t}{U^1}+\frac{X^1_t}{X^1}=Y^1_t+\frac{U^1_t}{U^1}+\frac{X^1_t}{X^1}
\]  
implies $\tilde Y^1=Y^1+\ln|U^1X^1|+\delta'$ for some constant~$\delta'$.
Considering the equality chain
\[
\tilde Y^2_t=\tilde Y^2_{\tilde t}T_t=\tilde Ce^{\tilde Y^1}T_t=\frac{X^1}{T_tU^1}Ce^{Y^1}U^1X^1\delta T_t
=\frac{\delta Y^2_t}{(\varepsilon_1Y^2+\varepsilon_0)^2},
\]  
where $\delta=e^{\delta'}\sgn(U^1X^1)\ne0$, we derive for some constants $\varepsilon_1'$ and~$\varepsilon_0'$ 
with $\varepsilon_0\varepsilon_1'-\varepsilon_0'\varepsilon_1=\delta$ that 
\begin{equation}\label{eq:GenBurgersKdVEqsSubclassK3Y2Component}
\tilde Y^2=\frac{\varepsilon_1'Y^2+\varepsilon_0'}{\varepsilon_1Y^2+\varepsilon_0},  
\quad\mbox{and hence}\quad 
\tilde Y^1=Y^1+\ln(\delta U^1X^1).
\end{equation}
We use parentheses instead of vertical bars in the logarithm since $\delta U^1X^1>0$. 
This completes the description of the equivalence groupoid~\smash{$\mathcal G^\sim_{\bar{\mathcal K}_3}$}.
Note that here 
\[
U^{01}=\frac{X^1_tU^1}{X^1C}=-\varepsilon_1U^1X^1e^{Y^1}, \quad 
U^{01}_t=U^{01}\left(A^0+\frac{U^1_t}{U^1}-\varepsilon_1CX^1e^{Y^1}\right)=T_tU^{01}\tilde A^0.
\]

\begin{theorem}\label{thm:GenBurgersKdVEqsGenEquivGroupOfSubclassExtK3}
Let $\mathcal K_3$ be the subclass of equations from the class~\eqref{eq:GenBurgersKdVEqs} 
with coefficients depending at most on~$t$, 
which is singled out from the class~\eqref{eq:GenBurgersKdVEqs} 
by the constraints $A^k_x=C_x=B_x=0$, $k=0,\dots,\EqOrd$. 
The class~$\bar{\mathcal K}_3$ of the same equations, 
where the arbitrary-element tuple is formally extended 
with the virtual arbitrary elements~$Y^1$ and~$Y^2$ 
defined by~\eqref{eq:GenBurgersKdVEqsSubclassK3AuxiliaryEqsForY}, 
is normalized in the generalized sense. 
Its generalized equivalence group~\smash{$\bar G^\sim_{\bar{\mathcal K}_3}$} 
consists of the transformations of the~form
\looseness=-1
\begin{gather*}
\tilde t=\bar T(t,Y^1,Y^2),\quad 
\tilde x=\bar X^1x+\bar X^0(t,Y^1,Y^2),\quad 
\bar X^1:=\frac1{\varepsilon_1Y^2+\varepsilon_0},
\\[.5ex]
\tilde u=\bar U^1(t,Y^1,Y^2)\big(u-\varepsilon_1\bar X^1e^{Y^1}x\big)+\bar U^{00}(t,Y^1,Y^2),
\\[.5ex] 
\tilde A^j=\frac{(\bar X^1)^j}{\bar{\mathrm D}_t \bar T}A^j,\quad
\tilde A^1=\frac{\bar X^1}{\bar{\mathrm D}_t \bar T}\left(A^1+\frac{\bar U^{00}}{\bar U^1}C-\frac{\bar{\mathrm D}_t \bar X^0}{\bar X^1}\right),
\\
\tilde A^0=\frac{1}{\bar{\mathrm D}_t \bar T}\left(A^0+\frac{\bar{\mathrm D}_t \bar U^1}{\bar U^1}-\varepsilon_1C\bar X^1e^{Y^1}\right),
\\
\tilde B=\frac{\bar U^1}{\bar{\mathrm D}_t \bar T}B+\frac{\bar{\mathrm D}_t \bar U^{00}}{\bar{\mathrm D}_t \bar T}
+\varepsilon_1\bar U^1\bar X^1e^{Y^1}\frac{A^1}{\bar{\mathrm D}_t \bar T}-\bar U^{00}\tilde A^0,\quad
\tilde C=\frac{\bar X^1}{\bar U^1\bar{\mathrm D}_t \bar T}C,
\\
\tilde Y^1=Y^1+\ln(\delta \bar U^1\bar X^1),\quad
\tilde Y^2=\frac{\varepsilon_1'Y^2+\varepsilon_0'}{\varepsilon_1Y^2+\varepsilon_0},  
\end{gather*}
where $j=2,\dots,\EqOrd$;
$\bar T$, $\bar X^0$, $\bar U^1$ and $\bar U^{00}$ are arbitrary smooth functions of~$t$, $Y^1$ and~$Y^2$ 
with $\bar T_t\bar U^1\ne0$;
$\varepsilon_0$, $\varepsilon_1$, $\varepsilon_0'$ and $\varepsilon_1'$ 
are arbitrary constants with $\delta:=\varepsilon_0\varepsilon_1'-\varepsilon_0'\varepsilon_1\ne0$ 
and, moreover, $\delta \bar U^1\bar X^1>0$;
$\bar{\mathrm D}_t=\p_t+A^0\p_{Y^1}+Ce^{Y^1}\p_{Y^2}$ is the restricted total derivative operator with respect to~$t$.
\end{theorem}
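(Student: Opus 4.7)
The plan is to build the equivalence groupoid of~$\bar{\mathcal K}_3$ by restricting that of the parent subclass~$\mathcal K_2$ given in Theorem~\ref{thm:EquivGroupoidOfGenBurgersKdVEqsSubclassK2}, since~$\mathcal K_3$ is carved out of~$\mathcal K_2$ by the single additional constraint $B_x=0$, which is preserved by any admissible transformation. First I would substitute the transformation formula~\eqref{eq:AdmTransGenBurgersKdVEqsSubclassK2c} into $\tilde B_{\tilde x}=0$ and split with respect to~$x$; this reproduces the nonlocal ODE~\eqref{eq:AdmTransGenBurgersKdVEqsSubclassK3Constraint} for the group parameter~$X^1$ in terms of~$C$ and~$A^0$. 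This ODE cannot be resolved within the original arbitrary-element tuple~$\theta$, and the whole point of enlarging~$\theta$ by~$(Y^1,Y^2)$ is to make the resulting admissible transformations local in the extended tuple~$\bar\theta$.

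For the main technical step I would set $W:=X^1_t/(C(X^1)^2)$; the constraint becomes $W_t=A^0W=Y^1_tW$, which integrates to $W=-\varepsilon_1 e^{Y^1}$ for some constant~$\varepsilon_1$. Using $W=-(1/X^1)_t/C$ together with $Ce^{Y^1}=Y^2_t$, one more integration yields $1/X^1=\varepsilon_1Y^2+\varepsilon_0$, which is exactly~\eqref{eq:GenBurgersKdVEqsSubclassK3ExpressionForX1}. Next I would prolong the equivalence relation to~$Y^1$ and~$Y^2$. The chain rule $\tilde Y^1_t=\tilde A^0\bar{\mathrm D}_t\bar T$ combined with the already known formula for~$\tilde A^0$ integrates to $\tilde Y^1=Y^1+\ln(\delta\bar U^1\bar X^1)$ for a constant~$\delta$ whose sign is fixed by $\sgn(\bar U^1\bar X^1)$. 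Inserting this and $\tilde C$ into $\tilde Y^2_t=\tilde Ce^{\tilde Y^1}\bar{\mathrm D}_t\bar T$ produces $\tilde Y^2_t=\delta Y^2_t/(\varepsilon_1Y^2+\varepsilon_0)^2$, whose antiderivative is the Möbius transformation $\tilde Y^2=(\varepsilon_1'Y^2+\varepsilon_0')/(\varepsilon_1Y^2+\varepsilon_0)$. Matching the two expressions for~$\tilde Y^2_t$ identifies $\delta=\varepsilon_0\varepsilon_1'-\varepsilon_0'\varepsilon_1$ and forces the nondegeneracy condition $\delta\ne0$, together with $\delta\bar U^1\bar X^1>0$ for the logarithm to be real.

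To finish, I would characterize the generalized equivalence group~$\bar G^\sim_{\bar{\mathcal K}_3}$ itself. The transformations assembled above clearly form a subgroup~$G$ of point transformations in the joint space of~$(t,x,u,\bar\theta)$ that generates the equivalence groupoid of~$\bar{\mathcal K}_3$. Conversely, any element of~$\bar G^\sim_{\bar{\mathcal K}_3}$ must, upon every substitution of a value of~$\bar\theta$, coincide with an admissible transformation, so its components are of the same functional shape with parameter functions possibly depending on arbitrary elements. Mimicking the splitting argument from the proof of Corollary~\ref{cor:GenEquivGroupOfGenBurgersKdVEqsSubclassK2} — now with the restricted total derivative $\bar{\mathrm D}_t=\p_t+A^0\p_{Y^1}+Ce^{Y^1}\p_{Y^2}$ dictated by~\eqref{eq:GenBurgersKdVEqsSubclassK3AuxiliaryEqsForY} and splitting with respect to all remaining unconstrained derivatives of $A^j$, $B$ and~$C$ in~$t$ and~$x$ — one concludes that $\bar T,\bar X^0,\bar U^1,\bar U^{00}$ may depend only on~$(t,Y^1,Y^2)$ and that $\varepsilon_0,\varepsilon_1,\varepsilon_0',\varepsilon_1'$ are genuine constants, so that $\bar G^\sim_{\bar{\mathcal K}_3}=G$ and the class is normalized in the generalized sense.

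The main obstacle will be the second step: converting the genuinely nonlocal ODE~\eqref{eq:AdmTransGenBurgersKdVEqsSubclassK3Constraint} into the local Möbius form~\eqref{eq:GenBurgersKdVEqsSubclassK3Y2Component} and verifying that the two-dimensional extension by $(Y^1,Y^2)$ is \emph{minimal} in the sense required by the definition of extended generalized equivalence group, i.e.\ that no smaller set of virtual arbitrary elements suffices to render all components of all admissible transformations local in the extended tuple. Handling the multivaluedness of the logarithm and tracking the sign constraint $\delta\bar U^1\bar X^1>0$ through the whole computation is the subtle bookkeeping that makes this step delicate.
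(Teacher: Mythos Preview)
Your approach is essentially the same as the paper's. Your steps 1--3 (restricting the $\mathcal K_2$ groupoid, integrating the constraint for $X^1$ via $W=X^1_t/(C(X^1)^2)$, and prolonging to $Y^1,Y^2$) reproduce the derivation of the equivalence groupoid that the paper actually places \emph{before} the theorem statement; the paper's proof proper concentrates on your step~4, and is somewhat more explicit there --- in particular, it splits the constraint $\mathrm D_t^{\,2}(1/\bar X^1)=(C_t/C+A^0)\mathrm D_t(1/\bar X^1)$ with respect to the second $t$-derivatives of $A^k,B,C$ and the first derivatives $A^0_t,C_t$ to force $\bar X^1_{A^k}=\bar X^1_B=\bar X^1_C=\bar X^1_{Y^1}=\bar X^1_t=0$ and $(1/\bar X^1)_{Y^2Y^2}=0$, which is slightly sharper than your blanket statement that the parameters ``may depend only on $(t,Y^1,Y^2)$.''

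One clarification: the minimality of the $(Y^1,Y^2)$ extension that you flag as ``the main obstacle'' is not part of what this theorem asserts. Theorem~\ref{thm:GenBurgersKdVEqsGenEquivGroupOfSubclassExtK3} only claims that $\bar{\mathcal K}_3$ is normalized in the generalized sense and describes its generalized equivalence group; the interpretation of this as an \emph{extended} generalized equivalence group of the original class~$\mathcal K_3$, with the attendant minimality requirement, is the content of Corollary~\ref{cor:GenBurgersKdVEqsExtGenEquivGroupOfSubclassK3}. So you can safely drop that concern from the present proof.
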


\begin{proof}
In view of the above description 
of the equivalence groupoid~\smash{$\mathcal G^\sim_{\bar{\mathcal K}_3}$} 
of the class~$\bar{\mathcal K}_3$, 
elements of~\smash{$\bar G^\sim_{\bar{\mathcal K}_3}$} have the general form 
\begin{gather*}
\tilde t=\bar T(t,\bar\theta),\quad
\tilde x=\bar X^1(t,\bar\theta)x+\bar X^0(t,\bar\theta),\quad
\tilde u=\bar U^1(t,\bar\theta)u+\bar U^{01}(t,\bar\theta)x+\bar U^{00}(t,\bar\theta),\\
\tilde{\bar\theta}=\bar\Theta(t,x,u,\bar\theta).
\end{gather*}
The computation of~\smash{$\bar G^\sim_{\bar{\mathcal K}_3}$} by the direct method 
is quite similar to the computation of~\smash{$\mathcal G^\sim_{\bar{\mathcal K}_3}$}
and, after splitting with respect to~$x$ and parametric derivatives of~$u$, 
gives similar expressions for transformation components for the variables $(t,x,u)$ and 
similar constraints for parameter functions. 
The relations between the initial and target arbitrary elements in the equivalence groupoid just convert 
to the transformation components for arbitrary elements in the equivalence group.
But there are several differences, which we are going to discuss. 

In particular, the total derivative operators should be prolonged to the arbitrary elements. 
Since the arbitrary elements of the class~$\bar{\mathcal K}_3$ depend at most on~$t$,
the prolongation is essential only for~$\mathrm D_t$, 
\[
\mathrm D_t=\partial_t+\sum_\alpha u_{\alpha+\delta_1}\partial_{u_\alpha}+\sum_{k=0}^\EqOrd A^k_t\p_{A^k}+B_t\p_B+C_t\p_C+Y^1_t\p_{Y^1}+Y^2_t\p_{Y^2}+\cdots\,.
\] 
The expression for~$\mathrm D_x$ is formally preserved, $\mathrm D_x=\partial_x+\sum_\alpha u_{\alpha+\delta_2}\partial_{u_\alpha}$. 
As a result, all partial derivatives with respect to~$t$ in the expressions derived 
after splitting with respect to~$x$ and parametric derivatives of~$u$ 
are converted to the total derivatives with respect to~$t$. 

The second difference is the possibility of splitting 
with respect to arbitrary elements and their derivatives.  
After substituting for the constrained derivatives~$Y^1_t$ and~$Y^2_t$ 
in view of~\eqref{eq:GenBurgersKdVEqsSubclassK3AuxiliaryEqsForY} 
into the constraint for~$\bar X^1$, 
\[
\mathrm D_t^{\,2}\frac1{\bar X^1}=\left(\frac{C_t}C+A^0\right){\mathrm D}_t\frac1{\bar X^1},
\]
we can split the resulting equation with respect to 
$A^0_{tt}$, \dots, $A^\EqOrd_{tt}$, $B_{tt}$, $C_{tt}$, $A^0_t$ and~$C_t$.  
This leads to the system 
$\bar X^1_{A^0}=\dots=\bar X^1_{A^\EqOrd}=0$, $\bar X^1_B=0$, $\bar X^1_C=0$, $\bar X^1_{Y^1}=0$, $\bar X^1_t=0$ and~$(1/\bar X^1)_{Y^2Y^2}=0$, 
whose general solution is of the form~\eqref{eq:GenBurgersKdVEqsSubclassK3ExpressionForX1}.
The expressions for the transformed arbitrary elements $\tilde A^0$,~\dots, $\tilde A^\EqOrd$, $\tilde B$ and~$\tilde C$ 
can also be split with respect to unconstrained derivatives of arbitrary elements in~$t$, 
implying that the derivatives of $\bar T$, $\bar X^0$, $\bar U^1$ and $\bar U^{00}$ 
with respect to $A^0$, \dots, $A^\EqOrd$, $B$ and~$C$ are zero. 
Hence the operator~$\mathrm D_t$ can be replaced 
by the restricted total derivative operator~$\bar{\mathrm D}_t$.
In particular, the parameter function~$\bar U^{01}$ is defined by 
$\bar U^{01}=(\bar U^1\bar{\mathrm D}_t\bar X^1)/(\bar X^1C)$. 

The additional auxiliary equations~\eqref{eq:GenBurgersKdVEqsSubclassK3AuxiliaryEqsForY} 
are also treated in a different way. 
We substitute the expressions for~$Y^1_t$ and~$Y^2_t$ given by these equations 
into their expanded version for transformed arbitrary elements. 
Splitting the resulting equations with respect to the other derivatives of arbitrary elements
leads to the system of determining equations for the $(Y^1,Y^2)$-components 
of equivalence transformations
\begin{gather*}
\tilde Y^2_t=\tilde Y^2_{Y^1}=\tilde Y^i_{A^k}=\tilde Y^i_B=\tilde Y^i_C=0,\quad i=1,2,\quad k=0,\dots,\EqOrd,\\
\tilde Y^1_t=\frac{U^1_t}{U^1},\quad 
\tilde Y^1_{Y^1}=\frac{U^1_{Y^1}}{U^1}+1,\quad 
\tilde Y^1_{Y^2}=\frac{U^1_{Y^2}}{U^1}-\frac{\varepsilon_1}{\varepsilon_1Y^2+\varepsilon_0},\quad 
\tilde Y^2_{Y^2}=\frac{\bar X^1}{U^1}e^{\tilde Y^1-Y^1}, 
\end{gather*}
whose general solution is of the form presented in the statement of the theorem.
\end{proof}

\begin{remark}
Each element of the generalized equivalence group~\smash{$\bar G^\sim_{\bar{\mathcal K}_3}$} 
generates a family of admissible transformations of the class~$\bar{\mathcal K}_3$ with sources 
at those values of~$\bar\theta$ where the evaluation of~$\bar{\mathrm D}_t \bar T$ does not vanish,
\[
\smash{\bar G^\sim_{\bar{\mathcal K}_3}}\ni\mathcal T \mapsto 
\big\{
(\bar\theta^1,\bar\theta^2,\varphi)\mid
\bar\theta^1\in\bar{\mathcal S}_3,\, 
(\bar{\mathrm D}_t \bar T)|_{\bar\theta=\bar\theta^1}\ne0,\,
\bar\theta^2=\mathcal T\bar\theta^1,\, 
\varphi=(\mathcal T|_{\bar\theta=\bar\theta^1})\big|_{(t,x,u)}
\big\}
\subset\mathcal G^\sim_{\bar{\mathcal K}_3}.
\]
Here $\bar{\mathcal S}_3$ is the value set of the arbitrary-element tuple~$\bar\theta$ of the class~$\bar{\mathcal K}_3$.
\end{remark}

The gauge equivalence group of the class~$\bar{\mathcal K}_3$ 
is the subgroup of~\smash{$\bar G^\sim_{\bar{\mathcal K}_3}$} 
that is singled out by the constraints 
$\varepsilon_0=1$, $\varepsilon_1=0$, $\bar T=t$, $\bar X^0=0$, $\bar U^1=1$, $\bar U^{00}=0$. 
In other words, all the components of gauge equivalence transformations 
are identities, except the components for~$Y^1$ and~$Y^2$, for which we get
$\tilde Y^1=Y^1\!+\ln\varepsilon_1'$, 
$\tilde Y^2=\varepsilon_1'Y^2+\varepsilon_0'$ with $\varepsilon_1'>0$.
The usual equivalence group of the class~$\bar{\mathcal K}_3$ 
is singled out from~\smash{$\bar G^\sim_{\bar{\mathcal K}_3}$} 
by the constraints 
\[
\varepsilon_1=0,\quad
\bar T_{Y^i}=\bar X^0_{Y^i}=\bar U^1_{Y^i}=\bar U^{00}_{Y^i}=0, \quad i=1,2,
\]
and its quotient group with respect the gauge equivalence group of the class~$\bar{\mathcal K}_3$ 
is isomorphic to the usual equivalence group of the class~$\mathcal K_3$. 

It is obvious that the generalized equivalence group~\smash{$\bar G^\sim_{\bar{\mathcal K}_3}$} 
of the class~$\bar{\mathcal K}_3$ generates the whole equivalence groupoid of this class. 
At the same time, functions parameterizing the group depend on two more arguments, $Y^1$ and~$Y^2$, 
than functions parameterizing the groupoid do.
If we omit the arguments~$Y^1$ and~$Y^2$ in the parameter functions, 
the corresponding set of transformations still generates the equivalence groupoid 
but it is not a group with respect to the transformation composition. 
This shows that the class~$\bar{\mathcal K}_3$ may possess an effective generalized equivalence group 
being a proper subgroup of~\smash{$\bar G^\sim_{\bar{\mathcal K}_3}$},  
and its construction needs a more delicate consideration than, e.g., for the class~$\mathcal K_2$.

\begin{corollary}\label{cor:GenBurgersKdVEqsExtGenEquivGroupOfSubclassK3}
The class~$\mathcal K_3$ is normalized in the extended generalized~sense.
Its extended generalized equivalence group~$\hat G^\sim_{\mathcal K_3}$  
can be identified with the effective generalized equivalence group of the class~$\bar{\mathcal K}_3$ 
that consists of the transformations of the form
\begin{gather*}
\tilde t=T(t),\quad 
\tilde x=X^1\big(x+X^{01}(t)Y^2+X^{00}(t)\big),\quad 
X^1:=\frac1{\varepsilon_1Y^2+\varepsilon_0},
\\
\tilde u=V(t)\biggl(\frac u{X^1}-e^{Y^1}(\varepsilon_1x-\varepsilon_0X^{01}+\varepsilon_1X^{00})\biggr),
\\
\tilde A^j=\frac{(X^1)^j}{T_t}A^j,\quad
\tilde A^1=\frac{X^1}{T_t}\left(A^1-X^{01}_tY^2-X^{00}_t\right),\quad
\tilde A^0=\frac{1}{T_t}\left(A^0+\frac{V_t}{V}\right),
\\
\tilde B=\frac{V}{T_t}\left(\frac B{X^1}-e^{Y^1}(\varepsilon_1A^1-\varepsilon_0X^{01}_t+\varepsilon_1X^{00}_t)\right),\quad
\tilde C=\frac{(X^1)^2}{T_tV}C,
\\
\tilde Y^1=Y^1+\ln(\delta V),\quad
\tilde Y^2=\frac{\varepsilon_1'Y^2+\varepsilon_0'}{\varepsilon_1Y^2+\varepsilon_0},  
\end{gather*}
where $j=2,\dots,\EqOrd$;
and $T$, $X^{00}$, $X^{01}$ and~$V$ are arbitrary smooth functions of~$t$ with $T_tV\ne0$;
$\varepsilon_0$, $\varepsilon_1$, $\varepsilon_0'$ and $\varepsilon_1'$ 
are arbitrary constants with $\delta:=\varepsilon_0\varepsilon_1'-\varepsilon_0'\varepsilon_1\ne0$ 
and, moreover, $\delta V>0$.
\end{corollary}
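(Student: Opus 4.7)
The plan is to derive the corollary from Theorem~\ref{thm:GenBurgersKdVEqsGenEquivGroupOfSubclassExtK3} by selecting a minimal subfamily of $\bar G^\sim_{\bar{\mathcal K}_3}$ that still generates the entire equivalence groupoid~$\mathcal G^\sim_{\bar{\mathcal K}_3}$. First, I would verify that the set $G$ described in the corollary is a subgroup of~$\bar G^\sim_{\bar{\mathcal K}_3}$, obtained by specializing its parameter functions to $\bar T = T(t)$, $\bar X^0 = (X^{01}(t)Y^2 + X^{00}(t))/(\varepsilon_1 Y^2 + \varepsilon_0)$, $\bar U^1 = V(t)(\varepsilon_1 Y^2 + \varepsilon_0)$ and $\bar U^{00} = V(t)e^{Y^1}(\varepsilon_0 X^{01} - \varepsilon_1 X^{00})$, with $T, X^{00}, X^{01}, V$ depending only on~$t$. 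Substituting these specific forms into the formulas of the theorem reproduces exactly the formulas of the corollary, so $G$ is a subset of~$\bar G^\sim_{\bar{\mathcal K}_3}$. Closure of $G$ under composition follows by direct computation: the constants $(\varepsilon_0, \varepsilon_1, \varepsilon_0', \varepsilon_1')$ parametrize a Möbius action on~$Y^2$, with composition corresponding to matrix multiplication in~$GL_2(\mathbb R)$, and the $t$-dependent parameters $T, X^{00}, X^{01}, V$ compose in a manner preserving the above specific dependence on $Y^1, Y^2$.

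Next, I would verify that $G$ generates the entire equivalence groupoid~$\mathcal G^\sim_{\bar{\mathcal K}_3}$. For any admissible transformation $(\bar\theta^1, \bar\theta^2, \varphi) \in \mathcal G^\sim_{\bar{\mathcal K}_3}$, the derivation preceding the theorem shows that the point transformation $\varphi$ has components parametrized by functions of~$t$ alone, where $X^1$ is given by~\eqref{eq:GenBurgersKdVEqsSubclassK3ExpressionForX1} and the $Y^2$-component is given by~\eqref{eq:GenBurgersKdVEqsSubclassK3Y2Component}. Matching these formulas to the corollary's expressions determines the values of $T, X^{00}, X^{01}, V$ and the four constants $(\varepsilon_0, \varepsilon_1, \varepsilon_0', \varepsilon_1')$, realizing $\varphi$ as the evaluation at~$\bar\theta^1$ of an element of~$G$.

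Finally, I would establish minimality and identify $G$ with~$\hat G^\sim_{\mathcal K_3}$. Minimality of $G$ within $\bar G^\sim_{\bar{\mathcal K}_3}$ follows because each of the four functional parameters and each of the four Möbius constants is independently realized in some admissible transformation, so removing any of them would strictly shrink the generated subgroupoid; hence $G$ is an effective generalized equivalence group of~$\bar{\mathcal K}_3$. Moreover, the arbitrary-element extension $\theta \mapsto \bar\theta$ through~\eqref{eq:GenBurgersKdVEqsSubclassK3AuxiliaryEqsForY} is itself minimal: dropping $Y^1$ or $Y^2$ would eliminate the factors $e^{Y^1}$ or $1/(\varepsilon_1 Y^2 + \varepsilon_0)$ from the range of transformations in $\mathcal G^\sim_{\mathcal K_3}$, shrinking the induced subgroupoid. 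By the definition of the extended generalized equivalence group, this identifies $G$ with $\hat G^\sim_{\mathcal K_3}$. I expect the main obstacle to be the composition computation in the first step: one must carefully track how the Möbius action on~$Y^2$ interacts with the $t$-dependent functional parameters and the exponential factor in~$\bar U^{00}$ to confirm that the composed transformation still has the specific form required for membership in~$G$.
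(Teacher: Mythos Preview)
Your proposal is correct and follows essentially the same three-step structure as the paper's proof: specialize the parameters of $\bar G^\sim_{\bar{\mathcal K}_3}$ to identify the subset~$G$, verify closure under composition, and show $G$ generates the whole equivalence groupoid and is minimal. The paper makes the minimality argument more concrete by writing down an explicit one-to-one correspondence between~$G$ and the set ${\rm T}_{\bar\theta}$ of admissible transformations with a fixed source~$\bar\theta$, via formulas solving for $X^{00},X^{01},V$ in terms of the groupoid parameters $X^0,U^1,U^{00}$ and the fixed values $Y^1(t),Y^2(t)$; this bijection immediately gives both generation and minimality, whereas your ``removing any parameter shrinks the subgroupoid'' is a weaker heuristic that does not by itself rule out proper subgroups not obtained by dropping a parameter.
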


\begin{proof}
We temporarily denote by~$M$ the set of the transformations of the above form. 
This set is a subset of the group~\smash{$G^\sim_{\bar{\mathcal K}_3}$}. 
It is singled out from~\smash{$G^\sim_{\bar{\mathcal K}_3}$} 
by setting the following values for group parameters: 
\begin{gather*}
\bar T=T(t),\quad
\bar X^0=X^1\big(X^{01}(t)Y^2+X^{00}(t)\big),\quad 
\bar U^1=\frac{V(t)}{X^1},\\ 
\bar U^{01}=-\varepsilon_1V(t)e^{Y^1},\quad
\bar U^{00}=V(t)e^{Y^1}\big(\varepsilon_0X^{01}(t)-\varepsilon_1X^{00}(t)\big).
\end{gather*}

The set~$M$ is closed with respect to the transformation composition, 
i.e., $M$ is a subgroup of the group~\smash{$G^\sim_{\bar{\mathcal K}_3}$}. 

The subgroup~$M$ generates the entire equivalence groupoid~\smash{$\mathcal G^\sim_{\bar{\mathcal K}_3}$} 
of the class~$\bar{\mathcal K}_3$ 
and thus the entire equivalence groupoid of the class~$\mathcal K_3$. 
Indeed, let us fix any equation~$\bar{\mathcal L}_{\bar\theta}$ from the class~$\bar{\mathcal K}_3$. 
The set~${\rm T}_{\bar\theta}$ of all admissible transformations with source at~$\bar\theta$ 
is parameterized by the arbitrary smooth functions~$T$, $X^0$, $U^1$ and $U^{00}$ of~$t$
and the arbitrary constants~$\varepsilon_0$, $\varepsilon_1$, $\varepsilon_0'$ and $\varepsilon_1'$ 
with $T_tU^1\ne0$, $\delta:=\varepsilon_0\varepsilon_1'-\varepsilon_0'\varepsilon_1\ne0$ 
and $\delta U^1X^1>0$, 
where $X^1$ is defined by~\eqref{eq:GenBurgersKdVEqsSubclassK3ExpressionForX1} 
for the fixed value of the arbitrary element~$Y^2$, $Y^2=Y^2(t)$.
Each admissible transformation from~${\rm T}_{\bar\theta}$ 
is generated by the equivalence transformation from~$M$ with the same values 
of~$T$, $\varepsilon_0$, $\varepsilon_1$, $\varepsilon_0'$ and~$\varepsilon_1'$, 
and the values of~$X^{00}$, $X^{01}$ and~$V$ defined by
\[
X^{00}=\varepsilon_0X^0(t)-Y^2(t)\frac{U^{00}(t)}{U^1(t)}e^{Y^1(t)},\quad\!
X^{01}=\varepsilon_1X^0(t)+\frac{U^{00}(t)}{U^1(t)}e^{Y^1(t)},\quad\!
V=\frac{U^1(t)}{\varepsilon_1Y^2(t)+\varepsilon_0}.
\]
This establishes a one-to-one correspondence between~$M$ and~${\rm T}_{\bar\theta}$,
and thus the subgroup~$M$ is minimal among the subgroups of~\smash{$G^\sim_{\bar{\mathcal K}_3}$} 
that generate the groupoid~\smash{$\mathcal G^\sim_{\bar{\mathcal K}_3}$}. 

Therefore, $M$ is the effective generalized equivalence group of the class~$\bar{\mathcal K}_3$.
\end{proof}

Nevertheless, we can further gauge arbitrary elements of the class~$\mathcal K_3$
in such a way that the corresponding subclasses of~$\mathcal K_3$ have better normalization properties.
Up to $G^\sim_{\mathcal K_3}$-equivalence we can set, as above, $C=1$ and $A^1=0$.
Moreover, since the class~$\mathcal K_3$ is parameterized by functions depending at most on~$t$,
$G^\sim_{\mathcal K_3}$-equivalence allows us to further gauge arbitrary elements by setting $A^0=0$ and~$B=0$,
which results in the class~$\mathcal K$ normalized in the usual sense 
and whose usual equivalence group~$G^\sim_{\mathcal K}$ is finite-dimensional,
see Section~\ref{sec:AlternativeClassificationCases}.
At the same time, gauging arbitrary elements by usual equivalence transformations in a class 
that is not normalized in the usual sense does not allow one to easily control 
the changing of the corresponding equivalence groupoid. 
This is why we revise the order of gauging 
in order to construct a subclass normalized in the usual sense for each step of gauging.

\begin{remark}
In general, if a multiple gauge is realized step-by-step,
then the order of steps is also important and
intermediate subclasses may have inferior normalization properties 
than the original class and the final gauged subclass.
\end{remark}

We use the following order of gauging for singling out 
the subclass~$\mathcal K$ from the class~$\mathcal K_1$: 
$C=1$, $A^1=0$, $A^0=0$, $B^1=0$ and~$B^0=0$,
which successively constrains the group(oid) parameters~by
\[
U^1=\frac{X^1}{T_t},\quad
U^0=\frac{X^1_tx+X^0_t}{T_t},\quad
\frac{T_{tt}}{T_t}=2\frac{X^1_t}{X^1},\quad
\left(\frac{X^1_t}{(X^1)^2}\right)_t=0,\quad
\left(\frac{X^0_t}{T_t}\right)_t=0.
\]
Note that, e.g., the ordering of the last two gauges is not essential
but if we try to carry out at least one of them before the gauge $A^0=0$,
then we obtain a subclass normalized in the generalized extended sense;
cf.\ the derivation of Corollary~\ref{cor:GenBurgersKdVEqsExtGenEquivGroupOfSubclassK3}.
Thus, the parameter function~$X^1$ of the equivalence groupoid 
for the subclass gauged with $C=1$, $A^1=0$ and $B^1=0$
is related to the initial value of the arbitrary element~$A^0$ 
via the equation~\eqref{eq:AdmTransGenBurgersKdVEqsSubclassK3Constraint} with~$C=1$.

Finally, the solution of the group classification problem for the class~$\mathcal K_3$
up to the generalized extended $\hat G^\sim_{\mathcal K_3}$-equivalence reduces to
that for the class~$\mathcal K$ up to the usual $G^\sim_{\mathcal K}$-equivalence.
Combining results of Sections~\ref{sec:GroupClassificationGenBurgersKdVEqs}
and~\ref{sec:AlternativeClassificationCases},
we readily obtain the following assertion.

\begin{theorem}\label{thm:GroupClassificationGenBurgersKdVEqsWithTime-DependentCoeffs}
A complete list
of $\hat G^\sim_{\mathcal K_3}$-inequivalent Lie symmetry extensions in the class~$\mathcal K_3$
(resp.\ $G^\sim_{\mathcal K}$-inequivalent Lie symmetry extensions in the class~$\mathcal K$)
is exhausted by Cases~\ref{200},
\smash{$\tilde{\mbox{\ref{201}a}}$}, \smash{$\tilde{\mbox{\ref{201}b}}$}, \smash{$\tilde{\mbox{\ref{201}c}}$} and~\ref{202}
given in Table~\ref{tab:CompleteGroupClassificationBurgersKdVEquations} and
Section~\ref{sec:AlternativeClassificationCases}.
\end{theorem}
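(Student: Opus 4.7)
The strategy is to leverage the reduction arguments established in the preceding discussion, combined with the full classification from Theorem~\ref{thm:GroupClassificationGenBurgersKdVEqs}, and then identify exactly which cases in that classification possess a representative lying in~$\mathcal K$. Specifically, I would first invoke the step-by-step gauging $C=1$, $A^1=0$, $A^0=0$, $B=0$, each step of which is realized through equivalence transformations drawn from~$\hat G^\sim_{\mathcal K_3}$. Since every equation in~$\mathcal K_3$ is thereby $\hat G^\sim_{\mathcal K_3}$-equivalent to an equation in~$\mathcal K$, and since two equations in~$\mathcal K$ are $\hat G^\sim_{\mathcal K_3}$-equivalent precisely when they are $G^\sim_{\mathcal K}$-equivalent, the classification problem for~$\mathcal K_3$ reduces to the classification problem for~$\mathcal K$.

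Next, I would observe that $\mathcal K$ is the subclass of the gauged class~\eqref{eq:GenBurgersKdVEqsGaugedSubclass} singled out by the additional constraints $A^j_x=0$ for $j=2,\dots,\EqOrd$, $A^0=0$ and $B=0$, and that, by the analysis of Section~\ref{sec:AlternativeClassificationCases}, the restriction of $G^\sim$-equivalence to~$\mathcal K$ coincides with $G^\sim_{\mathcal K}$-equivalence. Hence the $G^\sim_{\mathcal K}$-inequivalent Lie symmetry extensions within~$\mathcal K$ are precisely those entries of Table~\ref{tab:CompleteGroupClassificationBurgersKdVEquations} that admit a representative lying in~$\mathcal K$.

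I would then sift through Table~\ref{tab:CompleteGroupClassificationBurgersKdVEquations} case by case. Cases~\ref{001}--\ref{011b} exhibit essential $x$-dependence of at least one of~$A^j$, $A^0$, $B$ (via exponentials, powers, or logarithms of~$x$) that cannot be removed by any $G^\sim$-equivalence transformation, so they are discarded. Cases~\ref{200} and~\ref{202} are already of the form required by~$\mathcal K$ and contribute directly. Case~\ref{201} is more delicate: in the form given in the table it has $A^0=a_0$, $B=bx$, and so is not in~$\mathcal K$; however, applying the explicit $G^\sim$-equivalence transformations recorded at the end of Section~\ref{sec:AlternativeClassificationCases} converts it into the three alternative normal forms $\tilde{\mbox{\ref{201}a}}$, $\tilde{\mbox{\ref{201}b}}$, $\tilde{\mbox{\ref{201}c}}$, each of which does satisfy $A^0=B=0$ and hence lies in~$\mathcal K$.

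The main obstacle is verifying that the three alternative representatives $\tilde{\mbox{\ref{201}a}}$, $\tilde{\mbox{\ref{201}b}}$, $\tilde{\mbox{\ref{201}c}}$ are pairwise $G^\sim_{\mathcal K}$-inequivalent, so that the single $G^\sim$-equivalence class of Case~\ref{201} genuinely splits into three $G^\sim_{\mathcal K}$-equivalence classes. This requires using the explicit finite-dimensional structure of $G^\sim_{\mathcal K}$ computed in Section~\ref{sec:AlternativeClassificationCases}, whose time-component~$T$ is a M\"obius function of~$t$, and then checking that the three prescribed forms of~$\tau^3\in\{1,t,t^2{+}1\}$ in the basis element $D(\tau^3)+S(\zeta^3)$ represent three distinct orbits under the induced action $\tau\mapsto\tau T_t$ restricted to this M\"obius subgroup, with the parameter~$\sigma$ then providing the remaining invariant in each orbit.
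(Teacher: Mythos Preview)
Your proposal is essentially correct and follows the same route as the paper, which simply invokes the gauging chain to reduce to the normalized subclass~$\mathcal K$ and then reads off the relevant cases from Section~\ref{sec:AlternativeClassificationCases}. One refinement: your dismissal of Cases~\ref{001}--\ref{011b} via ``essential $x$-dependence that cannot be removed'' is correct in conclusion but imprecise as stated; the clean argument is that every equation in~$\mathcal K$ admits $P(1)$ and $P(t)$ as Lie symmetries, so the $G^\sim$-invariant integer $k_1$ equals~$2$ throughout~$\mathcal K$, whereas Cases~\ref{000}--\ref{011b} all have $k_1=0$, immediately excluding them. Your identification of the three-way split of Case~\ref{201} under the M\"obius action of $\pi_*G^\sim_{\mathcal K}$ on~$\tau^3$ is exactly what Section~\ref{sec:AlternativeClassificationCases} carries out.
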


Whereas each equation from the class~$\mathcal K_3$ admits the two-dimensional algebra $\langle P(\chi^1),P(\chi^2)\rangle$
with $\chi^1$ and $\chi^2$ constituting a fundamental set of solutions of the linear ordinary differential equation $\chi^i_{tt}=A^0\chi^i_t+B^1\chi^i$,
the kernel Lie invariance algebra of this class is zero
since the functions~$\chi^1$ and $\chi^2$ depend on the arbitrary elements~$A^0$ and~$B^1$.
The situation with the class~$\mathcal K$ is different:
its kernel Lie invariance algebra is $\langle P(1),P(t)\rangle$.

\section{Equations with space-dependent coefficients}\label{sec:GenBurgersKdVEqsWithSpace-DependentCoeffs}

The subclass~$\mathcal F_1$ of equations with space-dependent coefficients 
is singled out from the class~\eqref{eq:GenBurgersKdVEqs}, 
which is normalized in the usual sense, 
by the constraints $A^k_t=0$, $k=0,\dots,\EqOrd$, $B_t=0$ and $C_t=0$.  
This is why the usual equivalence group~$G^\sim_{\mathcal F_1}$ of the subclass~$\mathcal F_1$ 
is a subgroup of the equivalence group~$G^\sim_{\mbox{\tiny\eqref{eq:GenBurgersKdVEqs}}}$ 
of the class~\eqref{eq:GenBurgersKdVEqs} 
that consists of transformations preserving the above constraints. 
In~view of Proposition~\ref{pro:EquivGroupoidOfGenBurgersKdVEqs}, 
the $(t,x,u,A^\EqOrd,C,A^1)$-components of transformations from~$G^\sim_{\mathcal F_1}$ are of the form
\begin{gather*}
\tilde t=T(t),\quad 
\tilde x=X(t,x),\quad 
\tilde u=U^1(t)u+U^0(t,x),\quad
A^\EqOrd=\frac{T_t}{(X_x)^\EqOrd}\tilde A^\EqOrd, \quad
C=\frac{T_tU^1}{X_x}\tilde C,\\
A^1=\frac{T_t}{X_x}\tilde A^1+T_t\sum_{j=2}^\EqOrd\tilde A^j\left(\frac1{X_x}{\rm D}_x\right)^{j-1}\!\!\!\frac1{X_x}
    -\frac{T_tU^0}{X_x}\tilde C+\frac{X_t}{X_x},
\end{gather*}
where $T=T(t)$, $X=X(t,x)$, $U^1=U^1(t)$ and $U^0=U^0(t,x)$
are smooth functions of their arguments such that $T_tX_xU^1\ne0$.
(It is convenient here to conversely express source arbitrary elements via target ones.)
Differentiating the $A^\EqOrd$- and $C$-components with respect to~$t$ 
and taking into account the additional constraints for arbitrary elements of the subclass~$\mathcal F_1$, 
we derive classifying equations for admissible transformations in~$\mathcal F_1$, 
\[
\frac{T_t}{(X_x)^\EqOrd}X_t\tilde A^\EqOrd_{\tilde x}+\left(\frac{T_t}{(X_x)^\EqOrd}\right)_t\tilde A^\EqOrd=0,\quad
\frac{T_tU^1}{X_x}X_t\tilde C_{\tilde x}+\left(\frac{T_tU^1}{X_x}\right)_t\tilde C=0,
\]
which should be split with respect to derivatives of arbitrary elements 
when constructing usual equivalence transformations of the subclass~$\mathcal F_1$.
Hence $X_t=T_{tt}=U^1_t=0$ for these transformations. 
Then treating the $A^1$-component in the same way 
gives one more condition for group parameters, $U^0_t=0$. 
In fact, the obtained conditions for group parameters constitute 
the complete system of such conditions that singles out 
the group~$G^\sim_{\mathcal F_1}$ as a subgroup of the group~$G^\sim_{\mbox{\tiny\eqref{eq:GenBurgersKdVEqs}}}$.

\begin{proposition}\label{pro:EquivGroupOfGenBurgersKdVEqsWithX-dependentCoeffs}
The usual equivalence group~$G^\sim_{\mathcal F_1}$ of 
the class~$\mathcal F_1$ of general Burgers--Korteweg--de Vries equations 
with space-dependent coefficients 
consists of the transformations in the joint space of $(t,x,u,\theta)$
whose $(t,x,u)$-components are of the form
\[
\tilde t=c_1t+c_2,\quad \tilde x=X(x),\quad \tilde u=c_3'u+U^0(x), 
\]
where $c_1$, $c_2$ and~$c_3'$ are arbitrary constants 
and $X=X(x)$ and~$U^0=U^0(x)$ are arbitrary smooth functions of~$x$ such that $c_1X_xc_3'\ne0$.
\end{proposition}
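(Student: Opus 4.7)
The plan is to exploit the fact, already established in Proposition~\ref{pro:EquivGroupoidOfGenBurgersKdVEqs}, that the larger class~\eqref{eq:GenBurgersKdVEqs} is normalized in the usual sense. Since $\mathcal F_1$ is obtained from~\eqref{eq:GenBurgersKdVEqs} merely by attaching the auxiliary constraints $A^k_t=B_t=C_t=0$ to the system for arbitrary elements, the group $G^\sim_{\mathcal F_1}$ coincides with the subgroup of $G^\sim_{\mbox{\tiny\eqref{eq:GenBurgersKdVEqs}}}$ that preserves these constraints. Thus I can start from the general form $\tilde t=T(t)$, $\tilde x=X(t,x)$, $\tilde u=U^1(t)u+U^0(t,x)$ with $T_tX_xU^1\ne0$, and look for the additional restrictions on the parameter functions $T$, $X$, $U^1$, $U^0$.

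First, I would impose that $\tilde\theta$ is $\tilde t$-independent whenever $\theta$ is $t$-independent. Using the inverse form of the arbitrary-element transformations displayed in the excerpt, the $A^\EqOrd$-component reads $A^\EqOrd=(T_t/(X_x)^\EqOrd)\tilde A^\EqOrd$. Differentiating with respect to~$t$ (with $\tilde A^\EqOrd$ depending on~$\tilde x$ only, hence on $(t,x)$ through~$X$), the condition $A^\EqOrd_t=0$ becomes
\[
\frac{T_t}{(X_x)^\EqOrd}X_t\tilde A^\EqOrd_{\tilde x}+\left(\frac{T_t}{(X_x)^\EqOrd}\right)_{\!t}\tilde A^\EqOrd=0.
\]
Splitting with respect to the independent quantities $\tilde A^\EqOrd$ and $\tilde A^\EqOrd_{\tilde x}$, which vary freely as $\theta$ runs through $\mathcal F_1$ (recall $A^\EqOrd\ne0$), yields $X_t=0$ and $(T_t/(X_x)^\EqOrd)_t=0$; combined these force $T_{tt}=0$ and $X=X(x)$. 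Applying the same argument to the $C$-component $C=(T_tU^1/X_x)\tilde C$ gives, after using $X_t=T_{tt}=0$, the constraint $U^1_t=0$.

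Next, I would feed these conclusions into the $A^1$-component
\[
A^1=\frac{T_t}{X_x}\tilde A^1+T_t\sum_{j=2}^\EqOrd\tilde A^j\left(\frac1{X_x}{\rm D}_x\right)^{j-1}\!\!\!\frac1{X_x}-\frac{T_tU^0}{X_x}\tilde C+\frac{X_t}{X_x}.
\]
Since now $X_t=0$, $T_{tt}=0$, $U^1_t=0$, only the single term $-(T_tU^0/X_x)\tilde C$ can contribute a $t$-dependence; differentiating in~$t$ and splitting with respect to the free quantity $\tilde C\ne0$ yields $U^0_t=0$. At this stage one checks that the remaining components for $A^j$ with $2\leqslant j\leqslant \EqOrd-1$, for~$B$ and for~$A^0$ (read off analogously from Theorem~\ref{thm:EquivalenceGroupGenBurgersKdVEqsGaugeC1A10} pulled back through the already-performed gauges) impose no new restrictions, since all their time-dependence was carried by $T_t$, $X_t$, $U^1_t$, $U^0_t$ and coefficients built from them.

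Finally, relabelling the remaining constants $T_t=c_1$, $T(0)=c_2$, $U^1=c_3'$ and setting $X=X(x)$, $U^0=U^0(x)$ recovers the stated form, with the nondegeneracy condition $T_tX_xU^1\ne0$ from Proposition~\ref{pro:EquivGroupoidOfGenBurgersKdVEqs} becoming $c_1X_xc_3'\ne0$. The only subtlety — rather than a true obstacle — is being careful that the splitting with respect to derivatives of the arbitrary elements is legitimate, which it is because within $\mathcal F_1$ the functions $A^\EqOrd(x)$, $C(x)$, $A^1(x)$ are mutually independent and may be chosen freely together with their $x$-derivatives.
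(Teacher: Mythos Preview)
Your proof is correct and follows essentially the same approach as the paper: start from the normalization of the superclass~\eqref{eq:GenBurgersKdVEqs}, differentiate the inverse $A^\EqOrd$-, $C$- and $A^1$-components with respect to~$t$, and split with respect to the arbitrary elements and their derivatives to obtain $X_t=T_{tt}=U^1_t=U^0_t=0$. One small inaccuracy: for the remaining components you should appeal to the transformation formulas for the full class~\eqref{eq:GenBurgersKdVEqs} (or the subclass~$\mathcal K_0$), not to Theorem~\ref{thm:EquivalenceGroupGenBurgersKdVEqsGaugeC1A10}, which is specific to the gauged subclass with $(C,A^1)=(1,0)$; the conclusion is unaffected.
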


The existence of classifying conditions for admissible transformations means 
that the class~$\mathcal F_1$ is definitely not normalized in any sense. 
This is why a proper starting point for computing the equivalence group of a subclass~$\mathcal F'$ of~$\mathcal F_1$
is not the equivalence group of~$\mathcal F_1$ 
but the equivalence group of the superclass~\eqref{eq:GenBurgersKdVEqs} 
or of a normalized subclass of~\eqref{eq:GenBurgersKdVEqs} that contains~$\mathcal F'$. 

Two parameters of the group of the class~$\mathcal F_1$ 
are of the same arbitrariness as the arbitrary elements of this class. 
Therefore, we can set two gauges for the arbitrary elements of~$\mathcal F_1$ 
using\- parameterized families of transformations from~$G^\sim_{\mathcal F_1}$. 
Following the consideration of the superclass~\eqref{eq:GenBurgersKdVEqs} 
in Section~\ref{sec:EquivalenceGroupoidGenBurgersKdVEqs}, 
we successively gauge~$C$ to~1 and~$A^1$ to~0. 

In view of the expression for the $C$-component of transformations from~$G^\sim_{\mathcal F_1}$, 
the gauge $C=1$ can really be implemented. 
We denote the corresponding subclass of~$\mathcal F_1$ by~$\mathcal F_2$.
The usual equivalence group~$G^\sim_{\mathcal F_2}$ of the subclass~$\mathcal F_2$ 
turns out to be the subgroup of~$G^\sim_{\mathcal F_1}$ 
singled out by the constraint $T_tU^1/X_x=1$, i.e., $X_{xx}=0$ and $U^1=X_x/T_t$.
For transformations from~$G^\sim_{\mathcal F_2}$, 
the $C$-component can be neglected and
the expression for the $A^1$-component simplifies to $\tilde A^1=X_xA^1/T_t+U^0$.
It becomes clear that the further gauge $A^1=0$ is also realizable, 
which gives a subclass of~$\mathcal F_2$, and we denote it by~$\mathcal F_3$. 
The usual equivalence group~$G^\sim_{\mathcal F_3}$ of this subclass
turns out to be the subgroup of the group~$G^\sim_{\mathcal F_2}$ 
singled out by the constraint $U^0=0$. 
In other words, the group~$G^\sim_{\mathcal F_3}$ is just four-dimensional 
and consists of the transformations of the simple form 
\begin{gather*}
\tilde t=c_1t+c_2,\quad 
\tilde x=c_3x+c_4,\quad 
\tilde u=\frac{c_3}{c_1}u,\quad 
\tilde A^j=\frac{(c_3)^j}{c_1}A^j,\quad
\tilde A^0=\frac1{c_1}A^0,\quad
\tilde B=\frac{c_3}{(c_1)^2}B,
\end{gather*}
where $j=2,\dots,\EqOrd$, 
and $c_1$, \dots, $c_4$ are arbitrary constants with $c_1c_3\ne0$. 
The $A^1$-component can be neglected.

Since the subclasses~$\mathcal F_2$ and~$\mathcal F_3$ 
are constructed via gauging the arbitrary elements of the class~$\mathcal F_1$ 
by usual equivalence transformations, and $G^\sim_{\mathcal F_1}\supset G^\sim_{\mathcal F_2}\supset G^\sim_{\mathcal F_3}$, 
then any classification problem for the class~$\mathcal F_1$ up to~$G^\sim_{\mathcal F_1}$-equivalence 
(like the group classification problem or the description of the corresponding equivalence groupoid) 
reduces to the respective classification problem for the class~$\mathcal F_3$ up to~$G^\sim_{\mathcal F_3}$-equivalence, 
cf.\ Proposition~\ref{pro:GroupClassificationsOfGenBurgersKdVEqsAndGaugedGenBurgersKdVEqs}.  

We begin with the description of the equivalence groupoid~$\mathcal G^\sim_{\mathcal F_3}$ of the class~$\mathcal F_3$. 
Since the class~$\mathcal F_3$ is not normalized and neither are its superclasses~$\mathcal F_1$ and~$\mathcal F_2$, 
this problem should be considered as the classification of admissible transformations 
up to~$G^\sim_{\mathcal F_3}$-equivalence.
See Section~2.6 and~3.4 in~\cite{popo10Ay}.
Since the class~$\mathcal F_3$ is a subclass of the class~\eqref{eq:GenBurgersKdVEqsGaugedSubclass}, 
the equivalence groupoid~$\mathcal G^\sim_{\mathcal F_3}$ 
is a subgroupoid of the equivalence groupoid of the class~\eqref{eq:GenBurgersKdVEqsGaugedSubclass}. 
Therefore, all the restrictions from Theorem~\ref{thm:EquivalenceGroupGenBurgersKdVEqsGaugeC1A10} 
on the form of admissible transformations are relevant here. 
We solve the relations 
\eqref{eq:PointTransformationBetweenGenBurgersKdVEqsGaugeC1A10b}--\eqref{eq:PointTransformationBetweenGenBurgersKdVEqsGaugeC1A10c}
with respect to the source arbitrary elements, 
and differentiate the result with respect to~$t$. 
This gives the classifying conditions for admissible transformations 
in terms of target arbitrary elements,
\begin{subequations}\label{eq:GenBurgersKdVEqsWithX-dependentCoeffsClassifyingEqsForAdmTrans}
\begin{gather}
(X^1_tx+X^0_t)\tilde A^j_{\tilde x}+\left(\frac{T_{tt}}{T_t}-j\frac{X^1_t}{X^1}\right)\tilde A^j=0,
\label{eq:GenBurgersKdVEqsWithX-dependentCoeffsClassifyingEqsForAdmTransA}\\ 
(X^1_tx+X^0_t)\tilde A^0_{\tilde x}+\frac{T_{tt}}{T_t}\tilde A^0=
\frac1{T_t}\left(2\frac{X^1_t}{X^1}-\frac{T_{tt}}{T_t}\right)_t,\ 
\label{eq:GenBurgersKdVEqsWithX-dependentCoeffsClassifyingEqsForAdmTransB}\\ 
\begin{split}\label{eq:GenBurgersKdVEqsWithX-dependentCoeffsClassifyingEqsForAdmTransC} 
&(X^1_tx+X^0_t)\tilde B_{\tilde x}+\left(2\frac{T_{tt}}{T_t}-\frac{X^1_t}{X^1}\right)\tilde B=
-\frac{T_t}{X^1}\left(X^1_tx+X^0_t\right)^2\tilde A^0_{\tilde x}
\\
&\qquad 
-\frac{X^1}{T_t^2}\left(T_t\frac{X^1_tx+X^0_t}{X^1}\right)_t\tilde A^0
+\frac{X^1}{T_t^2}\left(\frac{T_t}{X^1}\left(\frac{X^1_tx+X^0_t}{T_t}\right)_t\right)_t,
\end{split}
\end{gather}
\end{subequations}
where the initial space variable~$x$ should be substituted, after expanding all derivatives, 
by its expression via~$\tilde x$, $x=(\tilde x-X^0)/X^1$. 
Since admissible transformations with $T_{tt}=X^0_t=X^1_t=0$ are generated 
by the usual equivalence group~$G^\sim_{\mathcal F_3}$, 
the problem on classifying of admissible transformations 
can be stated in the following way: 
\emph{To find $G^\sim_{\mathcal F_3}$-inequivalent values of the arbitrary-element tuple 
$\tilde\kappa=(\tilde A^0,\tilde A^2,\dots,\tilde A^\EqOrd,\tilde B)$ 
for which the system~\eqref{eq:GenBurgersKdVEqsWithX-dependentCoeffsClassifyingEqsForAdmTrans} 
with respect to $(T,X^0,X^1)$ has solutions with $(T_{tt},X^0_t,X^1_t)\ne(0,0,0)$.}

For each fixed~$t$, the system~\eqref{eq:GenBurgersKdVEqsWithX-dependentCoeffsClassifyingEqsForAdmTrans} 
implies a system of ordinary differential equations with respect to the arbitrary-element tuple~$\tilde\kappa$ of the form
\begin{gather}\label{eq:GenBurgersKdVEqsWithX-dependentCoeffsConsequenceOfClassifyingEqsForAdmTrans}
\begin{split}
&(\tilde\nu_1\tilde x+\tilde\nu_2)\tilde A^j_{\tilde x}+(\tilde\nu_3-j\tilde\nu_1)\tilde A^j=0,\quad j=2,\dots,n,\\
&(\tilde\nu_1\tilde x+\tilde\nu_2)\tilde A^0_{\tilde x}+\tilde\nu_3\tilde A^0=\tilde\nu_4,\\
&(\tilde\nu_1\tilde x+\tilde\nu_2)\tilde B_{\tilde x}+(2\tilde\nu_3-\tilde\nu_1)\tilde B=
\tilde\nu_9(\tilde\nu_1\tilde x+\tilde\nu_2)^2\tilde A^0_{\tilde x}-(\tilde\nu_7\tilde x+\tilde\nu_8)\tilde A^0+\tilde\nu_5\tilde x+\tilde\nu_6,
\end{split}
\end{gather}
where $\tilde\nu_1$, \dots, $\tilde\nu_9$ are constants. 
For each value of the arbitrary-element tuple~$\tilde\kappa$,
we denote by~$\mathcal N_{\tilde\kappa}$ the span of the set through which $\tilde\nu=(\tilde\nu_1,\dots,\tilde\nu_9)$ runs 
when the tuple $(T,X^0,X^1)$ takes values possible for elements of~$\mathcal G^\sim_{\mathcal F_3}$ with the target~$\tilde\kappa$
and the variable~$t$ is varied. 
It is obvious that $k_{\tilde\kappa}:=\dim\{(\tilde\nu_1,\tilde\nu_2)\mid\tilde\nu\in\mathcal N_{\tilde\kappa}\}\in\{0,1,2\}$. 
The consistency of system~\eqref{eq:GenBurgersKdVEqsWithX-dependentCoeffsClassifyingEqsForAdmTrans} with the condition $\tilde A^\EqOrd\ne0$ 
implies that $\dim\{(\tilde\nu_1,\dots,\tilde\nu_4)\mid\tilde\nu\in\mathcal N_{\tilde\kappa}\}=k_{\tilde\kappa}$.
Therefore, the equality $k_{\tilde\kappa}=0$ means that $T_{tt}=X^0_t=X^1_t=0$
for all admissible transformations in~$\mathcal F_3$ with target at~$\tilde\kappa$.
The further consideration partitions into five principal cases, depending on the value~$k_{\tilde\kappa}$ 
and certain constraints for components of nonzero~$\tilde\nu\in\mathcal N_{\tilde\kappa}$, 
  I)~$\tilde\nu_1\tilde\nu_3\ne0$,
 II)~$\tilde\nu_1\ne0$, $\tilde\nu_3=0$,
III)~\mbox{$\tilde\nu_1=0$}, $\tilde\nu_2\ne0$, $\tilde\nu_3\ne0$,
 IV)~$\tilde\nu_1=0$, $\tilde\nu_2\ne0$, $\tilde\nu_3=0$
with $k_{\tilde\kappa}=1$ and
  V)~\mbox{$k_{\tilde\kappa}=2$}.
For each of these cases, we integrate the corresponding set of independent copies 
of the system~\eqref{eq:GenBurgersKdVEqsWithX-dependentCoeffsConsequenceOfClassifyingEqsForAdmTrans}
and thus derive the respective form of~$\tilde\kappa$ parameterized by arbitrary constants. 
The relations~\mbox{\eqref{eq:PointTransformationBetweenGenBurgersKdVEqsGaugeC1A10b}--\eqref{eq:PointTransformationBetweenGenBurgersKdVEqsGaugeC1A10c}}
between~$\kappa$ and~$\tilde\kappa$ imply 
that the arbitrary-element tuple~$\kappa$ is of the same form as $\tilde\kappa$, 
but maybe with other values of parameter constants.
In other words, subclasses associated with the above classification cases 
are invariant under the action of the equivalence groupoid~$\mathcal G^\sim_{\mathcal F_3}$, 
which additionally justifies the chosen partition into cases. 
We substitute the derived expressions for~$\kappa$ and~$\tilde\kappa$ 
into the relations~\eqref{eq:PointTransformationBetweenGenBurgersKdVEqsGaugeC1A10b}--\eqref{eq:PointTransformationBetweenGenBurgersKdVEqsGaugeC1A10c}
and, assuming $\tilde x=X^1x+X^0$, split the resulting equations with respect to~$x$.
When integrating the obtained system of determining equations for the parameter functions~$T$, $X^0$ and~$X^1$ of
point transformations mapping the equation~$\mathcal L_\kappa$ to the equation~$\mathcal L_{\tilde\kappa}$,
we check for the existence of solutions with $(T_{tt},X^0_t,X^1_t)\ne(0,0,0)$. 
This implies constraints for constants parameterizing the expressions for~$\kappa$, 
which is in fact identical to the similar constraints for~$\tilde\kappa$. 
Finally, we gauge constant parameters of arbitrary elements with transformations from~$G^\sim_{\mathcal F_3}$, 
where this is possible, 
and show that the lists of $G^\sim$-inequivalent cases 
of admissible-transformation extensions and of Lie symmetry extensions 
in the class~$\mathcal F_3$ coincide.
\looseness=1

Note that a complete list of 
$\mathcal G^\sim_{\mathcal F_3}$-inequivalent Lie symmetry extensions in the class~$\mathcal F_3$ 
(resp.\ $\mathcal G^\sim_{\mathcal F_1}$-inequivalent Lie symmetry extensions in the class~$\mathcal F_1$)
is exhausted by Cases~\ref{001}, \ref{002a}, \ref{002b}, \ref{003}, \ref{011a}, \ref{011b}, \ref{201} and \ref{202} 
of Table~\ref{tab:CompleteGroupClassificationBurgersKdVEquations}. 
This follows from the facts that the integer~$k_3$, 
which is defined in Section~\ref{sec:PropertiesOfAppropriateSubalgebrasGenBurgersKdVEqs}, 
is $\mathcal G^\sim_{\mathcal F_3}$-invariant when considered for equations from the class~$\mathcal F_3$
and is greater than or equal to 1 for these equations. 
Case~\ref{001} of Table~\ref{tab:CompleteGroupClassificationBurgersKdVEquations} 
is the general case for the class~$\mathcal F_3$. 
In other words, the kernel Lie invariance algebra of the class~$\mathcal F_3$ is 
$\mathfrak g^\cap_{\mathcal F_3}=\langle D(1)\rangle$.

Now we list the subclasses of~$\mathcal F_3$ with admissible-transformation extensions. 
For each of the subclasses, we present the corresponding form of arbitrary elements,  
the description of admissible transformations 
and the equivalent case of Table~\ref{tab:CompleteGroupClassificationBurgersKdVEquations}. 
Below $j=2,\dots,\EqOrd$; 
$a$'s and~$b$'s are constants parameterizing the arbitrary-element tuple 
within the subclass under consideration, 
and $c$'s are constants parameterizing admissible transformations.

\medskip 

\noindent
I. $A^j=a_jx^j|x|^{-\nu_3}$, $A^0=a_{00}+a_{01}|x|^{-\nu_3}$, $B=x(b_0+b_1|x|^{-\nu_3}+b_2|x|^{-2\nu_3})$
modulo shifts of~$x$, where $\nu_3a_\EqOrd\ne0$;
$a_{00}=0$ if $\nu_3=2$; $b_0=0$ if $a_{01}=0$; 
$a_{00}a_{01}=(\nu_3-2)b_1$, $(\nu_3-2)^2b_0=(\nu_3-1)a_{00}^2$, and $a_{01}^2b_0=(\nu_3-1)b_1^2$.
Then $X^0=0$ (due to the above shifts of~$x$), $|X^1|^{\nu_3}/T_t=:c_0=\const$, $c_0\ne0$,
$\tilde\nu_3=\nu_3$, $\tilde a_j=c_0a_j$, $\tilde a_{01}=c_0a_{01}$ and $\tilde b_2=c_0^2b_2$. 
$a_{\EqOrd}=1\bmod G^\sim_{\mathcal F_3}$.

\vspace{1ex}\par\noindent
a) If $(\nu_3,a_{01})\ne(2,0)$, then the expression for~$T$ 
is given by the general solution of the equation $T_{tt}/T_t=\lambda-\tilde\lambda T_t$ 
with the constants~$\lambda$ and~$\tilde\lambda$ defined by 
\[
\begin{pmatrix}a_{00}\\ \tilde a_{00}\end{pmatrix}=\dfrac{\nu_3-2}{\nu_3}\begin{pmatrix}\lambda\\ \tilde\lambda\end{pmatrix},
\quad
\begin{pmatrix}b_1\\ \tilde b_1\end{pmatrix}=\nu_3\begin{pmatrix}\lambda a_{01}\\ \tilde\lambda\tilde a_{01}\end{pmatrix}.
\]
(See~\cite{vane07Ay} for a proper representation of this solution.)
Since $a_{00},b_0,b_1=0\bmod\mathcal G^\sim_{\mathcal F_3}$, 
this case reduces to Case~\ref{002b} of Table~\ref{tab:CompleteGroupClassificationBurgersKdVEquations}. 

\vspace{1ex}\par\noindent
b) If $(\nu_3,a_{01})=(2,0)$, then $a_{00}=b_1=0$, and the expression for~$T$ 
is given by the general solution of the equation 
\[
\left(\frac{T_{tt}}{T_t}\right)_t-\frac{\nu_3-1}{\nu_3}\left(\frac{T_{tt}}{T_t}\right)^2=\nu_3\tilde b_0 T_t^2-\nu_3b_0.
\]
Since $b_0=0\bmod\mathcal G^\sim_{\mathcal F_3}$, 
this case reduces to Case~\ref{003} of Table~\ref{tab:CompleteGroupClassificationBurgersKdVEquations}. 

\medskip 

\noindent
II. $A^j=a_jx^j$, $A^0=a_{01}\ln|x|+a_{00}$, $B=x(b_0+b_1\ln|x|+b_2\ln^2|x|)$
modulo shifts of~$x$ with $a_\EqOrd\ne0$, 
$b_2=-\frac14a_{01}^2$ and $b_1=\frac14a_{01}^2-\frac12a_{00}a_{01}$. 
Then $X^0=0$ (due to the above shifts of~$x$), $T=c_1t+c_2$ with $c_1\ne0$, 
$\tilde a_j=c_1^{-1}a_j$, $\tilde a_{01}=c_1^{-1}a_{01}$, 
$\tilde b_0=b_0+\frac14c_1^{-2}(a_{01}-c_1\tilde a_{00}-a_{00})(c_1\tilde a_{00}-a_{00})$, 
and $X^1$ runs through the set of (nonzero) solutions of the equation
\[2\frac{X^1_t}{X^1}=a_{01}\ln|X^1|+c_1\tilde a_{00}-a_{00}.\]
If $a_{01}=0$, then $a_\EqOrd=1\bmod G^\sim_{\mathcal F_3}$ and $a_{00}=1\bmod\mathcal G^\sim_{\mathcal F_3}$, 
giving Case~\ref{011a} of Table~\ref{tab:CompleteGroupClassificationBurgersKdVEquations}. 
Otherwise, $a_{01}=1\bmod G^\sim_{\mathcal F_3}$ and $a_{00}=1\bmod\mathcal G^\sim_{\mathcal F_3}$, 
i.e., we obtain Case~\ref{011b} of Table~\ref{tab:CompleteGroupClassificationBurgersKdVEquations}. 
 
\medskip 

\noindent
III. $A^j=a_je^{-\nu_3 x}$, $A^0=a_{01}e^{-\nu_3 x}+a_{00}$, $B=b_0+b_1e^{-\nu_3 x}+b_2e^{-2\nu_3 x}$, 
where $a_\EqOrd\nu_3\ne0$, $b_0=\nu_3^{-1}a_{00}^2$ and $b_1=\nu_3^{-1}a_{00}a_{01}$.
Then  
$X^1=c_1\ne0$, $X^0=\nu_3^{-1}\ln(c_0T_t)$, $c_0T_t>0$, 
$\tilde\nu_3=c_1^{-1}\nu_3$,
$\tilde a_j=c_0c_1^ja_j$, $\tilde a_{01}=c_0a_{01}$, $\tilde b_2=c_0^2c_1b_2$,
the expression for~$T$ is given by the general solution of the equation $T_{tt}/T_t=\tilde a_{00}T_t-a_{00}$.
We can gauge $(\nu_3,a_\EqOrd)=(-1,1)\bmod G^\sim_{\mathcal F_3}$ and $a_{00}=0\bmod\mathcal G^\sim_{\mathcal F_3}$, 
which leads to Case~\ref{002a} of Table~\ref{tab:CompleteGroupClassificationBurgersKdVEquations}. 
One of the constants~$a_j$'s with $j<\EqOrd$, $a_0$ or~$b_2$, if it is nonzero, can be set to $\pm1$ 
by shifts of~$x$.
 
\medskip 

\noindent
IV. $A^j=a_j$, $A^0=a_0$, $B=b_1x+b_0$ with $a_\EqOrd\ne0$,  
where also $(\EqOrd-2)^2b_1\ne(\EqOrd-1)a_0^2$ or $a_j\ne0$ for some $j<\EqOrd$.
Then $T=c_1t+c_2$ with $c_1\ne0$, $X^1=c_3\ne0$, 
$\tilde a_j=c_1^{-1}c_3^ja_j$, $\tilde a_0=c_1^{-1}a_0$, $\tilde b_1=c_1^{-2}b_1$,
and $X^0$ runs through the solution set of the equation
$X^0_{tt}-a_0X^0_t-b_1X^0=c_1^2\tilde b_0-c_3b_0$.
Gauging $b_0=0\bmod\mathcal G^\sim_{\mathcal F_3}$ and $a_\EqOrd=1\bmod G^\sim_{\mathcal F_3}$, 
we reduce this case to Case~\ref{201} of Table~\ref{tab:CompleteGroupClassificationBurgersKdVEquations}. 
 
\medskip 

\noindent
V. $A^\EqOrd=a_\EqOrd\ne0$, $A^j=0$, $j=2,\dots,\EqOrd-1$, $A^0=a_0$, $B=b_1x+b_0$ 
with $(\EqOrd-2)^2b_1=(\EqOrd-1)a_0^2$. 
We have $(X^1)^\EqOrd/T_t=c_1\ne0$, $\tilde a_\EqOrd=c_1a_\EqOrd$, 
and the parameter function~$T$ runs through the solution set of either the equation 
\[
\frac{\EqOrd-2}\EqOrd\frac{T_{tt}}{T_t}=a_0-\tilde a_0T_t
\quad\mbox{or}\quad
\left(\frac{T_{tt}}{T_t}\right)_t-\frac{\EqOrd-1}{\EqOrd}\left(\frac{T_{tt}}{T_t}\right)^2-\tilde a_0T_{tt}=\EqOrd\tilde b_0 T_t^2-\EqOrd b_0
\]
if $\EqOrd>2$ or $\EqOrd=2$, respectively.  
The expression for~$X^0$ is found from the equation 
\[
\frac1{T_t}\left(\frac{X^0_t}{T_t}\right)_t-\tilde a_0\frac{X^0_t}{T_t}-\tilde b_1X^0=\tilde b_0-b_0\frac{X^1}{T_t{}^2}.
\]
For any allowed value of~$\EqOrd$, we have $a_0,b_0,b_1=0\bmod\mathcal G^\sim_{\mathcal F_3}$ 
and thus get Case~\ref{202} of Table~\ref{tab:CompleteGroupClassificationBurgersKdVEquations}.

\medskip 

There are no additional $\mathcal G^\sim_{\mathcal F_3}$-gauges of constants parameterizing the arbitrary-element tuple~$\kappa$.

By~$\mathcal F_{3.i}$ we denote the subclass of equations from the class~$\mathcal F_3$ 
whose maximal Lie invariance algebras are similar to the algebra presented in Case~$i$ 
of Table~\ref{tab:CompleteGroupClassificationBurgersKdVEquations}, 
$i\in\{\ref{001},\ref{002a},\ref{002b},\ref{003},\ref{011a},\ref{011b},\ref{201},\ref{202}\}$.
In this way, we partition the class~$\mathcal F_1$ into eight subclasses, 
that are associated with cases of Lie symmetry extensions in this class. 
The seven subclasses~$\mathcal F_{3.i}$, $i=\ref{002a},\ref{002b},\ref{003},\ref{011a},\ref{011b},\ref{201},\ref{202}$, 
can be shown to be normalized in the generalized sense, 
and the complement to their union, which is the subclasses~$\mathcal F_{3.\ref{001}}$, 
is normalized in the usual sense with respect to the usual equivalence group~$G^\sim_{\mathcal F_3}$ 
of the entire class~$\mathcal F_3$. 
There are no point transformations between equations 
from the subclasses~$\mathcal F_{3.i}$ and~$\mathcal F_{3.i'}$ if $i\ne i'$. 
In other words, the equivalence groupoid~$\mathcal G^\sim_{\mathcal F_3}$ of the class~$\mathcal F_3$ 
is the disjoint union of the equivalence groupoids of the normalized subclasses~$\mathcal F_{3.i}$'s. 

\begin{remark}\label{rem:OnGenCaseofGenBurgersKdVEqsWithSpace-DependentCoeffs}
The above consideration implies that 
the Lie symmetry extension given in Case~\ref{001} 
of Table~\ref{tab:CompleteGroupClassificationBurgersKdVEquations} 
is maximal if and only if the arbitrary-element tuple 
$\kappa=(A^0,A^2,\dots,A^\EqOrd,B)$ is not of one of the forms listed in Cases~I--V. 
\end{remark}

\section{Lie reductions and exact solutions}\label{sec:LieReductionsGenBurgersKdVEqs}

Since the class~\eqref{eq:GenBurgersKdVEqs}
is mapped onto its subclass~\eqref{eq:GenBurgersKdVEqsGaugedSubclass}
by a family of equivalence transformations and both these classes are normalized,
for studying Lie reductions of equations from the class~\eqref{eq:GenBurgersKdVEqs}
we can apply the technique suggested in~\cite{poch16a}
to the subclass~\eqref{eq:GenBurgersKdVEqsGaugedSubclass}.
Within the framework of the standard approach to Lie reduction,
one considers each case of Lie symmetry extensions separately,
lists subalgebras of the corresponding maximal Lie invariance algebra
that are inequivalent with respect to internal automorphisms of this algebra,
constructs ansatzes with them
and then uses these ansatzes to reduce relevant original equations
to equations with fewer independent variables.
For a normalized class of differential equations, 
one can classify Lie reductions of equations from the class
with respect to its equivalence group of respective kind, 
taking subalgebras in the projection of its equivalence algebra~\cite{poch16a}. 
Restrictions for appropriate subalgebras are relevant here 
except those restrictions related to the property to be maximal for some equations from the~class. 

A complete list of $G^\sim$-inequivalent one-dimensional subalgebras 
of the algebra~$\mathfrak g$ is exhausted by four subalgebras, 
$\langle D(1)\rangle$, $\langle S(1)\rangle$, $\langle S(e^t)\rangle$ 
and $\langle P(1)\rangle$, cf.\ Table~\ref{tab:CompleteGroupClassificationBurgersKdVEquations}. 
The subalgebra $\langle P(1)\rangle$ does not appear 
in Table~\ref{tab:CompleteGroupClassificationBurgersKdVEquations} 
since, in view of Lemma~\ref{lem:DimOfLieInvSubalgebraBurgersKdV1}, 
there is no equation from the class~\eqref{eq:GenBurgersKdVEqsGaugedSubclass} 
for which this algebra is the maximal Lie invariance algebra.
This subalgebra is associated with Case~\ref{200} of Table~\ref{tab:CompleteGroupClassificationBurgersKdVEquations}. 
We construct an ansatz for~$u$ with each of the listed one-dimensional subalgebras, 
select $G^\sim$-inequivalent equations from the subclass~\eqref{eq:GenBurgersKdVEqsGaugedSubclass} 
that are invariant with respect to this subalgebra 
and reduce them to ordinary differential equations.
In the course of selecting invariant equations, we use Table~\ref{tab:CompleteGroupClassificationBurgersKdVEquations}
but neglect the conditions of maximality of invariance algebras, which are collected in Remark~\ref{rem:ConditionsOfMaximality}.
As a result, we obtain the following $G^\sim$-inequivalent reductions to ordinary differential equations:
\begin{gather*}
\begin{array}{ll}
\mbox{Case~\ref{001}}, \ \langle D(1)  \rangle \colon& u=\varphi(\omega),          \quad \omega=x, \\ 
   &\varphi\varphi_\omega=\sum_{j=2}^\EqOrd A^j(\omega)\varphi^{(j)}+A^0(\omega)\varphi+B(\omega); \\[1ex]
\mbox{Case~\ref{010a}},\ \langle S(1)  \rangle \colon& u=\varphi(\omega)x,         \quad \omega=t, \\ 
   & \varphi_\omega+\varphi^2=\noprint{\alpha^0(\omega)\varphi+}\beta(\omega); \\[1ex]
\mbox{Case~\ref{010b}},\ \langle S(e^t)\rangle \colon& u=\varphi(\omega)x+x\ln|x|, \quad \omega=t, \\
   & \varphi_\omega+\varphi^2=\noprint{(\alpha^0(\omega)-1)\varphi+}\sum_{j=2}^\EqOrd(-1)^j(j-2)!\alpha^j(\omega)+\beta(\omega); \\[1ex]
\mbox{Case~\ref{200}}, \ \langle P(1)  \rangle \colon& u=\varphi(\omega),          \quad \omega=t, \\ 
   & \varphi_\omega=0\noprint{\alpha^0(\omega)\varphi+\beta(\omega)}.
\end{array}
\end{gather*}

Now we classify two-dimensional subalgebras of~$\mathfrak g$ 
that are subalgebras of appropriate subalgebras of~$\mathfrak g$ 
and carry out corresponding Lie reductions of equations 
from the class~\eqref{eq:GenBurgersKdVEqsGaugedSubclass} to algebraic equations. 
To obtain a complete list of such subalgebras up to $G^\sim$-equivalence, 
we realize the following algorithm: take two operators of the most general form, 
verify that they satisfy necessary conditions 
and then simplify successively their form until all cases are listed. 
Computations from the proof of Theorem~\ref{thm:GroupClassificationGenBurgersKdVEqs} can be used here.
At first, let two vector fields~$Q^1$ and~$Q^2$ be of the form 
$Q^i=D(\tau^i)+S(\zeta^i)+P(\chi^i)$, $i=1,2$, 
where the parameter functions $\tau^1$ and~$\tau^2$ are linearly independent. 
As has already been shown in the case $(k_1,k_2,k_3)=(0,0,2)$ of the above proof, 
this case is exhausted by the subalgebras 
$\langle D(1),D(t)-P(1)\rangle$ and $\langle D(1),D(t)-S(\nu^{-1})\rangle$, where $\nu\neq0$. 
Suppose that~$\tau^1$ and~$\tau^2$ are linearly dependent but are not simultaneously zero. 
Then, up to linearly re-combining~$Q^1$ and~$Q^2$, we can assume that $\tau^1\ne0$, $\tau^2=0$ 
and, moreover, the first basis vector field can be transformed to $Q^1=D(1)$. 
If $\zeta^2\neq0$ in the new~$Q^2$, the consideration 
is entirely reduced to the case $(k_1,k_2,k_3)=(0,1,1)$ of the proof of Theorem~\ref{thm:GroupClassificationGenBurgersKdVEqs}, 
which gives the subalgebras
$\langle D(1),S(1)\rangle$ and $\langle D(1),S(e^t)\rangle$. 
Otherwise we have two other inequivalent possibilities for~$Q^2$, $Q^2=P(1)$ and $Q^2=P(e^t)$.
The basis elements cannot be of the form $Q^i=S(\zeta^i)+P(\chi^i)$, $i=1,2$, with $(\zeta^1,\zeta^2)\ne(0,0)$
in view of Lemma~\ref{lem:DimOfLieInvSubalgebraBurgersKdV1}.  
And finally, the subalgebra spanned by $Q^i=P(\chi^i)$, $i=1,2$, 
with linearly independent parameter functions $\chi^1$ and~$\chi^2$ 
does not provide a Lie ansatz for~$u$. 
Therefore, $G^\sim$-inequivalent reductions of equations 
from the class~\eqref{eq:GenBurgersKdVEqsGaugedSubclass} 
to algebraic equations are exhausted by the following reductions:
\begin{gather*}
\begin{array}{llll}
\mbox{Case~\ref{002a}},           &\langle D(1),D(t){-}P(1)        \rangle\colon  &u=\varphi e^x,      &\varphi^2=(a_0+\sum_{j=2}^\EqOrd a_j)\varphi+b; \\[1ex]
\mbox{Case~\ref{002b}},           &\langle D(1),D(t){-}S(\frac1\nu)\rangle\colon  &u=\varphi x|x|^\nu, &\\&&\lefteqn{\textstyle(\nu+1)\varphi^2=(a_0+\sum_{j=2}^\EqOrd(\nu+1)^{\underline j}\,a_j)\varphi+b;} \\[1ex]
\mbox{Case~\ref{011a}},           &\langle D(1),S(1)               \rangle\colon  &u=\varphi x,        &\varphi^2=b\noprint{+a_0\varphi}; \\[1ex]
\mbox{Case~\ref{011b}},           &\langle D(1),S(e^t)             \rangle\colon  &u=\varphi x+x\ln|x|,&\varphi^2=\noprint{(a_0-1)\varphi+}\sum_{j=2}^\EqOrd(-1)^j(j{-}2)!a_j+b;\!\!\\[1ex]
\mbox{Case~\ref{201}}_{b=0},      &\langle D(1),P(1)               \rangle\colon  &u=\varphi,          &a_0\varphi\noprint{+b}=0;\\[1ex]
\mbox{Case~\ref{201}}_{a_0+b=1},\!&\langle D(1),P(e^t)             \rangle\colon  &u=\varphi+x,        &b\varphi=0. 
\end{array}
\end{gather*}
Here $(\nu+1)^{\underline j}=(\nu+1)\nu\cdots(\nu-j+3)(\nu-j+2)$ is the falling factorial. 

It is obvious that the Burgers, Korteweg--de Vries, Kuramoto--Sivashinsky and Kawahara equations 
admit solutions affine with respect to~$x$, 
which are of the form $u=(x+c_1)/(t+c_0)$ or $u=c_0$ with arbitrary constants~$c_0$ and~$c_1$.
This is also true for generalized Burgers equations of the form $u_t+uu_x=f(t,x)u_{xx}$, 
which is related to the fact that each of these equations is conditionally invariant 
with respect to the vector field $\p_t+u\p_x$ \cite{poch12a,poch16a} or, equivalently, 
with respect to the generalized vector field $u_{xx}\p_u$. 
The above Lie reductions with respect to the algebras $\langle S(1)\rangle$ and $\langle P(1)\rangle$ 
also solely give solutions affine in~$x$.
This is why we can look for such solutions of equations from a wide subclass of the class~\eqref{eq:GenBurgersKdVEqs}. 
More specifically, each equation~$\mathcal L_\theta$ of the form~\eqref{eq:GenBurgersKdVEqs} 
with $A^0_x=B_{xx}=C_x=0$, i.e., $A^0=\alpha^0(t)$, $B=\beta^1(t)x+\beta^0(t)$ and $C=\gamma(t)$,
is reduced by the ansatz $u=\varphi^1(t)x+\varphi^0(t)$
to a system of two ordinary differential equations for~$\varphi^1$ and~$\varphi^0$,
\begin{gather*}
\varphi^1_t+\gamma\varphi^1\varphi^1=\alpha^0\varphi^1+\beta^1,\\
\varphi^0_t+\gamma\varphi^1\varphi^0=\alpha^0\varphi^0+\beta^0.
\end{gather*}
This means that the equation~$\mathcal L_\theta$ possesses 
the generalized conditional symmetry $u_{xx}\p_u$, 
and the above ansatz just represents the general solution 
of the corresponding invariant surface condition $u_{xx}=0$; 
cf.~\cite{zhda95a} as well as~\cite{kunz11a}. 
If $A^0=B=0$ and $C=1$, then solutions constructed with the ansatz affine in~$x$ 
are of the same form as for the Burgers and Korteweg--de Vries equations.

All the solutions constructed here are quite simple. 
Perhaps the most direct way to construct nontrivial solutions 
for variable-coefficient equations from the class~\eqref{eq:GenBurgersKdVEqs} 
is to generate them by equivalence transformations 
from wide families of solutions known for famous (constant-coefficient) equations 
from the same class, like the Korteweg--de Vries equation. 
This way is often ignored in the literature, 
which is comprehensively discussed in~\cite{popo10By}.

\section{Conclusion}\label{sec:ConclusionGenBurgersKdVEqs}

The class~\eqref{eq:GenBurgersKdVEqs} is much wider than 
the classes of variable-coefficient Korteweg--de Vries and Burgers equations, 
which were considered in~\cite{gaze92a} and~\cite{qu95a}
and are the subclasses of~\eqref{eq:GenBurgersKdVEqs} for~$\EqOrd=3$ and~$\EqOrd=2$ 
with the constraints $A^0=A^1=A^2=B=0$ and $A^0=A^1=B=0$, respectively. 
At the same time, we have carried out 
the complete group classification of the class~\eqref{eq:GenBurgersKdVEqs} 
with less efforts than the ones made in~\cite{gaze92a,qu95a} 
for the group classification of the above particular subclasses, 
and the classification list is much more compact 
for the class~\eqref{eq:GenBurgersKdVEqs}, 
although the equation order~$r$ is not fixed here. 
This became possible due to using several ingredients. 

The first ingredient was the proper choice of the class to be studied. 
In general, the choice should be realized in such a way 
for the selected class to be convenient for group classification, 
with the main criterion for this being the property of normalization. 
In contrast to the classes considered in~\cite{gaze92a} and~\cite{qu95a}, 
the class~\eqref{eq:GenBurgersKdVEqs} is normalized in the usual sense. 

Since there are two arbitrary smooth functions of~$(t,x)$ among the parameters 
of the usual equivalence group of the class~\eqref{eq:GenBurgersKdVEqs} 
and arbitrary elements of this class also depend on~$(t,x)$ only, 
two of the arbitrary elements can be gauged by equivalence transformations. 
It seems at first sight that there are several equally appropriate gauges 
for arbitrary elements of the class~\eqref{eq:GenBurgersKdVEqs} 
but this is not the case. 
It has been shown in Section~\ref{sec:AlternativeGauges} 
that although the gauge $A^\EqOrd=1$ is quite natural 
and gives a subclass normalized in the usual sense, 
the structure of the corresponding usual equivalence group is more complicated 
than for the gauge $C=1$, 
and the further possible gauge $A^1=0$ leads to a subclass 
that is normalized solely in the generalized sense. 
The equivalence groupoid of the subclass 
associated with the other secondary gauge $A^0=0$, 
which can also be realized after gauging $A^\EqOrd=1$, 
has even much more involved normalization properties. 
This is why it is essential to select the gauge of arbitrary elements
that is optimal for group classification of the class~\eqref{eq:GenBurgersKdVEqs}. 
The careful analysis has resulted in the selection of the initial gauge $C=1$ and the secondary gauge $A^1=0$, 
and this selection of, especially, the secondary gauge is quite unforeseen 
without having the complete description of the equivalence groupoid of the class~\eqref{eq:GenBurgersKdVEqs}.
Both the subclasses~\eqref{eq:GenBurgersKdVEqsGaugeSubclassC1} and~\eqref{eq:GenBurgersKdVEqsGaugedSubclass}, 
which are associated with the gauges $C=1$ and $(C,A^1)=(1,0)$, respectively, 
are normalized in the usual sense. 
Due to the normalization, we easily control the changes of equivalence groups in the course of gauging. 
The group classifications of the class~\eqref{eq:GenBurgersKdVEqs} 
and of its subclass~\eqref{eq:GenBurgersKdVEqsGaugedSubclass} are equivalent, 
cf.\ Proposition~\ref{pro:GroupClassificationsOfGenBurgersKdVEqsAndGaugedGenBurgersKdVEqs}. 
Moreover, the usual equivalence group of the maximally gauged subclass~\eqref{eq:GenBurgersKdVEqsGaugedSubclass} 
is parameterized by just three functions of the only argument~$t$. 
This fact jointly with the normalization property makes it convenient 
to solve the group classification problem for the subclass~\eqref{eq:GenBurgersKdVEqsGaugedSubclass} 
using the algebraic method.

The final ingredient for the efficient solution of the group classification problem 
was the accurate study of properties of the maximal Lie invariance algebras of equations  
from the subclass~\eqref{eq:GenBurgersKdVEqsGaugedSubclass}. 
As a result, we have found quite strong restrictions on appropriate subalgebras 
of the equivalence algebra~$\mathfrak g^\sim$ (more precisely, its projection~$\mathfrak g_\spanindex$) 
of the subclass~\eqref{eq:GenBurgersKdVEqsGaugedSubclass}, 
which has simplified the classification of such subalgebras.  
In particular, we have proved in Lemma~\ref{lem:DimOfLieInvAlgebraBurgersKdV}
that the dimension of maximal Lie invariance algebras 
of equations from the subclass~\eqref{eq:GenBurgersKdVEqsGaugedSubclass}
is less than or equal to five. 
We have furthermore introduced three invariant integer parameters 
$k_1$, $k_2$ and $k_3$
characterizing the dimensions of relevant subspaces of such algebras 
and found essential low-dimensional restrictions for the values 
that the parameters~$k$'s can assume.
The conditions for subspaces related to time transformations 
that are presented in Lemma~\ref{lem:OnOperatorsInvolvingTau} 
are highly common for evolution equations and their systems, 
cf.\ \cite{bihl16a,kuru16a}.
In contrast to this, the restriction for values of the pair of parameters $(k_1,k_2)$
that is given in Lemma~\ref{lem:DimOfLieInvSubalgebraBurgersKdV1} 
is specific for the subclass~\eqref{eq:GenBurgersKdVEqsGaugedSubclass} 
but is still important for effectively classifying appropriate subalgebras of~$\mathfrak g_\spanindex$.
The restrictions for~$k$'s
collected in Section~\ref{sec:PropertiesOfAppropriateSubalgebrasGenBurgersKdVEqs} 
are in total not complete, 
i.e., there are values of~$k$'s among selected ones 
that are associated with no maximal Lie invariance algebras of equations from  
the subclass~\eqref{eq:GenBurgersKdVEqsGaugedSubclass}; 
see Remark~\ref{rem:OnCompletenessOfRestrictionsForAppropriateSubalgebras}. 
At the same time, finding the exhaustive set of restrictions for~$k$'s
in fact needs carrying out the complete group classification 
of the subclass~\eqref{eq:GenBurgersKdVEqsGaugedSubclass}. 
In this sense, the present classification is more similar to 
the group classification of linear Schr\"odinger equations in~\cite{kuru16a} 
than the group classification of linear evolution equations in~\cite{bihl16a}, 
where all selected potential values for analogous invariant integer parameters 
are admitted by some appropriate subalgebras. 

Due to the normalization of the class~\eqref{eq:GenBurgersKdVEqs} in the usual sense
(resp.\ the subclass~\eqref{eq:GenBurgersKdVEqsGaugedSubclass}), 
we were able to classify Lie reductions of equations from the class 
with respect to its usual equivalence group 
using the technique proposed in~\cite{poch16a} and modified in~\cite{bihl16a}. 
We have constructed solutions 
that are at most affine in~$x$ to equations from a quite wide subclass of the class~\eqref{eq:GenBurgersKdVEqs}
as well as have discussed generating new solutions from known ones by equivalence transformations. 

In spite of the comprehensive symmetry analysis of the class~\eqref{eq:GenBurgersKdVEqs} and a number of its subclasses 
in the present paper, 
it is pertinent to carry out group classification 
of other subclasses of the class~\eqref{eq:GenBurgersKdVEqs} even for particular small values of~$n$. 
We plan to solve the group classification problem for 
the class of variable-coefficient Burgers equations of the form $u_t+f(t,x)uu_x+g(t,x)u_{xx}=0$ with $fg\ne0$,
which was considered in~\cite{qu95a} (see a discussion in the introduction of the present paper).
Although this class looks simple, its group classification is quite tricky 
and is supposed to involve partition into subclasses jointly with successively arranging and classifying each part separately.  

Although initially the main purpose of the paper had been 
the exhaustive solution of the group classification problem 
for the class~\eqref{eq:GenBurgersKdVEqs} of general variable-coefficient Burgers--KdV equations of arbitrary fixed order, 
the study of the equivalence groupoids of subclasses of this class led to more important results, which are worth recalling. 
For a long time after the first discussion of the notion of generalized equivalence groups in~\cite{mele94Ay,mele96Ay}, 
no examples of nontrivial generalized equivalence groups were known in the literature, 
except classes for which some of arbitrary elements are constants 
and thus some of components of equivalence transformations associated with system variables 
depend on such arbitrary elements; see, e.g., \cite[Section~6.4]{popo10Ay}, \cite[Section~2]{vane07Ay} and~\cite[Section~3]{vane12Ay}. 
Note that in all these papers, 
effective generalized equivalence groups were given instead of the corresponding generalized equivalence groups.
This is why certain doubts started to circulate in the symmetry community 
whether this notion is valuable at all.   
In the present paper we have happened to construct for the first time 
several examples of nontrivial generalized equivalence groups 
such that equivalence-transformation components corresponding to equation variables 
locally depend on nonconstant arbitrary elements of the corresponding classes. 
All related classes are (reparameterized) subclasses of the class~\eqref{eq:GenBurgersKdVEqs}.
The most significant consequence of the construction of these examples is that 
they make evident the necessity of introducing the notion of effective generalized equivalence group. 
Moreover, they also answer, just by their existence, some theoretical questions, 
which leads to properly posing further questions. 
In particular, the entire generalized equivalence group of a class may be effective itself 
and thus it is a unique effective generalized equivalence group of this class, 
see Remark~\ref{rem:GenEquivGroupOfGenBurgersKdVEqsGaugeAr1A10}. 
Nevertheless, there are classes of differential equations admitting 
multiple effective generalized equivalence groups. 
As discussed in Remark~\ref{rem:UniquenessOfEffectiveGenEquivGroups}, 
this claim is exemplified by classes~$\bar{\mathcal K}_2$ and~$\bar{\mathcal K}_3$, 
for which we have constructed effective generalized equivalence groups 
that are proper but not normal subgroups of the corresponding generalized equivalence groups. 
All known examples of generalized equivalence groups that are related to constant arbitrary elements 
have the same property. 
Then the natural question is whether there exists a class of differential equations 
with effective generalized equivalence group 
being a proper normal subgroup of the corresponding generalized equivalence group. 
By the way, Corollary~\ref{cor:GenBurgersKdVEqsExtGenEquivGroupOfSubclassK3} shows 
that even merely singling out an effective generalized equivalence group 
from the already known generalized equivalence group of a class 
may be a nontrivial problem. 

For the class~$\mathcal K_3$ of general Burgers--KdV equations 
with coefficients depending at most on the time variable, 
which is normalized in the extended generalized sense, 
we have explicitly constructed its extended generalized equivalence group in a rigorous way. 
We have reparameterized the class~$\mathcal K_3$ introducing virtual arbitrary elements 
that are nonlocally related to the native arbitrary elements of this class. 
This is the first example of such a construction for partial differential equations in the literature.  
Similar results were earlier obtained only for classes of linear ordinary differential equations
in the preprint version of~\cite{boyk15a}. 
The reparameterization technique developed in the present paper 
gives hope to us that such construction will be realized soon for many classes of differential equations. 

The equivalence groupoid of the class~$\mathcal F_1$ of general Burgers--KdV equations 
with coefficients depending at most on~$x$ has an elegant structure 
but this structure seems uncommon in comparison with various kinds of normalization. 
Two equations from the class~$\mathcal F_1$ are related by a point transformation~$\varphi$ 
that is not the projection of a transformation from the corresponding usual equivalence group
if and only if these equations belong to a case of Lie symmetry extension for this class. 
At the same time, the transformation~$\varphi$ is in general not decomposed into 
the projection of an equivalence transformation and a Lie symmetry transformation, 
and thus the class~$\mathcal F_1$ is even not semi-normalized.  
Hence no wonder that this example is unique.

\section*{Acknowledgments}

The authors are grateful to Vyacheslav Boyko, Christodoulos Sophocleous and Olena Vaneeva 
for useful discussions and interesting comments. 
This research was undertaken, in part, thanks to funding 
from the Canada Research Chairs program and the NSERC Discovery Grant program.
The research of ROP was supported by the Austrian Science Fund (FWF), project P25064, 
and by ``Project for fostering collaboration in science, research and education'' 
funded by the Moravian-Silesian Region, Czech Republic.

\footnotesize\setlength{\itemsep}{0.3ex}\frenchspacing

\end{document}